\def\BibTeX{{\rm B\kern-.05em{\sc i\kern-.025em b}\kern-.08em
    T\kern-.1667em\lower.7ex\hbox{E}\kern-.125emX}}
\newcommand{\R}{\ensuremath{\mathbb{R}}}
\newcommand{\Rlo}{\ensuremath{\mathbb{R}_{\geq 0}}}
\newcommand{\Zo}{\ensuremath{\mathbb{Z}_{\geq 0}}}
\newcommand{\Zp}{\ensuremath{\mathbb{Z}_{> 0}}}
\newcommand{\Z}{\ensuremath{\mathbb{Z}}}
\definecolor{bleucit}{rgb}{0.2,0.4,0.6} 
\definecolor{blue_cv}{rgb}{0.09,0.35,0.78}
\newcommand{\KL}{\ensuremath{\mathcal{KL}}}
\newcommand{\K}{\ensuremath{\mathcal{K}}}
\newcommand{\Kinf}{\ensuremath{\mathcal{K}_{\infty}}}
\newcommand{\sign}[1]{\ensuremath{\text{sign}{(#1)}}}
\newcommand{\norm}[1]{\ensuremath{\left\|{#1}\right\|}}
\newtheorem{defn}{Definition}
\newtheorem{ass}{Assumption}
\newtheorem{prop}{Proposition}
\newtheorem{lem}{Lemma}
\newtheorem{thm}{Theorem}
\newtheorem{cor}{Corollary}
\newtheorem{rem}{Remark}
\definecolor{MyGreen}{RGB}{50,140,80}
\pgfplotsset{compat=newest}
\pgfplotsset{plot coordinates/math parser=false}
\pgfplotsset{every axis plot/.append style={line width=1pt}}
\definecolor{mycolor1}{rgb}{0.00000,0.44700,0.74100}%
\definecolor{mycolor2}{rgb}{0.85000,0.32500,0.09800}%
\definecolor{mycolor3}{rgb}{0.49400,0.18400,0.55600}%
\begin{document}
	\allowdisplaybreaks
\title{
	Emulation-based Neuromorphic Control \\ for the Stabilization of LTI Systems
} 
\author{Elena Petri, Koen J.A. Scheres, Erik Steur, and W.P.M.H. (Maurice) Heemels
\thanks{Elena Petri, Erik Steur and Maurice Heemels are with the Department of Mechanical Engineering,	Eindhoven University of Technology, The Netherlands.
	{\tt\small \{e.petri, e.steur, m.heemels\}@tue.nl}}%
\thanks{Koen Scheres is with the Department of Electrical Engineering (ESAT), STADIUS, KU Leuven, Leuven 3001, Belgium. {\tt\small koen.scheres@kuleuven.be}
    }
}

\maketitle

\begin{abstract}

Brain-inspired neuromorphic technologies can offer important advantages over classical digital clock-based technologies in various domains, including systems and control engineering. Indeed, neuromorphic engineering could provide low-latency, low-energy and adaptive control systems in the form of spiking neural networks (SNNs) exploiting spike-based control and communication. 
However, systematic methods for designing and analyzing neuron-inspired spiking controllers are currently lacking. This paper presents a new systematic approach for stabilizing linear time-invariant (LTI) systems using SNN-based controllers, designed as a network of integrate-and-fire neurons, whose input is the measured output from the plant and generating spiking control signals. 
The new approach consists of a two-step emulation-based design procedure.  In the first step, we establish conditions on the neuron parameters to ensure that the spiking signal generated by a pair of  neurons emulates any continuous-time signal input to the neurons with arbitrary accuracy in terms of a special metric for spiky signals.  In the second step, we propose a novel  stability notion, called integral spiking-input-to-state stability (iSISS) 
building on this special metric. We prove that an asymptotically stable LTI system has this iSISS property. By combining these steps, a certifiable practical stability property of the closed-loop system can be established. Generalizations are discussed and the effectiveness of the approach is illustrated in a numerical case study.
\end{abstract}

\begin{IEEEkeywords}
Neuromorphic control, stability analysis, spiking neural networks, input-to-state stability 
\end{IEEEkeywords}

\section{Introduction}\label{Introduction}

Many potential advantages of asyncronous and event-driven brain-inspired technologies over classical digital clock-based technologies, including energy efficiency, adaptability and robustness \cite{mead1990neuromorphic}, have motivated numerous research activities in various engineering fields in recent years, see e.g., \cite{gallego2022event, krauhausen2021organic, shrestha2022survey, yamazaki2022spiking, singh2018regulation} and the references therein. This trend is also witnessed in the systems and control community, where neuromorphic control is an emerging research topic, aiming at designing controllers whose dynamics is inspired by biological neurons in the brain. In fact, the potential of brain-inspired control was first illustrated more than thirty years ago in experimental studies in \cite{deweerth1990neuron,deweerth1991simple} for the regulation of the speed of an electric motor. 
Recent works have addressed the challenge of designing neuron-inspired controllers, 
 see, e.g., \cite{ribar2021neuromorphic, sepulchre2022spiking,schmetterling2024neuromorphic,  petri2025rhythmic, medvedeva2025formalizing, feketa2023artificial, rosito2025design, agliati2025spiking, jansen2025nuclear} 
 and the references therein. Despite these interesting contributions, a key fundamental open problem from a control theory perspective is the creation of a formal framework for the modeling, design and analysis of  spiking neuromorphic controllers.   

 A natural way to model neuromorphic control  systems is through  spiking neural networks (SNNs). Modeling spiking controllers as neuronal networks, naturally connects this line of research to the literature of neural network-based control, where artificial neural networks (ANNs) have been widely studied, see, e.g., \cite{antsaklis1990neural,bavarian2002introduction, psaltis2002multilayered}. Unlike ANNs, SNNs operate through event-based asynchronous spikes, and therefore more closely mimic biological neuronal dynamics with the potential to provide the promising advantages of brain-inspired technologies. However, the event-based-spiking nature also introduces crucial theoretical challenges for their analysis and design. 
 
 The goal of this work is to address these challenges by contributing towards a mathematically rigorous analysis and design framework for spiking neuromorphic controllers for LTI systems (and beyond). To stress, our objective is  not to  illustrate improvements in terms of energy-efficiency, robustness and/or adaptability compared to existing non-spiking control techniques, although we comment upon this briefly in Remark~\ref{rem:efficiency} below, but we aim to provide a systematic approach to design and analyze  neuromorphic controllers ensuring important  closed-loop stability guarantees. In our previous work \cite{petri2024analysis}, we already proved a practical stability property of a spiking controller, based on the leaky integrate-and-fire neuronal model, interconnected with a  {\em first-order} linear time-invariant (LTI) system. The assumption of first-order plant dynamics was crucial in \cite{petri2024analysis}, as the proof relied on the explicit solution of the controlled system between spikes and on the knowledge of the direction of the spiking control input, making generalization to higher-dimensional systems virtually impossible. In this paper, we therefore follow a completely different approach, which, as we will show, applies naturally to higher-order and multi-input multi-output (MIMO) systems.

The approach followed in this work is based on {\em emulation-based} design of neuromorphic controllers that practically stabilize LTI plants, in which the SNN-based dynamics of the controllers is  inspired by the integrate-and-fire neuronal model \cite{lapicque1907recherches,abbott1999lapicque, izhikevich2010hybrid}, \cite[Chapter 4]{gerstner2002spiking}. 
To not blur the main conceptual ideas by too many technicalities, we first focus on a single-input single-output (SISO) LTI system (of any order), assumed to be stabilizable via static output feedback. Later in the paper we will extend this approach to MIMO systems and other types of controllers. 
The proposed emulation-based design approach consists of two steps. In {\em the first step}, we provide conditions on the neuron's parameters (i.e., the spike's amplitude and the firing threshold) such that the spiking control signal generated by the neuromorphic controller approximates the ``positive part of the input signal'', i.e., the time-function $t\mapsto \max(0,y(t))$ (reminding of the famous RELU functions used in ANNs), in an appropriate sense, where $y$ represents the  continuous-time  input signal to the neuron. This result is closely related to the well-known \emph{universal approximation property}, see, e.g., \cite{iannella2001spiking, maass1997fast}, where static nonlinear mappings can be arbitrarily closely approximated by ANNs on compact domains. However, our result is different in the sense that we demonstrate that the error between the positive part of the input signal and the spiky output signal of the neuron can be made arbitrarily small in an {\em integral sense} by tuning the neuron's parameters. We will demonstrate that any piecewise affine (PWA) continuous mapping can be approximated (``emulated'') by a network of  integrate-and-fire neurons on the signal level in an integral sense. 
This formal signal-based approximation property of integrate-and-fire neurons forms our first main contribution and is crucial to our emulation approach.

{\em The second step} of the emulation-based design approach consists in guaranteeing a new stability notion, 
for which we coin the name \emph{integral Spiking-Input-to-State Stability} (iSISS). This property allows to consider spiking input signals, and, in some cases, it is a stronger property than the existing notion of integral input-to-state stability in \cite{sontag1998comments,angeli2000characterization}. The second contribution of this work is the definition of this new stability concept, as well as the proof that an asymptotically stable LTI system is iSISS with respect to spiking inputs. Interestingly, this result relies on an  integration-by-parts method for irregular signals containing spikes modeled as Dirac pulses.  We believe that the iSISS property plays an instrumental role for guaranteeing stability properties for (nonlinear) systems controlled by spiking signals. This work is the first to define this property and to reveal its important role in this context. 

By combining the iSISS property with the ``spiky signal-based universal approximation property,'' we ensure that the trajectory of the LTI system in closed loop with a spiking controller emulates the continuous-time trajectory we would have obtained if a stabilizing static output feedback controller would have been used, in the sense that we provide a tunable bound on the difference between the two trajectories for all times. As a consequence,
a practical stability property of the LTI system in closed loop with the spiking controller can be established, which forms our third main contribution. 

As previously said, for ease of exposition and to keep focus on the core concepts and reasoning, we first develop our result for SISO LTI systems. We then explain how the neuronal network and the obtained signal-based universal approximation property and stability results are generalizable to multiple-input multiple-output (MIMO) LTI systems. 

Compared to the preliminary conference version of this work \cite{petri2025spiking}, several results are generalized and proofs are now provided. In particular, we present the spiky signal-based universal approximation property for each neuron separately, and not only for a two-neuron controller approximating a linear scalar function, and we now also consider networks of spiky neurons with more than two neurons, emulating continuous PWA functions. 
Compared to \cite{petri2025spiking}, more insights and properties of the iSISS property are given, and the stability result is generalized to MIMO LTI systems. 
An additional property ensured in this paper is that no accumulation of spiking times can happen, and thus the Zeno phenomenon cannot occur.  
Finally, a new numerical example is presented. 

\color{black}

The remainder of the paper is organized as follows. Preliminaries are reported in Section~\ref{Notation} and the system setup and the two-neuron spiking controller for SISO LTI systems are presented in Section~\ref{SystemAndController}. 
The emulation-based design approach is described in Section~\ref{EmulationSection} and the signal-based universal approximation property of spiking neuronal networks is established in Section~\ref{IntegralEmulationErrorBound}. Section~\ref{StrongIntegralISS} presents the notion of iSISS, which is then used in Section~\ref{PracticalStabilitySection} to guarantee a practical stability property of the SISO LTI system in closed loop with the two-neuron spiking controller.  Generalizations to MIMO LTI systems and the approximation of PWA functions are described in Section \ref{Generalizations}, and a numerical example illustrates the results in Section~\ref{Example}. Finally, Section~\ref{Conclusions} concludes the paper and presents possible future work directions. Proofs and a technical lemma are given in the Appendix. 


\color{black}

\section{Preliminaries}\label{Notation}

The notation $\R$ stands for the set of real numbers, $\Rlo:= [0, +\infty)$ and $\R_{> 0}:= (0, +\infty)$.  
We use $\Z$ to denote the set of integers, $\Zo:= \{0,1,2,...\}$ and $\Zp:= \{1,2,...\}$.
Given a function $f:\R_{\geq0} \to \R$, for any $t \geq 0$ we denote by $f( t^+)$ the right limit of $f$ at $t$, i.e., $f(t^+) = \lim_{s\downarrow t}f(s)$, and with $f(t^-)$ the left limit, i.e., $f(t^-) = \lim_{s\uparrow t}f(s)$ . 
For a vector $x \in \R^n$, $|x|$ denotes its Euclidean norm. For a matrix $A \in \R^{n  \times m}, \norm{A}$ stands for its induced 2-norm. For a Lebesgue measurable signal $v: \R_{\geq0} \to \R^{n_v}$ with $n_v \in \Zp$, and $t_1,t_2\in\R_{\geq 0}\cup\{\infty\}$ with $t_1\leq t_2$,  $\norm{v}_{[t_1, t_2]}:= \textnormal{ess.} \sup_{t \in [t_1, t_2]} |v(t)|$. 
A continuous function $\alpha: [0, a) \rightarrow [0, \infty)$ is of class $\K$, if $\alpha(0) = 0$ and $\alpha$ is strictly increasing. Moreover, $\alpha$ is of class $\Kinf$ if, additionally, $a = \infty$ and $\lim_{r \rightarrow \infty} \alpha(r) = \infty$. A continuous function $\beta: [0,\infty) \times [0, \infty) \rightarrow [0, \infty)$ is of class $\KL$ if, for any fixed $s \in \R_{\geq0}$, $\beta(\cdot,s) \in \K$  and, for each fixed $r \in \R_{\geq0}$, $\beta (r,\cdot)$ is non-increasing and satisfies $\lim_{s \rightarrow \infty} \beta(r,s) = 0$.
Given a set $\mathcal{Y} \subseteq \R^{n_y}$  with $n_y \in \Zp$, $\mathcal{L}_{\mathcal{Y}}$ denotes the set of all functions $y$ from $\R_{\geq0}$ to $\mathcal{Y}$ that are Lebesgue measurable and locally essentially bounded. 
By $\delta$, we denote the Dirac delta function (distribution), which is defined as $\delta(t) = 0$ when $t \neq 0$, $\delta(t) = +\infty$ when $t = 0$ and it is such that $\int_{-\infty}^{+\infty}\delta(t) dt = 1$.
We use $\text{sign}$ for the set-valued map $\text{sign: }\R\rightrightarrows\{-1,1\}$ defined as, for any $z\in\R$, $\text{sign}(z)=\{1\}$ if $z>0$, $\text{sign}(0)=\{-1,1\}$ and $\text{sign}(z)=\{-1\}$ if $z<0$.

\section{System and neuromorphic controller}\label{SystemAndController}

\begin{figure}
	\begin{center}
		\tikzstyle{blockB} = [draw, fill=blue!30, rectangle, 
		minimum height=2em, minimum width=3em]  
		\tikzstyle{blockG} = [draw, fill=MyGreen!40, rectangle, 
		minimum height=2em, minimum width=3em]
		\tikzstyle{blockR} = [draw, fill=red!40, rectangle, 
		minimum height=2em, minimum width=3em]
		\tikzstyle{blockO} = [draw,minimum height=1.5em, fill=orange!20, minimum width=4em]
		\tikzstyle{input} = [coordinate]
		\tikzstyle{blockW} = [draw,minimum height=1.5em, fill=white!20, minimum width=2.5em]
		\tikzstyle{input1} = [coordinate]
		\tikzstyle{blockCircle} = [draw, circle]
		\tikzstyle{sum} = [draw, circle, minimum size=.3cm]
		\tikzstyle{blockSensor} = [draw, fill=white!20, draw= blue!80, line width= 0.8mm, minimum height=10em, minimum width=13em]
		
		\begin{tikzpicture}[auto, node distance=2cm,>=latex , scale=0.75,transform shape] 
			
			\node [input, name=stateFeedback] {};
			\node [input, right of= stateFeedback, node distance=1cm] (stateFeedbackRight){};
			\node [input, above of= stateFeedbackRight, node distance=1cm] (stateFeedbackUp){};
			\node [input, below of= stateFeedbackRight, node distance=1cm] (stateFeedbackDown){};
			
			\node [blockW, right of=stateFeedbackUp, node distance=2cm] (Neuron1) { 
				%
				$\begin{array}{c}
					\text{Neuron } 1\\  (\xi_1, \Delta_1, \alpha_1)
				\end{array}$
			};
			\node [blockW, right of=stateFeedbackDown, node distance=2cm] (Neuron2) { 
				$\begin{array}{c}
					\text{Neuron } 2\\  (\xi_2, \Delta_2, \alpha_2)
				\end{array}$
			};
			
			\draw [draw,-] (stateFeedback) -- node [pos=0.5]{$y$} (stateFeedbackRight);
			\draw [draw,-] (stateFeedbackRight) --  (stateFeedbackUp);
			\draw [draw,-] (stateFeedbackRight) --  (stateFeedbackDown);
			\draw [draw,->] (stateFeedbackUp) -- node {} (Neuron1);
			\draw [draw,->] (stateFeedbackDown) -- node {} (Neuron2);
			
			\node [input, right of= Neuron1, node distance=4cm] (Neuron1End){};
			\node [input, right of= Neuron2, node distance=4cm] (Neuron2End){};
			\node [blockW, right of=Neuron2, node distance=3.2cm] (Neuron2Gain) { 
				$-1$
			};
			
			\draw [draw,-] (Neuron1) -- node {} (Neuron1End);
			\draw [draw,->] (Neuron2) -- node {} (Neuron2Gain);
			\draw [draw,-] (Neuron2Gain) -- node {} (Neuron2End);
			
			\node [input, below of= Neuron1End, node distance=0.9cm] (Neuron1Point){};
			\node [input, above of= Neuron2End, node distance=0.9cm] (Neuron2Point){};
			\node [input, below of= Neuron1Point, node distance=0.1cm] (NeuronCircle){};
			\draw [draw,->] (Neuron1End) -- node {} (Neuron1Point);
			\draw [draw,->] (Neuron2End) -- node {} (Neuron2Point);
			
			\draw [fill=white] (NeuronCircle) circle (0.1cm);
			
			\node [input, right of= NeuronCircle, node distance=0.1cm] (NeuronCircleRight){};
			
			\node [blockW, right of=NeuronCircle, node distance=2.5cm] (system) { 
				$\begin{array}{c}
					\text{Plant }
				\end{array}$
			};
			
			\draw [draw,->] (NeuronCircleRight) -- node [pos=0.8]{$u$} (system);
			
			\node [input, right of= system, node distance=2cm] (SystemOutput){};
			\node [input, right of= system, node distance=1.5cm] (SystemFeedback){};
			\node [input, below of= SystemFeedback, node distance=1.8cm] (SystemFeedbackBelow){};
			\node [input, below of= stateFeedback, node distance=1.8cm] (stateFeedbackBelow){};
			
			\draw [draw,->] (system) -- node [pos=0.5]{$y$} (SystemOutput);
			\draw [draw,-] (SystemFeedback) --  (SystemFeedbackBelow);
			\draw [draw,-] (stateFeedback) --  (stateFeedbackBelow);
			\draw [draw,-] (SystemFeedbackBelow) --  (stateFeedbackBelow);

			\node [input, right of= Neuron1, node distance=1.5cm] (Neuron1SpikeBeginDown){};
			\node [input, above of= Neuron1SpikeBeginDown, node distance=0.2cm] (Neuron1SpikeBegin){};
			\node [input, above of= Neuron1SpikeBegin, node distance=0.6cm] (Neuron1Spike1Up){};
			\node [input, right of= Neuron1SpikeBegin, node distance=0.1cm] (Neuron1Spike2Down){};
			\node [input, left of= Neuron1SpikeBegin, node distance=0.2cm] (Neuron1SpikeBeginLeft){};
			\node [input, above of= Neuron1Spike2Down, node distance=0.6cm] (Neuron1Spike2Up){};
			\node [input, right of= Neuron1Spike2Down, node distance=0.4cm] (Neuron1Spike3Down){};
			\node [input, above of= Neuron1Spike3Down, node distance=0.6cm] (Neuron1Spike3Up){};
			\node [input, right of= Neuron1Spike3Down, node distance=0.3cm] (Neuron1Spike4Down){};
			\node [input, above of= Neuron1Spike4Down, node distance=0.6cm] (Neuron1Spike4Up){};
			\node [input, right of= Neuron1Spike3Down, node distance=0.2cm] (Neuron1Spike4Down){};
			\node [input, above of= Neuron1Spike4Down, node distance=0.6cm] (Neuron1Spike4Up){};
			\node [input, right of= Neuron1Spike4Down, node distance=0.2cm] (Neuron1SpikeFinal){};
			
			\draw [draw,-] (Neuron1SpikeBegin) --  (Neuron1SpikeBeginLeft);
			\draw [draw,-] (Neuron1SpikeBegin) --  (Neuron1Spike1Up);
			\draw [draw,-] (Neuron1SpikeBegin) --  (Neuron1Spike2Down);
			\draw [draw,-] (Neuron1Spike2Down) --  (Neuron1Spike2Up);
			\draw [draw,-] (Neuron1Spike2Down) --  (Neuron1Spike3Down);
			\draw [draw,-] (Neuron1Spike3Down) --  (Neuron1Spike3Up);
			\draw [draw,-] (Neuron1Spike3Down) --  (Neuron1Spike4Down);
			\draw [draw,-] (Neuron1Spike4Down) --  (Neuron1Spike4Up);
			\draw [draw,-] (Neuron1Spike4Down) --  (Neuron1SpikeFinal);
			
			\node [input, right of= Neuron2, node distance=1.5cm] (Neuron2SpikeBeginDown){};
			\node [input, above of= Neuron2SpikeBeginDown, node distance=0.2cm] (Neuron2SpikeBegin){};
			\node [input, above of= Neuron2SpikeBegin, node distance=0.45cm] (Neuron2Spike1Up){};
			\node [input, right of= Neuron2SpikeBegin, node distance=0.3cm] (Neuron2Spike2Down){};
			\node [input, left of= Neuron2SpikeBegin, node distance=0.1cm] (Neuron2SpikeBeginLeft){};
			\node [input, above of= Neuron2Spike2Down, node distance=0.45cm] (Neuron2Spike2Up){};
			\node [input, right of= Neuron2Spike2Down, node distance=0.2cm] (Neuron2Spike3Down){};
			\node [input, above of= Neuron2Spike3Down, node distance=0.45cm] (Neuron2Spike3Up){};
			\node [input, right of= Neuron2Spike3Down, node distance=0.2cm] (Neuron2Spike4Down){};
			\node [input, above of= Neuron2Spike4Down, node distance=0.45cm] (Neuron2Spike4Up){};
			\node [input, right of= Neuron2Spike3Down, node distance=0.3cm] (Neuron2Spike4Down){};
			\node [input, above of= Neuron2Spike4Down, node distance=0.45cm] (Neuron2Spike4Up){};
			\node [input, right of= Neuron2Spike4Down, node distance=0.2cm] (Neuron2SpikeFinal){};
			
			\draw [draw,-] (Neuron2SpikeBegin) --  (Neuron2SpikeBeginLeft);
			\draw [draw,-] (Neuron2SpikeBegin) --  (Neuron2Spike1Up);
			\draw [draw,-] (Neuron2SpikeBegin) --  (Neuron2Spike2Down);
			\draw [draw,-] (Neuron2Spike2Down) --  (Neuron2Spike2Up);
			\draw [draw,-] (Neuron2Spike2Down) --  (Neuron2Spike3Down);
			\draw [draw,-] (Neuron2Spike3Down) --  (Neuron2Spike3Up);
			\draw [draw,-] (Neuron2Spike3Down) --  (Neuron2Spike4Down);
			\draw [draw,-] (Neuron2Spike4Down) --  (Neuron2Spike4Up);
			\draw [draw,-] (Neuron2Spike4Down) --  (Neuron2SpikeFinal);

			\node [input, right of= NeuronCircle, node distance=0.35cm] (Neuron1SpikeBeginDown_tot){};
			\node [input, above of= Neuron1SpikeBeginDown_tot, node distance=0.8cm] (Neuron1SpikeBegin_tot){};
			\node [input, above of= Neuron1SpikeBegin_tot, node distance=0.6cm] (Neuron1Spike1Up_tot){};
			\node [input, right of= Neuron1SpikeBegin_tot, node distance=0.1cm] (Neuron1Spike2Down_tot){};
			\node [input, left of= Neuron1SpikeBegin_tot, node distance=0.2cm] (Neuron1SpikeBeginLeft_tot){};
			\node [input, above of= Neuron1Spike2Down_tot, node distance=0.6cm] (Neuron1Spike2Up_tot){};
			\node [input, right of= Neuron1Spike2Down_tot, node distance=0.4cm] (Neuron1Spike3Down_tot){};
			\node [input, above of= Neuron1Spike3Down_tot, node distance=0.6cm] (Neuron1Spike3Up_tot){};
			\node [input, right of= Neuron1Spike3Down_tot, node distance=0.3cm] (Neuron1Spike4Down_tot){};
			\node [input, above of= Neuron1Spike4Down_tot, node distance=0.6cm] (Neuron1Spike4Up_tot){};
			\node [input, right of= Neuron1Spike3Down_tot, node distance=0.2cm] (Neuron1Spike4Down_tot){};
			\node [input, above of= Neuron1Spike4Down_tot, node distance=0.6cm] (Neuron1Spike4Up_tot){};
			\node [input, right of= Neuron1Spike4Down_tot, node distance=0.2cm] (Neuron1SpikeFinal_tot){};
			
			\draw [draw,-] (Neuron1SpikeBegin_tot) --  (Neuron1SpikeBeginLeft_tot);
			\draw [draw,-] (Neuron1SpikeBegin_tot) --  (Neuron1Spike1Up_tot);
			\draw [draw,-] (Neuron1SpikeBegin_tot) --  (Neuron1Spike2Down_tot);
			\draw [draw,-] (Neuron1Spike2Down_tot) --  (Neuron1Spike2Up_tot);
			\draw [draw,-] (Neuron1Spike2Down_tot) --  (Neuron1Spike3Down_tot);
			\draw [draw,-] (Neuron1Spike3Down_tot) --  (Neuron1Spike3Up_tot);
			\draw [draw,-] (Neuron1Spike3Down_tot) --  (Neuron1Spike4Down_tot);
			\draw [draw,-] (Neuron1Spike4Down_tot) --  (Neuron1Spike4Up_tot);
			\draw [draw,-] (Neuron1Spike4Down_tot) --  (Neuron1SpikeFinal_tot);
			
			\node [input, right of= NeuronCircle, node distance=0.3cm] (Neuron2SpikeBeginDown_tot){};
			\node [input, above of= Neuron2SpikeBeginDown_tot, node distance=0.8cm] (Neuron2SpikeBegin_tot){};
			\node [input, below of= Neuron2SpikeBegin_tot, node distance=0.45cm] (Neuron2Spike1Up_tot){};
			\node [input, right of= Neuron2SpikeBegin_tot, node distance=0.3cm] (Neuron2Spike2Down_tot){};
			\node [input, left of= Neuron2SpikeBegin_tot, node distance=0.1cm] (Neuron2SpikeBeginLeft_tot){};
			\node [input, below of= Neuron2Spike2Down_tot, node distance=0.45cm] (Neuron2Spike2Up_tot){};
			\node [input, right of= Neuron2Spike2Down_tot, node distance=0.2cm] (Neuron2Spike3Down_tot){};
			\node [input, below of= Neuron2Spike3Down_tot, node distance=0.45cm] (Neuron2Spike3Up_tot){};
			\node [input, right of= Neuron2Spike3Down_tot, node distance=0.2cm] (Neuron2Spike4Down_tot){};
			\node [input, below of= Neuron2Spike4Down_tot, node distance=0.45cm] (Neuron2Spike4Up_tot){};
			\node [input, right of= Neuron2Spike3Down_tot, node distance=0.3cm] (Neuron2Spike4Down_tot){};
			\node [input, below of= Neuron2Spike4Down_tot, node distance=0.45cm] (Neuron2Spike4Up_tot){};
			\node [input, right of= Neuron2Spike4Down_tot, node distance=0.2cm] (Neuron2SpikeFinal_tot){};
			
			\draw [draw,-] (Neuron2SpikeBegin_tot) --  (Neuron2Spike1Up_tot);
			\draw [draw,-] (Neuron2Spike2Down_tot) --  (Neuron2Spike2Up_tot);
			\draw [draw,-] (Neuron2Spike3Down_tot) --  (Neuron2Spike3Up_tot);
			\draw [draw,-] (Neuron2Spike4Down_tot) --  (Neuron2Spike4Up_tot);
		\end{tikzpicture}
	\end{center}
	\caption{Block diagram representing the system architecture for a SISO setup. 
	}
	\label{Fig:blockDiagram_op2}
\end{figure}
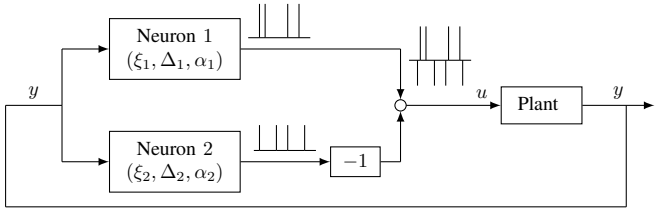

Consider the control of a linear time-invariant (LTI) system, given by 
\begin{equation}
	\begin{aligned}
		&\dot x = Ax + Bu, \qquad
		y = Cx, 
	\end{aligned}
	\label{eq:system}
\end{equation}
where 
$x \in \R^{n_x}$  is the state, with $x(0) = x_0 \in \R^{n_x}$, $y \in \R^{n_y}$ is the output 
and $u \in \R^{n_u}$ is the (spiking) control input and $n_x, n_y, n_u \in \Zp$.  
%
We start from the next assumption.
\begin{ass}
There exists $K \in \R^{n_y \times n_u}$ such that the matrix $A + BKC$ is Hurwitz, with $A \in \R^{n_x\times n_x}$, $B \in \R^{n_x \times n_u}$ and $C \in \R^{n_y\times n_x}$ from \eqref{eq:system}. 
\label{Ass:outputFeedbackControl}
\end{ass}
A sufficient and necessary condition to satisfy Assumption~\ref{Ass:outputFeedbackControl} can be found in, e.g., \cite[Theorem 5]{astolfi2007static}. Our goal is to design a neuromorphic controller generating a spiking control signal $u \in \R^{n_u}$ such that the closed-loop system emulates, in a sense that will be clarified below, the asymptotically stable closed loop, corresponding to $\dot x = (A+BKC)x$, from Assumption~\ref{Ass:outputFeedbackControl}. 
For ease of exposition we first consider single-input single-output (SISO) LTI systems, namely \eqref{eq:system} with $n_u = n_y =1$. In Section \ref{ExtensionsMIMO} we will generalize the results to multiple-input multiple-output (MIMO) LTI systems.


We consider the control configuration shown in Fig.~\ref{Fig:blockDiagram_op2}, where the spiking input $u$ is generated by a neuromorphic controller consisting of two neurons, whose dynamics is inspired by the integrate-and-fire neuron model \cite{lapicque1907recherches,abbott1999lapicque, izhikevich2010hybrid}, \cite[Chapter 4]{gerstner2002spiking}. In particular, $\xi_\ell \in \R_{\geq0}$, $\ell \in \{1,2\}$, represents the membrane potential of the neuron, whose continuous-time dynamics is a first-order differential equation affected by a continuous and nonnegative (current) input, and it is then reset whenever it reaches a positive threshold.
The choice of the integrate-and-fire neuronal model has been made because it is a simple, yet insightful mathematical model of neuronal dynamics and it allow us to study properties of systems in closed loop with spiking controllers. Moreover, this neuronal model is a well-known and well-accepted model for describing neuronal dynamics and can be implemented with electrical circuits, see, e.g., \cite{chicca2003vlsi, indiveri2006vlsi, arthur2010silicon, schultz1995analogue}.
Although neurons can be sensitive to negative inputs via the so-called rebound spiking mechanism \cite{izhikevich2007dynamical}, in this work we only consider the integrate-and-fire neuron model, which as explained in \cite{gerstner2002spiking}, generates spiking times only when fed with nonnegative inputs. 
%
As such, the input to each of the neurons is taken as a signed version of the continuous-time measured output $y$ of system \eqref{eq:system} with $n_y =1$, which, depending on its sign, influences either $\xi_1$ or $\xi_2$, in our neuromorphic control setup. This leads to one neuron being sensitive to positive values of $y$, through $\max{\{0,y\}}$, and one to negative values of $y$, through $\max{\{0,-y\}}$. 
%
%
%
We thus consider the following dynamics between two spiking actions, 
\begin{subequations}
  \label{eq:thresholdFlow}
\begin{eqnarray}
	\frac{d}{dt} \xi_{1}(t) = \max\{0, y(t)\},  \quad t \in  [t_{1,j_1}, t_{1,j_1 +1}), j_1 \in \Z_{\geq 0}, \\
	 \frac{d}{dt} \xi_{2}(t) = \max\{0, -y(t)\},  \  t \in  [t_{2,j_2}, t_{2,j_2 +1}), j_2 \in \Z_{\geq 0},
\end{eqnarray}
\end{subequations}
where $t_{1,0} = t_{2,0} = 0$ and $\xi_\ell(0) \in [0, \Delta_\ell)$, where $\Delta_\ell \in \R_{> 0}$ is the firing threshold, which is a neuron parameter, and the sequence $\{t_{\ell,j_\ell}\}_{j_\ell \in \Z_{> 0}}$ represents the spiking instants of $\xi_\ell$, $\ell \in \{1,2\}$. Moreover, for any $t \in \R_{\geq0}$, $j_\ell(t) \in \Zo$ counts the number of spikes generated by neuron $\ell$ up to, and including, time $t \in \R_{\geq0}$.  
Neuron  $\ell \in \{1,2\}$, generates spikes when 
\begin{equation}
\xi_{\ell} \geq \Delta_\ell
\label{eq:triggeringRule}
\end{equation}
is satisfied. The sequence $\{t_{\ell,j_\ell}\}_{j_\ell \in \Z_{\geq 0}}$, $\ell \in \{1,2\}$, is therefore defined as 
\begin{equation}
	t_{\ell, 0} = 0, \quad t_{\ell, j_\ell +1}:= \inf\{t> t_{\ell, j_\ell}:\xi_\ell(t) \geq \Delta_\ell \}.
	\label{eq:triggeringRuleSpikingTimesEachNeuron}
\end{equation}
Note that, according to \eqref{eq:triggeringRuleSpikingTimesEachNeuron}, $t_{\ell, 0} = 0$ is not a spiking time of neuron $\ell \in \{1,2\}$, and the first spiking time is $t_{\ell, 1} \in \R_{>0}$. 
The sequence of spiking times of the two-neuron network is given by $\{t_j\}_{j\in \Z_{\geq 0}} := \{t_{1, j_1}\}_{j_1 \in \Z_{\geq 0}}\cup \{t_{2, j_2}\}_{j_2 \in \Z_{\geq 0}}$, where, similarly to \eqref{eq:triggeringRuleSpikingTimesEachNeuron}, $t_0 = t_{1,0} = t_{2,0} = 0$ is not a spiking time.  

Inspired by the integrate-and-fire neuron model, where the membrane potential of the neuron is reset whenever it reaches a threshold, we reset the variable $\xi_{\ell}$, $\ell \in \{1,2\}$, when it generates a spike, while it is not modified when the other neuron reaches the threshold, i.e., when neuron $\ell \in \{1,2\}$ satisfies \eqref{eq:triggeringRule} at time $t_{\ell,j_\ell}$, $j_\ell\in \Z_{> 0}$, and thus is the firing neuron,  
\begin{equation}
\begin{aligned}
	&\xi_{\ell}(t_{\ell, j_\ell}^{+}) = 0,   
	\qquad \xi_{3-\ell}(t_{\ell, j_\ell}^{+}) = \xi_{3-\ell}(t_{\ell, j_\ell}).
\end{aligned}
\label{eq:thresholdJump1}
\end{equation}
It is important to notice that the integrate-and-fire neuron model does not include a spike generation mechanism, but, as explained in \cite{izhikevich2010hybrid}, it is only a threshold mechanism and therefore does not intrinsically produce a spike. 
 In our design, we add a spike generation mechanism to the integrate-and-fire neuron model. Indeed, whenever $\xi_{\ell}$, $\ell \in \{1,2\}$, reaches the threshold $\Delta_\ell$, a spiking control action, 
is fired, whose sign depends on which neuron reaches $\Delta_\ell$. 
Note that hardware implementations of neuromorphic controllers in the literature, see, e.g., \cite{feketa2023artificial, agliati2025spiking, schmetterling2024neuromorphic, medvedeva2025formalizing}, produce spiking signals. 
In particular, when $\xi_1$ reaches the threshold, the resulting spiking control action at the spiking time $t_{1,j_1}, j_1 \in \Zp$, is given by $+\alpha_1 \delta(t-t_{1,j_1})$, with $\alpha_1 \in \R_{> 0}$ the amplitude of the spiking control action applied to the plant and $t \in \R_{\geq 0}$.
Similarly, when $\xi_2$ reaches the threshold, the resulting spiking control action at the spiking time $t_{2,j_2}, j_2 \in \Zp$, is given by $-\alpha_2 \delta(t-t_{2,j_2})$, with  $\alpha_2 \in \R_{> 0}$ and $t \in \R_{\geq 0}$. In the following, 
with some abuse of terminology, we consider $\alpha_\ell \in \R_{> 0}$, $\ell \in \{1,2\}$, a neuron parameter.
\color{black}
%

We now define the spiking input generated by each neuron as, for each $\ell \in \{1,2\}$, and all $t\in \R_{\geq0}$,
\begin{equation}
	u_\ell(t) =  \sum_{i = 1}^{+\infty} (3-2\ell)\alpha_\ell\delta (t-t_{\ell,i}).  
	\label{eq:SpikingInput_EachNeuron}
\end{equation}
Note that $(3-2\ell) = 1$, when $\ell = 1$ and $(3-2\ell) = -1$ when $\ell = 2$. Therefore, $(3-2\ell)$ simply considers the sign of the spiking output of the neuron.  
Using \eqref{eq:SpikingInput_EachNeuron}, we define the spiking signal $u$ generated by the two integrate-and-fire neurons in the parallel configuration in Fig. \ref{Fig:blockDiagram_op2} as, 
for all $t \in \R_{\geq 0}$,
\begin{equation}
\begin{aligned}
	u(t) &= u_1(t) + u_2(t)\\
    &= \sum_{i = 1}^{+\infty} \alpha_1\delta (t-t_{1,i}) + \sum_{i = 1}^{+\infty} -\alpha_2\delta (t-t_{2,i}). 
    \end{aligned}
	\label{eq:SpikingInput}
\end{equation}
%
%

Defining solutions to general (nonlinear) systems with spiking inputs is not straightforward. However, for LTI systems,  the effect of a Dirac impulse on the connected system,  and thus the expression of the solution of system \eqref{eq:system} with input $u$ in \eqref{eq:SpikingInput}, is well defined using the convolution integral and the well-known sifting property. Moreover, in view of the Dirac function definition, the spiking signal in \eqref{eq:SpikingInput} implies that the control input $u$ is different from zero only at the spiking times and thus system \eqref{eq:system} evolves in open loop between two spiking control instants, i.e., $\dot x(t) = Ax(t)$ for all $t \in [t_j, t_{j+1})$, $j \in \Zo$, and the solution has a fixed-amplitude discontinuity at the spiking times $t_j\in \R_{\geq0}$, i.e., $x(t_j^+) = x(t_j)+ B\alpha_1$ or $x(t_j^+) = x(t_j) - B\alpha_2$ depending on which neuron generates the spike at time $t_j$.

Now that we have proposed the neuromorphic controller and provided a suitable modeling framework, it is our objective to derive conditions on the neuron parameters $\alpha_\ell$ and $\Delta_\ell$, $\ell \in \{1,2\}$, to guarantee a formal practical stability property of system \eqref{eq:system} with spiking input \eqref{eq:SpikingInput}, which is the closed-loop system considered in this paper. 
For this purpose we follow an emulation-based approach, as described in the next section. 

\begin{rem}
Equations \eqref{eq:SpikingInput_EachNeuron} and \eqref{eq:SpikingInput} assume that the actuator (embedded in the linear time-invariant model) can cope with spiking control signals, which is reasonable in various applications, see, e.g., \cite{van2018organic,krauhausen2021organic}. Interestingly, the number and type of neuromorphic devices, such as actuators and sensors, which are important for the real-life implementation of neuromorphic controllers 
	are growing in the recent years due the engineering community's interest in developing neuromorphic closed-loop systems. 
    Moreover, modeling spikes as Dirac impulses is a modeling idealization, which is reasonable in view of the different time scale of spikes compared to the system dynamics, as also done in \cite{petri2024analysis, petri2025spiking, petri2025rhythmic, agliati2025spiking}.
\end{rem}

\begin{rem}
	%
	We cannot apply the results on singular perturbation or averaging for hybrid systems, see, e.g., \cite{wang2012analysis,sanfelice2011singular} to ensure a stability property of the closed-loop system \eqref{eq:system} with spiking input \eqref{eq:SpikingInput}. Indeed, 
    in the considered neuromorphic control setting the required time-scale separation is not present because the jumps (spikes) are generated by the fast dynamics (neurons) and the (possibly) slow dynamics (plant) are directly affected by the (fast) spikes. In this paper, we therefore develop a new method to prove the desired closed-loop properties, also leading to explicit and quantitative bounds. Note that singular perturbation theory typically provides only qualitative bounds. 
%
%
\end{rem}

\section{Emulation-based neuromorphic controller}\label{EmulationSection}

To ensure a stability property of the closed-loop system with a spiking controller we follow an emulation-based approach in the sense that we first assume that we know a stabilizing continuous-time controller for a LTI system (see Assumption~\ref{Ass:outputFeedbackControl}) and then we design the spiking neural network (SNN) to approximate the closed-loop behavior we would have obtained if the system had been controlled using the stabilizing continuous-time controller.
For SISO LTI systems Assumption~\ref{Ass:outputFeedbackControl} implies that there exists $K \in \R$ such that $A + BKC$ is Hurwitz, with  $A \in \R^{n_x\times n_x}$, $B \in \R^{n_x \times 1}$ and $C \in \R^{1\times n_x}$ from \eqref{eq:system}. Moreover, without loss of generality, we assume $K \in \R_{> 0}$. When Assumption~\ref{Ass:outputFeedbackControl} is satisfied for some $K \in \R_{< 0}$, then such a positive $K$ exists because we can then flip the sign of the generated control input.
The goal of our emulation-based design approach is to approximate the closed-loop trajectories of the ``ideal" closed-loop system 
\begin{equation}
\begin{aligned}
    \dot{\bar{x}} &= (A+BKC) \bar x
    = \bar A \bar x, 
    \end{aligned}
    \label{eq:ClosedLoopToEmulate}
\end{equation}
with $\bar A := (A + BKC) \in \R^{n_x \times n_x}$ and $\bar x(0) = x_0 \in \R^{n_x}$.
For this purpose, we consider the difference $ \tilde x:=x-\bar x \in \R^{n_x}$ between the closed-loop trajectory $x$ given by \eqref{eq:system} with the spiky controller generating \eqref{eq:SpikingInput} 
and the trajectory $\bar x$ of the ideal continuous-time closed loop in \eqref{eq:ClosedLoopToEmulate}. 
This difference $\tilde x$, which we also call the state emulation error, is governed by 
\begin{equation}
\begin{aligned}
\dot {\tilde {x}}  &= \dot x - \dot{\bar{x}}  \stackrel{\eqref{eq:system}, \eqref{eq:ClosedLoopToEmulate}}{=} Ax + Bu- (A+BKC)\bar x\\  
&= Ax + BKCx - BKCx + Bu- (A+BKC)\bar x\\
&= (A + BKC)\tilde{x} - BKCx  + Bu  \\
&= \bar A \tilde{x} - Be, \label{eq:systemWithEmulationError}
\end{aligned}
\end{equation}
with $\tilde x(0)= x(0) - \bar x(0) = x_0 - x_0 = 0$,  and $e := KCx -u \in \R$ the emulation error. Since $\tilde{x}(0) =0$ and $\bar A$ is Hurwitz from Assumption \ref{Ass:outputFeedbackControl}, from \eqref{eq:systemWithEmulationError} we have that the trajectory of system \eqref{eq:system} with spiking input \eqref{eq:SpikingInput} approximates the desired closed-loop trajectory $\bar x$, i.e., $\tilde{x}$ is bounded, if the emulation error $e$ is bounded and small. 
%
Note that $e(t) = Ky(t) \in \R$ for all $t \in [t_j, t_{j+1})$, $j \in \Z_{\geq 0}$, and at the spiking times $t_j$, $j \in \Z_{> 0}$, $e$ has Dirac impulses due to the spikes in $u$ from \eqref{eq:SpikingInput}. Moreover $Ky = KCx$ could be discontinuous at the spiking times $t_j$, $j \in \Z_{> 0}$. 
Hence, $e$ will not be small in a standard Euclidean sense and an alternative (integral) metric will be proposed and used for this purpose in Theorem \ref{Thm:BoundedEmulationErrorIntegral} and Definition \ref{Def:SpikingSignalDefinition} below. 
Thus, our goal is to provide conditions on the neurons parameters $\alpha_\ell$, $\Delta_\ell$ such that the spiking signal $u$ from \eqref{eq:SpikingInput} emulates, in a sense that will be clarified below, the signal $Ky$ with $K$ as in  Assumption~\ref{Ass:outputFeedbackControl}.
We also stress that the proposed neuromorphic controller does not need to implement $Ky$ and only the knowledge of the stabilizing gain $K \in \R_{>0}$ is required for its design and implementation. 
%
%


To guarantee a practical stability property of system~\eqref{eq:system} with $n_u = n_y =1$ in closed loop with the two-neuron spiking controller generating \eqref{eq:SpikingInput},
 we use a two-step approach:
\begin{description}[leftmargin=0cm]
\item[Step 1:]  
We provide conditions on the neuron parameters $\alpha_\ell$ and $\Delta_\ell$, $\ell \in \{1,2\}$, to guarantee an upperbound on the norm of the integral of the emulation error $e$. This bound uses a new metric to formally study a \textit{signal-based} version of a \emph{universal approximation property} of two-neuron integrate-and-fire spiking neural networks (see Section~\ref{IntegralEmulationErrorBound}), which is useful to prove a closed-loop stability property for system \eqref{eq:system} with spiking input \eqref{eq:SpikingInput} in Step 2. 
\item[Step 2:] 
We show that an asymptotically stable LTI system with spiking input, such as \eqref{eq:systemWithEmulationError} with $\bar{A} \in \R^{n_x \times n_x}$ Hurwitz, 
satisfies a new type of integral input-to-state stability, for which we introduce the term \emph{integral Spiking Input-to-State Stability} (iSISS) (see Section~\ref{StrongIntegralISS}), relating to the metric used in the first step. 
\end{description}

By combining Steps 1 and 2, we can guarantee an arbitrary small bound on the state emulation error $\tilde{x}$ in \eqref{eq:systemWithEmulationError}. This result can be used to obtain a 
practical stability property of system \eqref{eq:system} with spiking input \eqref{eq:SpikingInput} (see Section~\ref{PracticalStabilitySection}). 
%
\color{black}

\section{Approximation properties of integrate-and-fire spiky neural networks
} \label{IntegralEmulationErrorBound}

The goal of this section is to prove a signal-based version of a universal approximation property 
for the two-neuron spiking network in \eqref{eq:thresholdFlow}-\eqref{eq:SpikingInput}. 

For this purpose, we first establish well-posed behavior of the neuron in the sense that in the next proposition, whose proof is given in Appendix \ref{Appendix_ProofProp1}, we ensure that 
%
 the spiking signal $u_\ell$ in \eqref{eq:SpikingInput} generated by the neuron $\ell \in \{1,2\}$ is such that no accumulation of spiking times can happen. Since the number of neurons is finite, this property implies that there is no accumulation of the spiking times in the overall spiking sequence $\{t_j\}_{j \in \Zo} =  \{t_{1,j_1}\}_{j_1 \in \Zo} \cup \{t_{2,j_2}\}_{j_2 \in \Zo}$.  
 This property excludes the Zeno phenomenon and ensures that solutions to system \eqref{eq:system} with spiking input $u$ in \eqref{eq:SpikingInput} are well-defined. This is an important property which will be essential in Sections~\ref{StrongIntegralISS} and~\ref{PracticalStabilitySection} to guarantee stability of the closed-loop system \eqref{eq:system}, \eqref{eq:SpikingInput}. Moreover, this property can be proven following similar lines for any network with finite number of integrate-and-fire neurons.

 \begin{prop}
 	Consider neuron $\ell \in \{1,2\}$ in \eqref{eq:thresholdFlow}-\eqref{eq:thresholdJump1}, where $y \in \mathcal{L}_{\R}$ is 
 	its signal input. 
    Then, the sequence of spiking times $\{t_{\ell,j_\ell}\}_{j_\ell \in \Z_{> 0}}$ in \eqref{eq:triggeringRuleSpikingTimesEachNeuron}, is such that $j_\ell < \infty$ or it  
    satisfies, 
        $t_{\ell,j_\ell} \to +\infty$ as $j_\ell \to +\infty$ for all $\ell \in \{1,2\}$ and all $\xi_\ell(0) \in [0, \Delta_\ell)$. 
   Moreover, the sequence of spiking times $\{t_j\}_{j \in \Zo} =  \{t_{1,j_1}\}_{j_1 \in \Zo} \cup \{t_{2,j_2}\}_{j_2 \in \Zo}$ is such that $j  < \infty$ or it satisfies $t_{j} \to \infty$ as $j \to \infty$ for all $\xi_\ell(0) \in [0, \Delta_\ell)$, $\ell \in \{1,2\}$.
 \label{Prop:InterSpikingTime}
 \end{prop}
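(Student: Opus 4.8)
The plan is to exploit the local essential boundedness of the input $y \in \mathcal{L}_{\R}$ to produce a strictly positive lower bound on the elapsed time between two consecutive spikes of each neuron on any compact interval, and then to rule out accumulation by a short contradiction argument. The whole argument is of ``minimum dwell-time'' type, familiar from event-triggered control.

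First I would record two structural facts about the membrane potential. Between consecutive spiking instants the right-hand side $\max\{0,(3-2\ell)y\}$ of the flow \eqref{eq:thresholdFlow} is nonnegative and locally integrable, so $\xi_\ell$ is absolutely continuous and non-decreasing on each interval $[t_{\ell,j_\ell},t_{\ell,j_\ell+1})$. Combined with the reset $\xi_\ell(t_{\ell,j_\ell}^+)=0$ from \eqref{eq:thresholdJump1} and the continuity of $\xi_\ell$, the threshold is hit exactly at a spike, i.e. $\xi_\ell(t_{\ell,j_\ell+1})=\Delta_\ell$: if the value exceeded $\Delta_\ell$, continuity would produce an earlier crossing, contradicting the infimum in \eqref{eq:triggeringRuleSpikingTimesEachNeuron}.

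Next I would fix any $T>0$ and use local essential boundedness to pick $M_T\geq 0$ with $|y(t)|\leq M_T$ for almost all $t\in[0,T]$, whence $\tfrac{d}{dt}\xi_\ell\leq M_T$ a.e.\ on $[0,T]$. Integrating \eqref{eq:thresholdFlow} from a reset to the next crossing, for any consecutive pair with $[t_{\ell,j_\ell},t_{\ell,j_\ell+1}]\subseteq[0,T]$, and using $\xi_\ell(t_{\ell,j_\ell}^+)=0$ together with $\xi_\ell(t_{\ell,j_\ell+1})=\Delta_\ell$, yields
\[
\Delta_\ell=\int_{t_{\ell,j_\ell}}^{t_{\ell,j_\ell+1}}\max\{0,(3-2\ell)y(s)\}\,\mathrm{d}s\;\leq\;M_T\,(t_{\ell,j_\ell+1}-t_{\ell,j_\ell}),
\]
so that every inter-spike interval inside $[0,T]$ satisfies $t_{\ell,j_\ell+1}-t_{\ell,j_\ell}\geq \Delta_\ell/M_T>0$ (and when $M_T=0$ no spike can occur at all). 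This uniform dwell time bounds the number of spikes of neuron $\ell$ in $[0,T]$ by $\lceil TM_T/\Delta_\ell\rceil<\infty$. To conclude the dichotomy, I would suppose $j_\ell\to\infty$; the strictly increasing sequence $\{t_{\ell,j_\ell}\}$ then either diverges to $+\infty$, as claimed, or converges to a finite $T^\ast$, in which case all spikes lie in $[0,T^\ast]$ and the dwell-time bound with $T=T^\ast$ permits only finitely many, a contradiction. Hence either $j_\ell<\infty$ or $t_{\ell,j_\ell}\to+\infty$. The statement for the merged sequence $\{t_j\}$ follows immediately: it is a union of two sequences each contributing finitely many elements to every $[0,T]$, so $\{t_j\}$ has finitely many elements in $[0,T]$, giving the same dichotomy.

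The main obstacle is not analytical but bookkeeping around the definitional edge cases: the situation where the infimum in \eqref{eq:triggeringRuleSpikingTimesEachNeuron} is taken over an empty set (no further crossing, i.e.\ $j_\ell<\infty$) and the degenerate case $M_T=0$ must be folded cleanly into the ``$j_\ell<\infty$'' alternative. Once local essential boundedness of $y$ is invoked, the core dwell-time estimate itself is elementary.
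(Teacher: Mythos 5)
Your proposal is correct and follows essentially the same route as the paper's proof: a minimum dwell-time bound $t_{\ell,j_\ell+1}-t_{\ell,j_\ell}\geq \Delta_\ell/M_T$ obtained from the local essential bound $M_T$ on $y$ over $[0,T]$, which excludes accumulation on any compact interval, with the merged sequence handled by finiteness of the number of neurons. The extra bookkeeping you add (the explicit contradiction via a finite limit $T^\ast$ and the degenerate case) is sound but not needed beyond what the paper already records.
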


\color{black}

In the next theorem we ensure a signal-based version of the universal approximation property. 

\begin{thm}
	Consider the two-neuron integrate-and-fire spiking neural network in \eqref{eq:thresholdFlow}-\eqref{eq:SpikingInput}, where $y \in \mathcal{L}_{\R}$ is 
	a signal input to the neurons. 
    Then, for any $t \in \R_{\geq 0}$, and each $\ell \in \{1,2\}$,
	\begin{equation}
		\left|\int_{0}^{t} \max{\{0,(3-2\ell)K_\ell y(s)\}} - (3-2\ell)u_\ell(s) ds\right|\leq   \alpha_\ell,
		\label{eq:boundEmulationErrorTheorem_EachNeuron}
	\end{equation}
	with $u_\ell$ defined in \eqref{eq:SpikingInput_EachNeuron}, and $\displaystyle K_\ell = \frac{\alpha_\ell}{\Delta_\ell}$. 
	Moreover, 
    for any $t \in \R_{\geq 0}$, 
	\begin{equation}
		\left|\int_{0}^{t} \psi(s) ds\right| \leq \alpha_1 + \alpha_2, 
		 \label{eq:boundEmulationErrorTheorem}
	\end{equation}
	with $\psi(t)= \max{\{0,K_1y(t)\}} - \max{\{0,-K_2y(t)\}} - u(t) \in \R$, $t \in \R_{\geq0}$, where $u$ is defined in \eqref{eq:SpikingInput}. 
    Moreover, in case $\xi_\ell(0) = 0$, $\ell \in \{1,2\}$, then
	\begin{equation}
		\left|\int_{0}^{t} \psi(s) ds\right| \leq \max{\{\alpha_1,\alpha_2\}}, 
		 \label{eq:boundEmulationErrorTheoremMax}
	\end{equation}
    for any $t \in \R_{\geq 0}$. 
\label{Thm:BoundedEmulationErrorIntegral}
\end{thm}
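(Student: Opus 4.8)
The plan is to reduce all three inequalities to a single bookkeeping identity for the membrane potential and then read off the bounds from the fact that $\xi_\ell$ stays in $[0,\Delta_\ell)$. First I would observe that between spikes $\xi_\ell$ simply integrates its (nonnegative) input, and at each spike it is reset by subtracting exactly $\Delta_\ell$. Writing $j_\ell(t)$ for the number of spikes of neuron $\ell$ in $(0,t]$ — which is finite for every $t$ by Proposition~\ref{Prop:InterSpikingTime}, so no infinite sums appear — this gives
\begin{equation*}
\xi_\ell(t) = \xi_\ell(0) + \int_0^t f_\ell(s)\,ds - j_\ell(t)\Delta_\ell,
\end{equation*}
where $f_1(s) = \max\{0,y(s)\}$ and $f_2(s) = \max\{0,-y(s)\}$, equivalently $\int_0^t f_\ell(s)\,ds = \xi_\ell(t) - \xi_\ell(0) + j_\ell(t)\Delta_\ell$. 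Because the reset forces $\xi_\ell(t)\in[0,\Delta_\ell)$ for all $t$, and $\xi_\ell(0)\in[0,\Delta_\ell)$ by assumption, this identity is the only structural ingredient I will need.

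Next I would establish the per-neuron bound~\eqref{eq:boundEmulationErrorTheorem_EachNeuron}. By the sifting property of the Dirac deltas, $\int_0^t (3-2\ell)u_\ell(s)\,ds = \alpha_\ell\, j_\ell(t)$ (using $(3-2\ell)^2=1$), with the same count $j_\ell(t)$ as above. Since $\max\{0,(3-2\ell)K_\ell y(s)\} = K_\ell f_\ell(s)$ (as $K_\ell>0$) and $K_\ell\Delta_\ell = \alpha_\ell$, multiplying the accounting identity by $K_\ell$ and subtracting $\alpha_\ell j_\ell(t)$ makes the spike-count terms cancel exactly, leaving
\begin{equation*}
\int_0^t \max\{0,(3-2\ell)K_\ell y(s)\} - (3-2\ell)u_\ell(s)\,ds = \frac{\alpha_\ell}{\Delta_\ell}\bigl(\xi_\ell(t)-\xi_\ell(0)\bigr).
\end{equation*}
As $\xi_\ell(t)-\xi_\ell(0)\in(-\Delta_\ell,\Delta_\ell)$, the right-hand side has modulus strictly below $\alpha_\ell$, giving~\eqref{eq:boundEmulationErrorTheorem_EachNeuron}.

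Finally I would assemble the network bounds. Summing the two per-neuron expressions with $\psi = \max\{0,K_1y\} - \max\{0,-K_2y\} - u$ and $u = u_1 + u_2$, the $\alpha_1 j_1(t)$ and $\alpha_2 j_2(t)$ contributions again cancel, yielding $\int_0^t\psi(s)\,ds = \frac{\alpha_1}{\Delta_1}(\xi_1(t)-\xi_1(0)) - \frac{\alpha_2}{\Delta_2}(\xi_2(t)-\xi_2(0))$. The triangle inequality with each term bounded by $\alpha_\ell$ immediately gives~\eqref{eq:boundEmulationErrorTheorem}. For the sharper bound~\eqref{eq:boundEmulationErrorTheoremMax}, I would use that $\xi_\ell(0)=0$ forces each term $\frac{\alpha_\ell}{\Delta_\ell}\xi_\ell(t)$ into $[0,\alpha_\ell)$ rather than $(-\alpha_\ell,\alpha_\ell)$; the difference of a value in $[0,\alpha_1)$ and one in $[0,\alpha_2)$ lies in $(-\alpha_2,\alpha_1)$, whose modulus is below $\max\{\alpha_1,\alpha_2\}$.

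The computation is short, so the main obstacle is rigor rather than difficulty. The delicate points are justifying the accounting identity as a genuine statement about a reset flow and, in particular, fixing the endpoint convention so that a spike landing exactly at time $t$ is counted identically in $\int_0^t u_\ell$ (via the sifting property) and in the reset count $j_\ell(t)$; any mismatch there would corrupt the exact cancellation of the spike-count terms on which all three bounds rest. Proposition~\ref{Prop:InterSpikingTime} is essential, as it guarantees $j_\ell(t)<\infty$ and hence that the series and integrals are well-defined.
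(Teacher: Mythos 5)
Your proposal is correct and follows essentially the same route as the paper's proof: both reduce everything to the identity $\int_0^t \max\{0,(3-2\ell)K_\ell y(s)\}-(3-2\ell)u_\ell(s)\,ds = K_\ell(\xi_\ell(t)-\xi_\ell(0))$ and then invoke $\xi_\ell\in[0,\Delta_\ell]$; the paper obtains this identity by splitting the integral over inter-spike intervals and showing each complete interval contributes zero, whereas you package the same cancellation into a single global accounting identity in the spike count $j_\ell(t)$, which is an equivalent organization of the same computation. The endpoint issue you flag is real but harmless here, since the paper's closed bound $\xi_\ell(t)\in[0,\Delta_\ell]$ (rather than your strict interval) absorbs the case of a spike landing exactly at $t$ and still yields the non-strict bounds claimed in the theorem.
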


Theorem \ref{Thm:BoundedEmulationErrorIntegral}, whose proof is given in Appendix \ref{Appendix_ProofTh1}, 
formally establishes important approximation properties for an integrate-and-fire neuron outputting spiky signals in a ``sup-integral'' sense. 
Note that, the bound is given by $\alpha_\ell$, and thus can be made arbitrary small for fixed $K_\ell$ if there is freedom in the selection of the neuron's firing threshold $\Delta_\ell$ and spike amplitude $\alpha_\ell$.
Moreover, Theorem \ref{Thm:BoundedEmulationErrorIntegral} also ensures a bound, given by $\alpha_1 + \alpha_2$, (or $\max\{\alpha_1,\alpha_2\}$)
on the norm of the integral of $\psi(t)=\max{\{0,K_1y(t)\}} - \max{\{0,-K_2y(t)\}}-u(t)$, for any $t \in \R_{\geq0}$, with $u(t)$ defined in \eqref{eq:SpikingInput} and thus it formally guarantees an approximation property for the two-neuron network in \eqref{eq:thresholdFlow}-\eqref{eq:SpikingInput}. 
%
\color{black}
Note that, \eqref{eq:boundEmulationErrorTheorem_EachNeuron} shows that the spiking signal output of neuron $\ell \in \{1,2\}$, approximates the function $\max{\{0,(3-2\ell)K_\ell y(s)\}}$, which is known as Rectifier Linear Unit (ReLU),  see, e.g., \cite{nair2010rectified, ramachandran2017searching},
and is a well-known activation function in artificial neural networks (ANNs). Hence, Theorem \ref{Thm:BoundedEmulationErrorIntegral} provides a formal link between feedforward ANNs and spiking neural networks (SNNs), where each integrate-and-fire neuron emulates a node of an ANN with a RELU activation function. 

The results of Theorem~\ref{Thm:BoundedEmulationErrorIntegral} can be used to emulate the static output feedback controller from Assumption \ref{Ass:outputFeedbackControl}. Indeed, a special case of  Theorem~\ref{Thm:BoundedEmulationErrorIntegral}  is when $K_1 = \frac{\alpha_1}{\Delta_1} =  K_2 = \frac{\alpha_2}{\Delta_2}=  K \in \R_{> 0}$. In this case, for any $y \in \R$ we have $Ky = \max\{0,Ky\} - \max\{0,-Ky\}$ and thus 
we have, for all $t \in \R_{\geq 0}$,
	\begin{equation}
		\left|\int_{0}^{t} \tilde{\psi}(s) ds\right| \leq \alpha_1 + \alpha_2, 
		\label{eq:boundEmulationErrorCorollary}
	\end{equation}
	with $\tilde{\psi}(t)= Ky(t) - u(t)$, $t \in \R_{\geq0}$. 
This special case of Theorem \ref{Thm:BoundedEmulationErrorIntegral} extends the result in the preliminary conference version of this work \cite[Theorem 1]{petri2025spiking}, allowing $\xi_\ell (0) \in [0, \Delta_\ell)$, not only $\xi_\ell (0) = 0$, $\ell \in \{1,2\}$.
Moreover, this special case (with $K_1 = K_2 = K$) of Theorem \ref{Thm:BoundedEmulationErrorIntegral} will be used in Section \ref{PracticalStabilitySection} to ensure that $\tilde{x}$ in \eqref{eq:systemWithEmulationError} is tunable and small, i.e., the difference $\tilde x$ between the state $x$ of the spiky control loop and the state $\bar x$ of the ``ideal" continuous-time system \eqref{eq:ClosedLoopToEmulate} remains small. As \eqref{eq:ClosedLoopToEmulate} is asymptotically stable, this will imply
a global practical stability property of system \eqref{eq:system} with input \eqref{eq:SpikingInput}.

\section{Integral spiking input-to-state stability property for LTI systems 
} \label{StrongIntegralISS}
In this section we introduce the \emph{integral Spiking Input-to-State Stability} (iSISS) property, which will be instrumental to ensure a stability property 
for systems controlled by spiking controllers, and we compare it with the notion of integral input-to-state stability property presented in \cite{sontag1998comments,angeli2000characterization}. 
%
We then show that
an asymptotically stable LTI system, like the one we consider in \eqref{eq:systemWithEmulationError}, satisfies this stability property with respect to a class of spiking inputs, like $e$ in \eqref{eq:systemWithEmulationError}, which is defined below. 
\begin{defn} 
	A signal $v$ defined on $\R_{\geq 0} $ is a \emph{spiking signal},
	denoted by $v \in \mathcal{S}^{n_v}$, if it can be written as 
	\begin{equation}
		v = v_1 + v_2,
		\label{eq:SpikingInputGeneral}
	\end{equation}
	where $v_1 \in \mathcal{L}_{\R^{n_v}}$, and 
	$v_{2}(t):= \sum_{i = 1}^{\infty} \Theta_{i} \delta(t-t_{i})$ with $t \in \R_{\geq 0}$ is a sequence of Dirac pulses with
	$\Theta_{i} \in \R^{n_v}$ for all $i \in \Zp$, and 
	$\{t_{i}\}_{{i} \in \Z_{> 0}}$ is a sequence of spiking times such that  
	%
	%
	$t_{i+1} > t_{i}$, with $t_0 = 0$, for all $i \in \Zo$ and $t_{i} \to \infty$ when $i \to \infty$, and $v$ satisfies 
	\begin{equation}
		\norm{v}_\star :=\sup_{t \in \R_{\geq 0}} \left|\int_{0}^{t}v(s) ds\right| < + \infty, 
		\label{eq:newNormDef}
	\end{equation}
    where $\int_{0}^{t}v(s) ds = \int_{0}^{t}v_1(s) ds + \int_{0}^{t}v_2(s) ds$, where $\int_{0}^{t}v_1(s) ds$ denotes the Lebesgue integral from $0$ to $t \in \R_{\geq0}$ of $v_1 \in \mathcal{L}_{\R^{n_v}}$ and $\int_{0}^{t}v_{2}(s)ds = \sum_{i \in \Zp, t_{i} \in [0,t]}\Theta_{i}$. 
	
	\label{Def:SpikingSignalDefinition}
\end{defn}



%
The proof that $\norm{v}_\star$ in \eqref{eq:newNormDef} is a norm on $\mathcal{S}^{n_v}$ is given in Appendix~\ref{Appendix_NormStar}.  
%
Note that, the emulation error signal $e$, defined in Section~\ref{EmulationSection}, belongs to the class of spiking signals $\mathcal{S}^1$ defined in Definition~\ref{Def:SpikingSignalDefinition} (with $n_v=1$). 
Indeed, $e = Ky -u$, with $u$ the spiking input in \eqref{eq:SpikingInput} and $Ky$, with $K\in \R$ from Assumption \ref{Ass:outputFeedbackControl} with $n_u = n_y =1$, is a Lebesgue measurable and locally essentially bounded function, 
as required by  Definition~\ref{Def:SpikingSignalDefinition}. Moreover, from Theorem~\ref{Thm:BoundedEmulationErrorIntegral}, we have $\norm{e}_\star \leq \alpha_1 +\alpha_2 < + \infty$. In addition, from Proposition~\ref{Prop:InterSpikingTime} 
it follows that $t_i \to +\infty$ when $i \to +\infty$, as required by Definition~\ref{Def:SpikingSignalDefinition}.
%
%

Since we have a bound on the emulation error in the norm $\norm{\cdot}_\star$ and a particular signal class $\mathcal{S}$, we need a new definition of (integral) input-to-state stability \cite{sontag1998comments}. This is proposed next.
Although we focus on LTI systems in this paper, the iSISS property can be defined 
for general nonlinear systems of the form
\begin{equation}
	\begin{aligned}
		\dot z &= f(z,v),
	\end{aligned}
	\label{eq:generalNonlinear}
\end{equation}
with $z \in \R^{n_z}$ and $v \in \mathcal{S}^{n_v}$, $n_z, n_v \in \Zp$, provided that solutions to \eqref{eq:generalNonlinear} with $z(0) = z_0 \in \R^{n_z}$ and spiking input $v \in \mathcal{S}^{n_v}$ are well-defined, exist on $\R_{\geq0}$ and are unique. 
\begin{defn}
	System \eqref{eq:generalNonlinear} is \emph{integral spiking-input-to-state stable} (iSISS), 
	if there exist functions $\gamma \in \K$ and $\beta \in \KL$ such that, for any $z(0)= z_0 \in \R^{n_z}$  and all $v \in \mathcal{S}^{n_v}$, 
	the corresponding solution $z$ is defined for all $t \in \R_{\geq 0}$ 
	and satisfies
	\begin{equation}
		|z(t)| \leq \beta(|z_0|,t) + \gamma\left(\norm{v}_\star\right)
		\label{eq:StrongISSInequalityDefinition}
	\end{equation}
	for all $t \in \R_{\geq 0}$. 
	
	\label{Def:StrongIntegralISSDefinition}
\end{defn}


%


Definition \ref{Def:StrongIntegralISSDefinition} provides the iSISS property for system \eqref{eq:generalNonlinear} with respect to spiking inputs only when the solutions are well-defined. For LTI systems as in \eqref{eq:system} with input \eqref{eq:SpikingInput}, solutions are well-defined as indicated in Section \ref{SystemAndController} before. This is not straightforward for a general nonlinear system in the form \eqref{eq:generalNonlinear}, see, e.g., \cite{tanwani2015stability}. However, the notion of solution is properly defined for some classes of nonlinear systems, e.g., 
		$\dot z = f(z) + Gv$.
Definition \ref{Def:StrongIntegralISSDefinition} implies that when system \eqref{eq:generalNonlinear} is iSISS, then the norm of solution obtained starting from the initial condition $z_0 \in \R^{n_z}$, with input signal $v \in \mathcal{S}^{n_v}$ converges to a neighborhood of the origin, whose size is determined by $\gamma(\norm{v}_\star$), as $t \to +\infty$. 
\begin{rem} \label{Rem:iSISSstrongeriISS}
The  ``classical" notion of integral input-to-state stability (iISS) for system \eqref{eq:generalNonlinear} with input $v$ in \cite{sontag1998comments,angeli2000characterization} differs from Definition~\ref{Def:StrongIntegralISSDefinition} because it considers $\int_{0}^{\infty}\gamma(|v(s)|)ds$ instead of $\gamma(\norm{v}_\star)$, with $\norm{v}_\star=\sup_{t \in \R_{\geq 0}} |\int_{0}^{t}v(s) ds|$ in \eqref{eq:StrongISSInequalityDefinition}. Note the different position of the norm and $\gamma \in \K$. Moreover, the iSISS admits a larger class of input signals, including ``Dirac spikes''. Indeed, for instance, the emulation error defined in Section~\ref{EmulationSection} is such that $\norm{e}_\star$ is finite, but $\norm{e}_1= \int_{0}^{\infty}|e(s)|ds \in \R_{\geq 0}$ is typically not. 
%
When the iSISS property in Definition \ref{Def:StrongIntegralISSDefinition} holds with linear gain, i.e.,  $\gamma \in \K$ such that $\gamma(s) = \tilde{\gamma} s$, for some $\tilde{\gamma} \in \R_{\geq 0}$ for all $s \in \R_{\geq0}$  then, $\int_{0}^{\infty}\gamma|v(s)|ds = \gamma\int_{0}^{\infty}|v(s)|ds$. 
In this case, the iSISS property implies the ``classical" iISS property. Indeed,
%
	the inequality in \eqref{eq:StrongISSInequalityDefinition} is stronger than the inequality $|z(t)|\leq \beta(|z_0|,t) + \gamma \norm{v}_1$, for signals for which $\norm{v}_1 = \int_{0}^{\infty}|v(s)|ds \in \R_{\geq 0}$ is finite (and thus $\norm{v}_\star$ is finite) as $\norm{v}_\star = \sup_{t \in \R_{\geq 0}}|\int_{0}^{t}v(s)ds|  \leq \sup_{t \in \R_{\geq 0}}\int_{0}^{t}|v(s)|ds  = \norm{v}_1$. Thus, in the linear iSISS gain case, the iSISS property is a stronger property than the iISS.
	%
\end{rem}

\color{black}

We will now prove that iSISS holds for an asymptotically stable LTI system.  
\begin{thm}
	Consider system \eqref{eq:generalNonlinear} with $f(z,v) = Fz + Gv$, and
	$F \in \R^{n_z \times n_z}$ Hurwitz and $G \in \R^{n_z\times n_v}$. 
	Then it
	is iSISS  
	with $\beta \in \KL$ given by $\beta(r, s) := ce^{-\lambda s}r $, $r, s \in \R_{\geq 0}$, for some $c, \lambda \in \R_{> 0}$,  and $\gamma\in \K$ given by $\gamma(s) =  \tilde{\gamma} s \in \K$, $ s \in \R_{\geq 0}$, with $\tilde{\gamma}:= \norm{G} + \int_{0}^{\infty} \norm{F e^{Fs}G} ds \in \R_{\geq0}$. 
	\label{Thm:SystemEmulation_iSISS}
\end{thm}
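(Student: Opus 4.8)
The plan is to use the variation-of-constants (Duhamel) formula for the linear dynamics and then transfer the differentiation from the irregular input $v$ onto the smooth exponential kernel via an integration-by-parts argument tailored to signals in $\mathcal{S}^{n_v}$. Since $F$ is Hurwitz, there exist $c, \lambda \in \Rlp$ with $\norm{e^{Ft}} \leq c e^{-\lambda t}$ for all $t \in \Rlo$; this bound will simultaneously supply the $\KL$-bound $\beta$ and guarantee convergence of the integral defining $\tilde{\gamma}$.

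First I would write the solution with $z(0) = z_0$ and $v \in \mathcal{S}^{n_v}$ as
\begin{equation*}
z(t) = e^{Ft} z_0 + \int_{0}^{t} e^{F(t-s)} G\, v(s)\,ds,
\end{equation*}
where, using the decomposition $v = v_1 + v_2$ from Definition~\ref{Def:SpikingSignalDefinition} and the sifting property, the convolution term is understood as $\int_{0}^{t} e^{F(t-s)} G v_1(s)\,ds + \sum_{t_i \in (0,t]} e^{F(t-t_i)} G \Theta_i$. This is precisely the well-defined notion of solution discussed in Section~\ref{SystemAndController}. I would then introduce the primitive $V(t) := \int_{0}^{t} v(s)\,ds$, which by \eqref{eq:newNormDef} is bounded by $|V(s)| \leq \norm{v}_\star$ for every $s \geq 0$, and apply integration by parts to the convolution term against the $C^1$ kernel $s \mapsto e^{F(t-s)} G$. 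Since $V(0) = 0$ and $\tfrac{d}{ds} e^{F(t-s)} G = -F e^{F(t-s)} G$, this produces
\begin{equation*}
\int_{0}^{t} e^{F(t-s)} G\, v(s)\,ds = G\, V(t) + \int_{0}^{t} F e^{F(t-s)} G\, V(s)\,ds.
\end{equation*}

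Bounding both terms with $|V| \leq \norm{v}_\star$ and applying the change of variables $\tau = t - s$ in the integral gives
\begin{equation*}
\left|\int_{0}^{t} e^{F(t-s)} G\, v(s)\,ds\right| \leq \left(\norm{G} + \int_{0}^{\infty} \norm{F e^{F\tau} G}\,d\tau\right)\norm{v}_\star = \tilde{\gamma}\,\norm{v}_\star,
\end{equation*}
where the improper integral converges because $F$ is Hurwitz. Combining this with $|e^{Ft} z_0| \leq c e^{-\lambda t} |z_0| = \beta(|z_0|, t)$ yields exactly \eqref{eq:StrongISSInequalityDefinition} with the claimed $\beta \in \KL$ and the linear gain $\gamma(s) = \tilde{\gamma} s \in \K$.

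The main obstacle is rigorously justifying the integration-by-parts step for an input containing Dirac pulses: one must read $\int_{0}^{t} e^{F(t-s)} G\, v(s)\,ds$ as a Lebesgue–Stieltjes integral against $dV$ and invoke the bounded-variation integration-by-parts formula, verifying that the jumps of $V$ at the spiking times $t_i$ are treated consistently with the convention $\int_{0}^{t} v_2 = \sum_{t_i \in [0,t]} \Theta_i$ fixed in Definition~\ref{Def:SpikingSignalDefinition} (so that $V$ is right-continuous and the boundary terms evaluate cleanly). This is presumably the role of the technical lemma relegated to the Appendix. A secondary, easier point is confirming that $V$ is everywhere finite with $\sup_{t} |V(t)| = \norm{v}_\star < \infty$, which follows directly from membership in $\mathcal{S}^{n_v}$ together with the no-accumulation property of the $t_i$ established in Proposition~\ref{Prop:InterSpikingTime}.
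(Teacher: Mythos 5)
Your proposal is correct and follows essentially the same route as the paper: the paper likewise applies the variation-of-constants formula, transfers the derivative onto the kernel $e^{F(t-s)}G$ via integration by parts carried out interval-by-interval between spiking times (verifying that the Dirac contributions cancel against the jumps of $V$), arrives at the identical identity $z(t)=e^{Ft}z_0+G\int_0^t v(\theta)d\theta+\int_0^t Fe^{F(t-s)}G\int_0^s v(\theta)d\theta\,ds$, and concludes with the same bound $\tilde{\gamma}=\norm{G}+\int_0^\infty\norm{Fe^{Fs}G}ds$. The technical obstacle you flag is exactly what the paper's appendix handles by splitting at the $t_i$ rather than invoking a Lebesgue--Stieltjes formulation, but the substance is the same.
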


\section{Practical stability property of the closed-loop system}\label{PracticalStabilitySection}
By exploiting the results from Theorems~\ref{Thm:BoundedEmulationErrorIntegral} and~\ref{Thm:SystemEmulation_iSISS}, we first show in Theorem \ref{Thm:StabilityTheoremEmulation-variant} below that the trajectory of system \eqref{eq:system} with spiking input \eqref{eq:SpikingInput} remains (tunably) close to the ideal continuous-time closed-loop trajectory \eqref{eq:ClosedLoopToEmulate} from the same initial state. From this new contribution, 
 we can ensure a practical stability property for LTI systems  in closed loop with spiking controllers (see Corollary \ref{Thm:StabilityTheoremEmulation} below). 
%
\begin{thm}
	Consider system \eqref{eq:system} with $n_u = n_y =1$ and suppose that Assumption \ref{Ass:outputFeedbackControl} holds with $K \in \R_{>0}$.  Let $u$ be given by \eqref{eq:SpikingInput} using \eqref{eq:thresholdFlow}-\eqref{eq:SpikingInput_EachNeuron} with 
	$\alpha_\ell \in \R_{> 0}$ and $\Delta_\ell \in \R_{> 0}$, $\ell \in \{1,2\}$, in \eqref{eq:triggeringRuleSpikingTimesEachNeuron}, \eqref{eq:SpikingInput_EachNeuron} such that $\displaystyle K = \frac{\alpha_\ell}{\Delta_\ell}$, $\ell \in \{1,2\}$. 
    Define $\tilde{x} = x - \bar x $ as in \eqref{eq:systemWithEmulationError}, with $\bar x$ from \eqref{eq:ClosedLoopToEmulate}.  Then, for any $x(0) = \bar{x}(0) \in \R^{n_x}$, and any $\xi_\ell(0) \in [0, \Delta_\ell)$, $\ell \in \{1,2\}$, it holds that  
	\begin{equation}
		|\tilde x(t)|\leq  \gamma (\alpha_1 + \alpha_2), 
        \label{eq:Th3withSum-variant}
	\end{equation}
	for all $t \in \R_{\geq 0}$, where  
	$\gamma:= \norm{B} +\int_{0}^{\infty} \norm{\bar Ae^{\bar As} B} ds \in \R_{\geq 0}$. Moreover, if $\xi_1(0) = \xi_2(0) = 0$, then for any $x(0) = \bar{x}(0) = x_0 \in \R^{n_x}$, for all $t \in \R_{\geq 0}$, 
    \begin{equation}
		|\tilde x(t)|\leq  \gamma \max{\{\alpha_1, \alpha_2\}}. 
        \label{eq:Th3withMax-variant}
	\end{equation}
	\label{Thm:StabilityTheoremEmulation-variant}
\end{thm}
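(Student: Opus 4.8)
The plan is to recognize that the state-emulation-error dynamics \eqref{eq:systemWithEmulationError} is precisely an LTI system of the form treated in Theorem~\ref{Thm:SystemEmulation_iSISS}, driven by the scalar spiking emulation error $e = Ky - u$, and then to chain the iSISS estimate of that theorem with the $\norm{\cdot}_\star$-bound on $e$ supplied by Theorem~\ref{Thm:BoundedEmulationErrorIntegral}. No genuinely new argument is needed; the result is essentially a composition of the two preceding main theorems combined with the fact that the emulation error starts at zero.

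First I would rewrite \eqref{eq:systemWithEmulationError} as $\dot{\tilde{x}} = F\tilde{x} + Ge$ with $F := \bar{A}$ and $G := -B$, where $F$ is Hurwitz by Assumption~\ref{Ass:outputFeedbackControl}. To apply Theorem~\ref{Thm:SystemEmulation_iSISS} I must check that $e \in \mathcal{S}^1$: this is exactly the observation made after Definition~\ref{Def:SpikingSignalDefinition}, namely that $Ky$ is Lebesgue measurable and locally essentially bounded, that $u$ from \eqref{eq:SpikingInput} is a sum of Dirac pulses whose spiking instants do not accumulate (by Proposition~\ref{Prop:InterSpikingTime}, so $t_i \to \infty$), and that $\norm{e}_\star < +\infty$ by Theorem~\ref{Thm:BoundedEmulationErrorIntegral}.

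Applying Theorem~\ref{Thm:SystemEmulation_iSISS} with this $F$, $G$ and $v = e$ then yields, for all $t \in \Rlo$,
\[
  |\tilde{x}(t)| \leq c e^{-\lambda t}|\tilde{x}(0)| + \tilde{\gamma}\,\norm{e}_\star,
\]
where $\tilde{\gamma} = \norm{-B} + \int_{0}^{\infty} \norm{\bar{A}e^{\bar{A}s}(-B)}\,ds = \norm{B} + \int_{0}^{\infty}\norm{\bar{A}e^{\bar{A}s}B}\,ds = \gamma$, since the induced $2$-norm is invariant under a sign flip. The decisive simplification is that $\tilde{x}(0) = x_0 - x_0 = 0$, so the transient ($\beta$-)term vanishes identically and $|\tilde{x}(t)| \leq \gamma\,\norm{e}_\star$ for all $t \in \Rlo$.

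Finally I would invoke the special case $K_1 = K_2 = K$ of Theorem~\ref{Thm:BoundedEmulationErrorIntegral}, i.e.\ the bound \eqref{eq:boundEmulationErrorCorollary}, which gives $\norm{e}_\star = \sup_{t}\left|\int_{0}^{t} (Ky(s) - u(s))\,ds\right| \leq \alpha_1 + \alpha_2$, and hence \eqref{eq:Th3withSum-variant}; when additionally $\xi_1(0) = \xi_2(0) = 0$, the sharper bound \eqref{eq:boundEmulationErrorTheoremMax} gives $\norm{e}_\star \leq \max\{\alpha_1,\alpha_2\}$, yielding \eqref{eq:Th3withMax-variant}. The only points deserving care are the bookkeeping identifying the constant $\tilde{\gamma}$ of Theorem~\ref{Thm:SystemEmulation_iSISS} with the stated $\gamma$ and confirming membership $e \in \mathcal{S}^1$ so that the iSISS estimate is applicable; beyond that, I expect no real obstacle.
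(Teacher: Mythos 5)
Your proposal is correct and follows essentially the same route as the paper's own proof: identify \eqref{eq:systemWithEmulationError} as the Hurwitz LTI system of Theorem~\ref{Thm:SystemEmulation_iSISS} driven by $e = Ky - u \in \mathcal{S}^1$, note $\tilde{x}(0)=0$ so the $\beta$-term drops, and bound $\norm{e}_\star$ via the $K_1=K_2=K$ case of Theorem~\ref{Thm:BoundedEmulationErrorIntegral}. Your explicit verification that $e\in\mathcal{S}^1$ and the sign-flip bookkeeping for $G=-B$ are points the paper handles in the surrounding text rather than inside the proof, but the argument is the same.
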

All technical proofs are in the Appendix, except the proof of this theorem and the next corollary as they provide insights in our conceptual approach relying on  Theorem~\ref{Thm:BoundedEmulationErrorIntegral} and Theorem~\ref{Thm:SystemEmulation_iSISS}. 

\begin{proof}
	Let all conditions of Theorem \ref{Thm:StabilityTheoremEmulation-variant} hold. Using $\tilde{x} = x-\bar{x}$, from \eqref{eq:systemWithEmulationError} we have 
	\begin{equation}
		\dot{\tilde{x}} = \bar{A} \tilde x - Be,
		\label{eq:systemEmulationErrorInProof}
	\end{equation}
	where $\bar{A} = (A + BKC) \in \R^{n_x \times n_x}$ is Hurwitz with $\displaystyle K = \frac{\alpha_\ell}{\Delta_\ell}$, $\ell \in \{1,2\}$, from Assumption \ref{Ass:outputFeedbackControl}, and $e = Ky- u$, where $Ky = \max{\{0,Ky\}} - \max{\{0,-Ky\}}$ and $u$ in \eqref{eq:SpikingInput}. 
    Since $x(0) = \bar x(0)$, it holds that $\tilde x(0) = 0$. From Theorem~\ref{Thm:SystemEmulation_iSISS} we obtain, for all  $t \in \R_{\geq0}$, 
    	\begin{equation}
		\begin{aligned}
			|\tilde x(t)| 
			&\leq \left(\norm{B} + \int_{0}^{\infty} \norm{\bar A e^{\bar As} B}ds \right)  \norm{e}_\star = \gamma\norm{e}_\star.\\
		\end{aligned}
		\label{eq:StrongIISSLTI_stateEquation4_bisBis}
	\end{equation}
	%
	Moreover, from Theorem \ref{Thm:BoundedEmulationErrorIntegral} we have that for all $t \in \R_{\geq 0}$, 
		$\left| \int_{0}^{t} e(s)ds \right| \leq \alpha_1 + \alpha_2$. 
	Therefore
	\begin{equation}
		||e||_\star = \sup_{t \in \R_{\geq 0}}\left| \int_{0}^{t} e(s)ds \right| \leq  
        \alpha_1 + \alpha_2, 
        \label{eq:ProofTh3_sum}
	\end{equation}
	%
	which, from \eqref{eq:StrongIISSLTI_stateEquation4_bisBis} implies
	$|\tilde x(t)| 
	\leq \gamma (\alpha_1 + \alpha_2)$. 

   Similarly, in case $\xi_1(0) = \xi_2(0) = 0$, from Theorem \ref{Thm:BoundedEmulationErrorIntegral}, \eqref{eq:ProofTh3_sum} can be replaced by 
		$||e||_\star  \leq \max{\{\alpha_1, \alpha_2\}},$  
    which, combined with \eqref{eq:StrongIISSLTI_stateEquation4_bisBis}, implies \eqref{eq:Th3withMax-variant}.
	This concludes the proof. 
\end{proof}

Theorem \ref{Thm:StabilityTheoremEmulation-variant} shows that the trajectories $x$ of the spiky closed-loop system and the corresponding trajectories $\bar x$ to the  ``ideal" loop $\dot{\bar x} = \bar A \bar x$ for the same initial condition remain close according to the bounds in \eqref{eq:Th3withSum-variant} or \eqref{eq:Th3withMax-variant} for all times. Recall that, as discussed after Theorem \ref{Thm:BoundedEmulationErrorIntegral}, we can tune the bound on the state emulation error $\tilde{x}$ by making $\alpha_\ell$, $\ell \in \{1,2\}$, small. As a consequence, using the definition of $\tilde{x} = x-\bar x$, in the next corollary we can ensure a practical stability property of system \eqref{eq:system} with spiking input \eqref{eq:SpikingInput}.

\begin{cor}
	Consider system \eqref{eq:system} with $n_u = n_y =1$ and suppose that Assumption \ref{Ass:outputFeedbackControl} holds with $K \in \R_{>0}$. Let $u$ be given by \eqref{eq:SpikingInput} using \eqref{eq:thresholdFlow}-\eqref{eq:SpikingInput_EachNeuron} with
	$\alpha_\ell \in \R_{> 0}$ and $\Delta_\ell \in \R_{> 0}$, $\ell \in \{1,2\}$, in \eqref{eq:triggeringRuleSpikingTimesEachNeuron}, \eqref{eq:SpikingInput_EachNeuron} such that $\displaystyle K = \frac{\alpha_\ell}{\Delta_\ell}$, $\ell \in \{1,2\}$. Then, for any $x(0) = x_0 \in \R^{n_x}$, and any $\xi_\ell(0) \in [0, \Delta_\ell)$, $\ell \in \{1,2\}$, for all $t \in \R_{\geq 0}$, 
	\begin{equation}
		|x(t)|\leq \beta(|x_0|, t) + \gamma (\alpha_1 + \alpha_2), 
        \label{eq:Th3withSum}
	\end{equation}
	with $\beta(r,s) = c e^{-\lambda s}r $, $r,s \in \R_{\geq 0}$, for some $c, \lambda \in \R_{> 0}$, and 
	$\gamma:= \norm{B} +\int_{0}^{\infty} \norm{\bar Ae^{\bar As} B} ds \in \R_{\geq 0}$. Moreover, if $\xi_1(0) = \xi_2(0) = 0$, then for any $x(0) = x_0 \in \R^{n_x}$, for all $t \in \R_{\geq 0}$, 
    \begin{equation}
		|x(t)|\leq \beta(|x_0|, t) + \gamma \max{\{\alpha_1, \alpha_2\}}. 
        \label{eq:Th3withMax}
	\end{equation}
	\label{Thm:StabilityTheoremEmulation}
\end{cor}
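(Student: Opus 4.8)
The plan is to reduce this corollary to Theorem~\ref{Thm:StabilityTheoremEmulation-variant} via the triangle inequality, since the state emulation error is already controlled there and the ideal trajectory decays exponentially by the Hurwitz assumption. First I would write $x(t) = \tilde{x}(t) + \bar{x}(t)$, using the definition $\tilde{x} = x - \bar{x}$ from \eqref{eq:systemWithEmulationError}, where $\bar{x}$ is the solution of the ideal closed loop \eqref{eq:ClosedLoopToEmulate} from $\bar{x}(0) = x_0$. The triangle inequality then gives, for all $t \in \R_{\geq 0}$,
\begin{equation}
	|x(t)| \leq |\tilde{x}(t)| + |\bar{x}(t)|. \label{eq:corTriangle}
\end{equation}

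The first term is bounded directly by Theorem~\ref{Thm:StabilityTheoremEmulation-variant}: under the stated hypotheses (in particular $K = \alpha_\ell/\Delta_\ell$, $\ell \in \{1,2\}$, and $\xi_\ell(0) \in [0,\Delta_\ell)$), we have $|\tilde{x}(t)| \leq \gamma(\alpha_1 + \alpha_2)$ with $\gamma := \norm{B} + \int_0^\infty \norm{\bar{A} e^{\bar{A}s} B}\, ds$. For the second term, I would invoke the standard property that, since $\bar{A}$ is Hurwitz by Assumption~\ref{Ass:outputFeedbackControl}, there exist $c, \lambda \in \R_{>0}$ such that $\norm{e^{\bar{A}t}} \leq c e^{-\lambda t}$ for all $t \in \R_{\geq 0}$. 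Because $\bar{x}(t) = e^{\bar{A}t} x_0$ solves \eqref{eq:ClosedLoopToEmulate}, this yields $|\bar{x}(t)| \leq c e^{-\lambda t} |x_0| = \beta(|x_0|, t)$ with $\beta(r,s) := c e^{-\lambda s} r \in \KL$.

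Substituting both bounds into \eqref{eq:corTriangle} gives $|x(t)| \leq \beta(|x_0|, t) + \gamma(\alpha_1 + \alpha_2)$, which is \eqref{eq:Th3withSum}. For the sharpened statement \eqref{eq:Th3withMax}, I would repeat the argument verbatim in the special case $\xi_1(0) = \xi_2(0) = 0$, where Theorem~\ref{Thm:StabilityTheoremEmulation-variant} supplies the improved bound $|\tilde{x}(t)| \leq \gamma \max\{\alpha_1, \alpha_2\}$; combining this with the same exponential bound on $\bar{x}$ yields the claim.

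There is no serious obstacle here, as the corollary is essentially a triangle-inequality repackaging of Theorem~\ref{Thm:StabilityTheoremEmulation-variant} together with the exponential stability of the emulated loop. The only point requiring any care is ensuring that the constants $c, \lambda$ furnished by the Hurwitz bound on $e^{\bar{A}t}$ are the \emph{same} as those already appearing in the $\beta$ of Theorem~\ref{Thm:SystemEmulation_iSISS}, so that the $\beta \in \KL$ stated in the corollary is consistent; this is immediate since both originate from the same matrix exponential estimate for $\bar{A}$. The practical-stability interpretation then follows by noting that $\gamma(\alpha_1 + \alpha_2)$ (respectively $\gamma \max\{\alpha_1,\alpha_2\}$) can be made arbitrarily small by choosing the spike amplitudes $\alpha_\ell$ small while keeping $K = \alpha_\ell/\Delta_\ell$ fixed.
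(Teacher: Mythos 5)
Your proposal is correct and follows exactly the paper's own proof: decompose $x = \bar{x} + \tilde{x}$, apply the triangle inequality, bound $|\tilde{x}(t)|$ by Theorem~\ref{Thm:StabilityTheoremEmulation-variant}, and bound $|\bar{x}(t)|$ by the exponential decay $|\bar x(t)| \leq c e^{-\lambda t}|x_0|$ coming from $\bar A$ being Hurwitz. No gaps; nothing further is needed.
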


\begin{proof}
    The proof follows directly from Theorem \ref{Thm:StabilityTheoremEmulation-variant} providing a bound on $\tilde{x} = x-\bar x$ and the asymptotic stability of $\dot{\bar x}= \bar A \bar x$. 
    Indeed, for all $t \in \R_{\geq0}$, we get 
    \begin{equation}
    \begin{aligned}
        |x(t)| &= |\bar x(t) + \tilde{x} (t)| 
        \leq |\bar x(t)| + |\tilde{x} (t)|. 
    \end{aligned}
    \label{eq:Cor1Proof}
    \end{equation}
    In view of Assumption \ref{Ass:outputFeedbackControl}, $\bar A = A+BKC$ is Hurwitz and thus, there exist $c,\lambda \in \R_{>0}$ such that $|\bar x(t)| \leq c e^{-\lambda t} |\bar x(0)|= \beta(|\bar x(0)|, t)$, for all $t \in \R_{\geq0}$. Using $\bar x(0) = x(0) = x_0 \in \R^{n_x}$, and the bound on $|\tilde{x}|$ from \eqref{eq:Th3withSum-variant} and \eqref{eq:Th3withMax-variant}, from \eqref{eq:Cor1Proof}, we obtain  \eqref{eq:Th3withSum} and \eqref{eq:Th3withMax}, respectively. 
\end{proof}

\color{black}

Corollary~\ref{Thm:StabilityTheoremEmulation} guarantees that the state of the closed-loop system \eqref{eq:system}, \eqref{eq:SpikingInput}, with \eqref{eq:SpikingInput} generated by a two-neuron SNN designed to emulate a stabilizing static-output feedback controller (see Assumption~\ref{Ass:outputFeedbackControl}), converges to a neighborhood of the origin, a ball of radius $\gamma (\alpha_1+\alpha_2)$ (or $\gamma \max\{\alpha_1, \alpha_2\}$ if $\xi_\ell(0)=0$, $\ell \in \{1,2\}$), as time grows.  Interestingly, note that the $\beta$ in the corollary is related only to the dynamics of $\dot{\bar x}=\bar A \bar x$ and, hence, it inherits its properties, e.g., convergence rates, in a practical sense.

It is worth noticing again that, since the only condition required in Corollary~\ref{Thm:StabilityTheoremEmulation} is $K= \alpha_\ell/\Delta_\ell$, $\ell \in \{1,2\}$, it is possible to select $\alpha_\ell$ and $\Delta_\ell$ such that the ultimate bound $\gamma (\alpha_1+\alpha_2)$ (or $\gamma \max\{\alpha_1, \alpha_2\}$) is arbitrarily small by choosing $\alpha_\ell$, $\ell \in \{1,2\}$, arbitrarily small, at the cost of also picking the firing threshold $\Delta_\ell$ arbitrarily small, which implies that the spikes are typically generated more frequently, 
see also the proof of Proposition~\ref{Prop:InterSpikingTime}. 
In view of the structure of the spiking controller, which leaves the system plant in open loop between spikes, if the matrix $A$ in \eqref{eq:system} is unstable, an asymptotic stability property cannot be obtained and thus a practical stability property is the best we can ensure. 

\color{black}

Theorem~\ref{Thm:StabilityTheoremEmulation-variant} (and Corollary~\ref{Thm:StabilityTheoremEmulation}) and its proof rely conceptually on the signal-based universal approximation property guaranteed by Theorem~\ref{Thm:BoundedEmulationErrorIntegral} and on the iSISS property proven in Theorem~\ref{Thm:SystemEmulation_iSISS}. Following similar steps, it is possible to prove a practical stability property for any iSISS system in closed loop with a spiking controller satisfying the bound on the emulation error in Theorem \ref{Thm:BoundedEmulationErrorIntegral}. Thus, the two-step approach we have presented in this paper has potential to be generalized for larger classes of systems and/or different type of neurons, as long as the properties in Theorems~\ref{Thm:BoundedEmulationErrorIntegral} and~\ref{Thm:SystemEmulation_iSISS} are ensured. 

In the next section, we discuss the generalization to MIMO LTI systems. 

\begin{rem}
    From an implementation perspective, small perturbations in the neurons' parameters $\alpha_\ell$ and $\Delta_\ell$, $\ell \in \{1,2\}$, do not compromise the practical stability property ensured by Corollary \ref{Thm:StabilityTheoremEmulation}. Indeed, such perturbations imply that the spiking input generated by the neuromorphic controller  emulates a static output feedback control input $\tilde{K}y$, where $\tilde{K}$ remains close to the original gain $K$ from Assumption \ref{Ass:outputFeedbackControl}. Hence, as long as $A+ B\tilde{K}C$ is Hurwitz, a practical stability property is preserved. As eigenvalues of $A+B\tilde{K}C$ depends continuously on $\tilde{K}$, this shows an intrinsic robustness property of the proposed approach for stabilizing LTI systems. 
\end{rem}

\section{Generalizations}\label{Generalizations}
In this section we discuss generalizations and extensions of the presented results. In Section \ref{ExtensionsMIMO} we consider multi-input multi-output (MIMO) LTI systems, emulating static output (or state) feedback control, and we discuss how the network has to be modified in this case to ensure a practical stability property. 
In Section~\ref{subsect:UniversalApproximationPiecewise} we generalize the approximation property illustrated in Section~\ref{IntegralEmulationErrorBound} to approximate any continuous piecewise affine function using an integrate-and-fire spiking neuronal network (SNN) with more than two neurons. This property will allow us to emulate a larger class of static controllers, which could be useful to globally stabilize nonlinear systems.  Note that the approximation property of integrate-and-fire SNNs in Sections~\ref{IntegralEmulationErrorBound}  and~\ref{subsect:UniversalApproximationPiecewise} are independent on the  system we aim to stabilize. Thus, if solutions to a nonlinear plant with spiking input signal are well defined, we just need an iSISS property, as in Definition~\ref{Def:StrongIntegralISSDefinition}, and then we can apply the proposed technique to approximate a globally stabilizing continuous-time control input with a spiking signal.

\subsection{MIMO LTI systems} 
In Sections \ref{SystemAndController}-\ref{PracticalStabilitySection} we focused on SISO LTI systems. We now generalize the results for MIMO LTI systems, 
\label{ExtensionsMIMO}
e.g., \eqref{eq:system} with $u:=(u_1, u_2, \dots, u_{n_u}) \in \R^{n_u}$, $y:=(y_1, y_2, \dots, y_{n_y}) \in \R^{n_y}$, with $n_u, n_y \in \Zp$, for which a MIMO version of Assumption~\ref{Ass:outputFeedbackControl} is satisfied. 
In this case, to prove a practical stability property, the control input to emulate is given by
$\hat u := Ky$, with 
\begin{equation}
K = \begin{bmatrix}
K_{1,1} & K_{1,2}  &\dots  & K_{1, n_y}\\
K_{2,1} & K_{2,2} & \dots &K_{2, n_y}\\
\vdots & \vdots & &\vdots\\
K_{n_u,1} & K_{n_u,2}  &\dots  & K_{n_u, n_y}\\
\end{bmatrix}\in \R^{n_u\times n_y}.
\label{eq:MatrixMIMO}
\end{equation}
Each component of the control input $\hat u = (\hat u_1, \hat u_2, \dots, \hat u_{n_u}) \in \R^{n_u}$ we want to emulate is given by 
\begin{equation}
\hat u_i = \sum_{j = 1}^{n_y}K_{i,j} y_j, \quad  i \in \{1,2, \dots, n_u\}.
\label{eq:inputToEmulate_MIMO}
\end{equation}
We can therefore design a network composed of $2 n_u n_y$ integrate-and-fire neurons, with dynamics, similarly to \eqref{eq:thresholdFlow}, 
\begin{equation}
\dot{\xi}_{\ell, i,j} = \max\{0, (3-2\ell) y_j\}, 
\label{eq:NeuronDynamicsMIMO}
\end{equation}
with $\xi_{\ell, i,j}(0) \in [0, \Delta_{\ell,i,j})$, $\Delta_{\ell, i,j} \in \R_{> 0}$, $\ell \in \{1,2\}$, $i \in \{1,2,\dots, n_u\}$ and $j \in \{1,2,\dots, n_y\}$. Neuron $\xi_{\ell, i,j}$ generates spikes with amplitude $\alpha_{\ell,i,j} \in \R_{> 0}$ whenever $\xi_{\ell, i,j} \geq \Delta_{\ell, i,j}$, similarly to \eqref{eq:thresholdJump1}. 
For each $i \in \{1,2,\dots, n_u\}$, the spiking output of the $2 n_y$ neurons (with the same $i$) is summed up to obtain a spiking input $u_i$. 
 Note that, since the neuron dynamics in \eqref{eq:NeuronDynamicsMIMO} is the same as in \eqref{eq:thresholdFlow} and the input to the neuron is 
 $y_j \in \mathcal{L}_\R$, $j \in \{1,2,\dots, n_y\}$, a result similar to the one in Proposition~\ref{Prop:InterSpikingTime} can be proven for each neuron of the network. Thus, since the number of neurons is finite (equal to $2n_u n_y$), the signal $u_i$, $i \in \{1,2,\dots, n_u\}$, is such that no accumulation of spiking times can happen, and is a spiking signal, according to Definition~\ref{Def:SpikingSignalDefinition}. 
By selecting the neurons parameters such that $K_{i,j} = \frac{\alpha_{\ell, i,j}}{\Delta_{\ell,i,j}}$ and using Theorem~\ref{Thm:BoundedEmulationErrorIntegral} 
for each pair of neurons (with same $i$ and $j$), we can prove that $u_i$ emulates $\hat u_i$ in \eqref{eq:inputToEmulate_MIMO} in the sense that, for each $i \in \{1,2,\dots, n_u\}$, for all $t \in \R_{\geq 0}$,
\begin{equation}
\left|\int_{0}^{t} \hat u_i(s) -u_i(s) ds\right| \leq \sum_{j = 1}^{n_y}(\alpha_{1, i,j} + \alpha_{2, i,j}). 
\label{eq:boundMIMO_i}
\end{equation}
Recalling the definition of the emulation error $e = \hat u - u \in \mathcal{S}^{n_u}$ and using \eqref{eq:boundMIMO} for each $i \in \{1,2,\dots, n_u\}$, we obtain 
\begin{equation}
	\norm{e}_\star \leq \sqrt{\sum_{i =1}^{n_u} \left(\sum_{j = 1}^{n_y}(\alpha_{1, i,j} + \alpha_{2, i,j})\right)^2}.
	\label{eq:boundMIMO}
\end{equation}
Note that, similarly to Theorem \ref{Thm:BoundedEmulationErrorIntegral}, the bound on the emulation error depends on the spikes' amplitudes $\alpha_{\ell, i,j} \in \R_{> 0}$, and thus can be made arbitrarily small if there is freedom in selecting the thresholds $\Delta_{\ell,i,j}$. 

The bound on the emulation error \eqref{eq:boundMIMO} can be combined with the iSISS property in Section \ref{StrongIntegralISS} and, following similar steps as in Section \ref{PracticalStabilitySection}, a small state emulation error $\tilde{x}$ between $x$ and the solution to $\dot{\bar x} = (A+BKC) \bar x$ can be guaranteed (similarly to Theorem \ref{Thm:StabilityTheoremEmulation-variant}) next to a practical stability property of the closed-loop system (as in Corollary \ref{Thm:StabilityTheoremEmulation} using \eqref{eq:boundMIMO}). 

\begin{rem} \label{rem:efficiency}
    The MIMO implementation consists of a one-layer integrate-and-fire SNN with $2n_un_y$ neurons, where only unit gains are used in the signals. The network structure is as the one in Fig. \ref{Fig:blockDiagram_op2} repeated $n_u n_y$ times. In particular, each neuron has only the positive or negative part of a signal $y_j$ as input. In case we allow non-unit gains and, hence, weighted combinations of the measured outputs $y_1,y_2,\ldots,y_{n_y}$ to flow into the neuron dynamics, we can use our emulation-based approach to directly emulate the signals $\hat u_i =K_{i\bullet} y$ in \eqref{eq:inputToEmulate_MIMO} by a pair of neurons with spike amplitudes $\alpha_{1,i}$ and $\alpha_{2,i}$.
    Here $K_{i\bullet}$ denotes the $i$-th row of $K$, $i=1,\ldots, n_u$. 
    Note that in this case we take the spike amplitudes $\alpha_{1,i}$, $\alpha_{2,i}$ and firing thresholds $\Delta_{1,i}$, $\Delta_{2,i}$ such that $\frac{\alpha_{\ell, i}}{\Delta_{\ell,i}}=1$, $\ell\in \{1,2\}$, $i\in\{1,2,\ldots, n_u\}$ (such that the neurons have ``unit gain''). This leads to a one-layer integrate-and-fire SNN with $2 n_u$   neurons  that have as inputs the positive or negative parts of $K_{i\bullet} y$ (so weighted versions of the inputs). This would reduce not only the number of neurons, but also 
 changes the bound \eqref{eq:boundMIMO_i} to 
    \begin{equation}
\left|\int_{0}^{t} \hat u_i(s) -u_i(s) ds\right| \leq \alpha_{1, i} + \alpha_{2,i}, 
\label{eq:boundMIMO_i-nonunitgain}
\end{equation}
which is significantly smaller. The lower number of neurons would overall also generate less spikes. This underlines that the emulation results in this paper can be used for various neural network structures (with or without unit gains). Depending on the network structure and also the controller gain $K$ that is emulated, the number of spikes can be smaller or larger. Optimal designs of the network structure and the choice of the to-be-emulated gain $K$ (and values of firing thresholds and spike amplitudes) trading off convergence rates, ultimate error bounds and number of spikes is an interesting topic, left for future work. 
\end{rem}

\begin{rem}
The generalization presented in this section to emulate static output feedback controllers to stabilize MIMO LTI plants can be applied \emph{mutatis mutandi} to emulate a static state feedback control input given by $\hat u =Kx$,  with $K \in \R^{n_u \times n_x}$, using the neuromorphic controller. Indeed, in this case, we can simply write \eqref{eq:MatrixMIMO}-\eqref{eq:boundMIMO} for $y = x \in \R^{n_x}$. 
Moreover, when the full state is measured, i.e., $y = x$, if the pair $(A,B)$ in \eqref{eq:system} is stabilizable, then there exists $K \in \R^{n_u \times n_x}$ such that $(A+BK)$ is Hurwitz, namely $\hat u =Kx$ is a static state feedback control. 
\end{rem}

\subsection{Approximation property for larger networks and arbitrary continuous piecewise affine functions}\label{subsect:UniversalApproximationPiecewise}



\begin{figure}
	\begin{center}
		\tikzstyle{blockB} = [draw, fill=blue!30, rectangle, 
		minimum height=2em, minimum width=3em]  
		\tikzstyle{blockG} = [draw, fill=MyGreen!40, rectangle, 
		minimum height=2em, minimum width=3em]
		\tikzstyle{blockR} = [draw, fill=red!40, rectangle, 
		minimum height=2em, minimum width=3em]
		\tikzstyle{blockO} = [draw,minimum height=1.5em, fill=orange!20, minimum width=4em]
		\tikzstyle{input} = [coordinate]
		\tikzstyle{blockW} = [draw,minimum height=1.5em, fill=white!20, minimum width=1.5em]
		\tikzstyle{blockCircle} = [draw, circle]
		
		\begin{tikzpicture}[auto, node distance=2cm,>=latex , scale=0.8,transform shape] 
			
			\node [input, name=origin] {};
			\node [input, right of= origin, node distance=5.5cm] (xAxisRight){};
			\node [input, left of= origin, node distance=3cm] (xAxisLeft){};
			\node [input, above of= origin, node distance=2.7cm] (yAxisUp){};
			\node [input, below of= origin, node distance=0.8cm] (yAxisDown){};
			\draw [draw,->] (xAxisLeft) -- node [pos=0.95]{$y$} (xAxisRight);
			\draw [draw,->] (yAxisDown) -- node [pos=0.95]{$g(y)$} (yAxisUp);	
			\node [input, above of= origin, node distance=1cm] (c){};
			\draw [draw,-] (c) -- node [pos=0.5]{$c$} (c);	
			\draw [fill=black] (c) circle (0.05cm);
			\node [input, left of= origin, node distance=2.5cm] (b1){};
			\node [input, left of= origin, node distance=1.3cm] (b2){};
			\node [input, left of= origin, node distance=0.4cm] (b3){};
			\node [input, right of= origin, node distance=0.6cm] (b4){};
			\node [input, right of= origin, node distance=1.8cm] (b5){};
			\node [input, right of= origin, node distance=3.6cm] (b6){};
			\node [input, right of= origin, node distance=4.4cm] (b7){};
			\node [input, above of= b1, node distance=0.1cm] (b1Up){};
			\node [input, below of= b1, node distance=0.1cm] (b1Down){};
			\node [input, above of= b2, node distance=0.1cm] (b2Up){};
			\node [input, below of= b2, node distance=0.1cm] (b2Down){};
			\node [input, above of= b3, node distance=0.1cm] (b3Up){};
			\node [input, below of= b3, node distance=0.1cm] (b3Down){};
			\node [input, above of= b4, node distance=0.1cm] (b4Up){};
			\node [input, below of= b4, node distance=0.1cm] (b4Down){};
			\node [input, above of= b5, node distance=0.1cm] (b5Up){};
			\node [input, below of= b5, node distance=0.1cm] (b5Down){};
			\node [input, above of= b6, node distance=0.1cm] (b6Up){};
			\node [input, below of= b6, node distance=0.1cm] (b6Down){};
			\node [input, above of= b7, node distance=0.1cm] (b7Up){};
			\node [input, below of= b7, node distance=0.1cm] (b7Down){};
			\draw [draw,-] (b1Down) -- node [pos=0.1]{} (b1Up);	
			\draw [draw,-] (b2Down) -- node [pos=0.1]{} (b2Up);	
			\draw [draw,-] (b3Down) -- node [pos=0.1]{} (b3Up);	
			\draw [draw,-] (b4Down) -- node [pos=0.1]{} (b4Up);	
			\draw [draw,-] (b5Down) -- node [pos=0.1]{} (b5Up);	
			\draw [draw,-] (b6Down) -- node [pos=0.1]{} (b6Up);	
			\draw [draw,-] (b7Down) -- node [pos=0.1]{} (b7Up);	
			\node [input, below of= b1, node distance=0.3cm] (b1Name){};						
			\draw [draw,-] (b1Name) -- node [pos=0.1, right = -0.2cm]{$b_1$} (b1Name);	
			\node [input, below of= b2, node distance=0.3cm] (b2Name){};						
			\draw [draw,-] (b2Name) -- node [pos=0.1, right = -0.2cm]{$b_2$} (b2Name);	
			\node [input, below of= b3, node distance=0.3cm] (b3Name){};						
			\draw [draw,-] (b3Name) -- node [pos=0.1, right = -0.2cm]{$b_3$} (b3Name);	
			\node [input, below of= b4, node distance=0.3cm] (b4Name){};						
			\draw [draw,-] (b4Name) -- node [pos=0.1, right = -0.2cm]{$b_4$} (b4Name);	
			\node [input, below of= b5, node distance=0.3cm] (b5Name){};						
			\draw [draw,-] (b5Name) -- node [pos=0.1, right = -0.2cm]{$b_5$} (b5Name);	
			\node [input, below of= b6, node distance=0.3cm] (b6Name){};						
			\draw [draw,-] (b6Name) -- node [pos=0.1, right = -0.2cm]{$b_6$} (b6Name);	
			\node [input, below of= b7, node distance=0.3cm] (b7Name){};						
			\draw [draw,-] (b7Name) -- node [pos=0.1, right = -0.2cm]{$b_7$} (b7Name);

			\node [input, above of= b1, node distance=1cm] (b1Point){};  
			\node [input, above of= b2, node distance=1.8cm] (b2Point){};
			\node [input, above of= b3, node distance=1.8cm] (b3Point){};
			\node [input, above of= b4, node distance=0.5cm] (b4Point){};
			\node [input, above of= b5, node distance=0.7cm] (b5Point){};
			\node [input, above of= b6, node distance=-0.7cm] (b6Point){};
			\node [input, above of= b7, node distance=0.2cm] (b7Point){};
			
			\node [input, left of= b1, node distance=0.6cm] (Start){};
			\node [input, right of= b7, node distance=0.8cm] (End){};
			\node [input, above of= Start, node distance=-0.3cm] (StartPoint){};
			\node [input, above of= End, node distance=0.6cm] (EndPoint){};
			
			\draw [draw,-] (StartPoint) -- node []{$K_0$} (b1Point);	
			\draw [draw,-] (b1Point) -- node []{$K_1$} (b2Point);
			\draw [draw,-] (b2Point) -- node [pos=0.5]{$K_2$} (b3Point);
			\draw [draw,-] (b3Point) -- node []{$K_3$} (b4Point);
			\draw [draw,-] (b4Point) -- node []{$K_4$} (b5Point);
			\draw [draw,-] (b5Point) -- node []{$K_5$} (b6Point);
			\draw [draw,-] (b7Point) -- node [pos=0.65]{$K_6$} (b6Point);
			\draw [draw,-] (b7Point) -- node []{$K_7$} (EndPoint);
			\draw [dotted] (b1Point) -- node []{} (c);	
			\draw [dotted] (b1Point) -- node []{} (b1);	
		\end{tikzpicture}
	\end{center}
	\caption{Example of piecewise linear function $g$ in \eqref{eq:PWLfunction} with $N=7$.}
	\label{Fig:PiecewiseLinearFunction}
\end{figure}
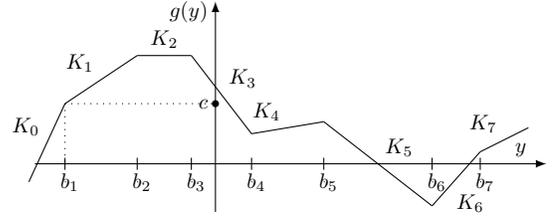

\begin{figure}
	\begin{center}
			\tikzstyle{blockB} = [draw, fill=blue!30, rectangle, 
			minimum height=2em, minimum width=3em]  
			\tikzstyle{blockG} = [draw, fill=MyGreen!40, rectangle, 
			minimum height=2em, minimum width=3em]
			\tikzstyle{blockR} = [draw, fill=red!40, rectangle, 
			minimum height=2em, minimum width=3em]
			\tikzstyle{blockO} = [draw,minimum height=1.5em, fill=orange!20, minimum width=4em]
			\tikzstyle{input} = [coordinate]
			\tikzstyle{blockW} = [draw,minimum height=1.5em, fill=white!20, minimum width=2em]
			\tikzstyle{blockWNeuron} = [draw,minimum height=1.5em, fill=white!20, minimum width=10.5em]
			\tikzstyle{input1} = [coordinate]
			\tikzstyle{blockCircle} = [draw, circle]
			\tikzstyle{sum} = [draw, circle, minimum size=.3cm]
			\tikzstyle{blockSensor} = [draw, fill=white!20, draw= blue!80, line width= 0.8mm, minimum height=10em, minimum width=13em]
			
			\begin{tikzpicture}[auto, node distance=2cm,>=latex , scale=0.75,transform shape] 
					
					\node [input, name=stateFeedback] {};
					\node [input, right of= stateFeedback, node distance=1cm] (stateFeedbackRight){};
					\node [input, above of= stateFeedbackRight, node distance=0.7cm] (stateFeedbackUp){};
					\node [input, below of= stateFeedbackRight, node distance=0.7cm] (stateFeedbackDown){};
					\node [input, above of= stateFeedbackRight, node distance=2.1cm] (stateFeedbackUpUp){};
					\node [input, below of= stateFeedbackRight, node distance=2.8cm] (stateFeedbackDownN){};
					\node [input, right of= stateFeedbackUpUp, node distance=1cm] (stateFeedbackUpUpRight){};
					\node [blockW, left of=stateFeedbackUpUpRight, node distance=0cm] (Neuron0GainSign) { 
							$-1$
						};

					\node [input, right of=stateFeedbackUp, node distance=0.2cm] (BetaCircle1){};
					\draw [fill=white] (BetaCircle1) circle (0.1cm);
					\node [input, left of=BetaCircle1, node distance=0.1cm] (BetaCircle1Start){};
					\node [input, right of=BetaCircle1, node distance=0.1cm] (BetaCircle1End){};
					\node [input, right of=BetaCircle1, node distance=1cm] (BetaCircle1Right){};
					\node [input, above of=BetaCircle1, right = 0cm, node distance=0.6cm] (BetaCircle1Up){};
					\node [input, above of=BetaCircle1, right = 0cm, node distance=0.1cm] (BetaCircle1Up2){};
					\draw [draw,->] (BetaCircle1Up) -- node [pos=0.01]{$-b_1$} (BetaCircle1Up2);
					
					\node [input, right of=stateFeedbackDown, node distance=0.2cm] (BetaCircle2){};
					\draw [fill=white] (BetaCircle2) circle (0.1cm);
					\node [input, left of=BetaCircle2, node distance=0.1cm] (BetaCircle2Start){};
					\node [input, right of=BetaCircle2, node distance=0.1cm] (BetaCircle2End){};
					\node [input, right of=BetaCircle2, node distance=1cm] (BetaCircle2Right){};
					\node [input, above of=BetaCircle2, right = 0cm, node distance=0.6cm] (BetaCircle2Up){};
					\node [input, above of=BetaCircle2, right = 0cm, node distance=0.1cm] (BetaCircle2Up2){};
					\draw [draw,->] (BetaCircle2Up) -- node [pos=0.01]{$-b_2$} (BetaCircle2Up2);
					
					\node [input, right of=stateFeedbackDownN, node distance=0.2cm] (BetaCircleN){};
					\draw [fill=white] (BetaCircleN) circle (0.1cm);
					\node [input, left of=BetaCircleN, node distance=0.1cm] (BetaCircleNStart){};
					\node [input, right of=BetaCircleN, node distance=0.1cm] (BetaCircleNEnd){};
					\node [input, right of=BetaCircleN, node distance=1cm] (BetaCircleNRight){};
					\node [input, above of=BetaCircleN, right = 0cm, node distance=0.6cm] (BetaCircleNUp){};
					\node [input, above of=BetaCircleN, right = 0cm, node distance=0.1cm] (BetaCircleNUp2){};
					\draw [draw,->] (BetaCircleNUp) -- node [pos=0.01]{$-b_N$} (BetaCircleNUp2);
					
					\node [input, right of=BetaCircle1End, node distance=0.7cm] (BetaCircle1EndRight){};
					
					\node [blockWNeuron, right of=stateFeedbackUpUp, node distance=4.7cm] (Neuron0) { 
							$\begin{array}{c}
									\text{Neuron } 0\\  (\tilde{\xi}_0, \tilde{\Delta}_0,  \tilde{\alpha}_0)
								\end{array}$
						};
					
					\node [blockWNeuron, right of=stateFeedbackUp, node distance=4.7cm] (Neuron1) { 
							$\begin{array}{c}
									\text{Neuron } 1\\  (\tilde{\xi}_1,  \tilde{\Delta}_1, \tilde{\alpha}_1)
								\end{array}$
						};
					\node [blockWNeuron, right of=stateFeedbackDown, node distance=4.7cm] (Neuron2) { 
							$\begin{array}{c}
									\text{Neuron } 2\\  (\tilde{\xi}_2,  \tilde{\Delta}_2, \tilde{\alpha}_2)
								\end{array}$
						};
					
					\node [blockWNeuron, right of=stateFeedbackDownN, node distance=4.7cm] (NeuronN) { 
							$\begin{array}{c}
									\text{Neuron } N\\  (\tilde{\xi}_N,  \tilde{\Delta}_N, \tilde{\alpha}_N)
								\end{array}$
						};
						
					\node [blockWNeuron, below of=NeuronN, node distance=1.4cm] (NeuronC) { 
							$\begin{array}{c}
								\text{Neuron } N+1\\  (\tilde{\xi}_{N+1},  \tilde{\Delta}_{N+1}, \tilde{\alpha}_{N+1})
							\end{array}$
					};
						

					\node [blockWNeuron, below of=NeuronC, node distance=1.4cm] (NeuronC2) { 
						$\begin{array}{c}
							\text{Neuron } N+2\\  (\tilde{\xi}_{N+2},  \tilde{\Delta}_{N+2}, \tilde{\alpha}_{N+2})
						\end{array}$
					};
					\node [input, left of= NeuronC, node distance=4.7cm] (ConstantInput){};
					\node [blockW, left of=NeuronC2, node distance=3.0cm] (cSign) { 
						-1
					};
					\node [input, left of= cSign, node distance=1cm] (ConstantInputLeftBelow){};
					\node [input, above of= ConstantInputLeftBelow, node distance=1.4cm] (ConstantInputBelow){};
					\draw [draw,->] (ConstantInput) -- node [pos=0.5]{$c$} (NeuronC);
					\draw [draw,-] (ConstantInputBelow) -- node [pos=0.5]{} (ConstantInputLeftBelow);
					\draw [draw,-] (ConstantInputLeftBelow) -- node [pos=0.5]{} (cSign);
					\draw [draw,->] (cSign) -- node [pos=0.5]{$-c$} (NeuronC2);
					
					
					\node at ($(Neuron2)!.45!(NeuronN)$) {\large \vdots};
					
					\draw [draw,-] (stateFeedback) -- node [pos=0.5]{$y$} (stateFeedbackRight);
					\draw [draw,-] (stateFeedbackRight) --  (stateFeedbackUp);
					\draw [draw,-] (stateFeedbackRight) --  (stateFeedbackDown);
					\draw [draw,-] (stateFeedbackDown) --  (stateFeedbackDownN);
					\draw [draw,-] (stateFeedbackUp) -- node {} (BetaCircle1Start);
					\draw [draw,->] (BetaCircle1End) -- node [pos=0.6]{$y -b_1$} (Neuron1);
					\draw [draw,-] (stateFeedbackDown) -- node {} (BetaCircle2Start);
					\draw [draw,->] (BetaCircle2End) -- node [pos=0.6]{$y -b_2$} (Neuron2);
					\draw [draw,-] (stateFeedbackDownN) -- node {} (BetaCircleNStart);
					\draw [draw,->] (BetaCircleNEnd) -- node [pos=0.6]{$y -b_N$} (NeuronN);
					\draw [draw,->] (Neuron0GainSign) -- node {$-(y -b_1)$} (Neuron0);
					\draw [draw,->] (BetaCircle1EndRight) -- node {} (Neuron0GainSign);
					
					\node [input, right of= BetaCircle2End, node distance=1.7cm] (BetaCircle2EndDots){};
					\node [input, right of= BetaCircleNEnd, node distance=1.7cm] (BetaCircleNEndDots){};
					\node at ($(BetaCircle2EndDots)!.45!(BetaCircleNEndDots)$) {\large \vdots};
					
					\node [input, right of= Neuron0, node distance=4.5cm] (Neuron0End){};
					\node [input, right of= Neuron1, node distance=4.5cm] (Neuron1End){};
					\node [input, right of= Neuron2, node distance=4.5cm] (Neuron2End){};
					\node [input, right of= NeuronN, node distance=4.5cm] (NeuronNEnd){};
					\node [input, right of= NeuronC, node distance=4.5cm] (NeuronCEnd){};
					\node [input, right of= NeuronC2, node distance=4.5cm] (NeuronC2End){};
					\node [blockW, right of=Neuron0, node distance=3.8cm] (Neuron0Gain) { 
							$\mathcal{G}_0$
						};
					\node [blockW, right of=Neuron1, node distance=3.8cm] (Neuron1Gain) { 
							$\mathcal{G}_1$
						};
					\node [blockW, right of=Neuron2, node distance=3.8cm] (Neuron2Gain) { 
							$\mathcal{G}_2$
						};
					\node [blockW, right of=NeuronN, node distance=3.8cm] (NeuronNGain) { 
							$\mathcal{G}_N$
						};
					\node [blockW, right of=NeuronC, node distance=3.8cm] (NeuronCGain) { 
						$1$
					};
					
					\node [blockW, right of=NeuronC2, node distance=3.8cm] (NeuronC2Gain) { 
						$-1$
					};

					\node at ($(Neuron2Gain)!.45!(NeuronNGain)$) {\large \vdots};
					
					\draw [draw,->] (Neuron0) -- node {} (Neuron0Gain);
					\draw [draw,-] (Neuron0Gain) -- node {} (Neuron0End);
					\draw [draw,->] (Neuron1) -- node {} (Neuron1Gain);
					\draw [draw,-] (Neuron1Gain) -- node {} (Neuron1End);
					\draw [draw,->] (Neuron2) -- node {} (Neuron2Gain);
					\draw [draw,-] (Neuron2Gain) -- node {} (Neuron2End);
					\draw [draw,->] (NeuronN) -- node {} (NeuronNGain);
					\draw [draw,-] (NeuronNGain) -- node {} (NeuronNEnd);
					\draw [draw,->] (NeuronC) -- node {} (NeuronCGain);
					\draw [draw,-] (NeuronCGain) -- node {} (NeuronCEnd);
					\draw [draw,->] (NeuronC2) -- node {} (NeuronC2Gain);
					\draw [draw,-] (NeuronC2Gain) -- node {} (NeuronC2End);
					
					\node [input, below of= Neuron1End, node distance=0.6cm] (Neuron1Point){};
					\node [input, above of= Neuron2End, node distance=0.6cm] (Neuron2Point){};
					\node [input, below of= Neuron1Point, node distance=0.1cm] (NeuronCircle){};
					\draw [draw,-] (Neuron0End) -- node {} (Neuron1End);
					\draw [draw,->] (Neuron1End) -- node {} (Neuron1Point);
					\draw [draw,->] (Neuron2End) -- node {} (Neuron2Point);
					\draw [draw,-] (NeuronC2End) -- node {} (Neuron2End);
					
					\draw [fill=white] (NeuronCircle) circle (0.1cm);
					
					\node [input, right of= NeuronCircle, node distance=0.1cm] (NeuronCircleRight){};
					
					\node [input, right of= NeuronCircleRight, node distance=1.5cm] (NeuronCircleRight2){};
					
					\draw [draw,->] (NeuronCircleRight) -- node {$u$} (NeuronCircleRight2);

					\node [input, right of= Neuron0, node distance=2.1cm] (Neuron0SpikeBeginDown){};
					\node [input, above of= Neuron0SpikeBeginDown, node distance=0.2cm] (Neuron0SpikeBegin){};
					\node [input, above of= Neuron0SpikeBegin, node distance=0.3cm] (Neuron0Spike1Up){};
					\node [input, right of= Neuron0SpikeBegin, node distance=0.15cm] (Neuron0Spike2Down){};
					\node [input, left of= Neuron0SpikeBegin, node distance=0.1cm] (Neuron0SpikeBeginLeft){};
					\node [input, above of= Neuron0Spike2Down, node distance=0.3cm] (Neuron0Spike2Up){};
					\node [input, right of= Neuron0Spike2Down, node distance=0.3cm] (Neuron0Spike3Down){};
					\node [input, above of= Neuron0Spike3Down, node distance=0.3cm] (Neuron0Spike3Up){};
					\node [input, right of= Neuron0Spike3Down, node distance=0.5cm] (Neuron0Spike4Down){};
					\node [input, above of= Neuron0Spike4Down, node distance=0.3cm] (Neuron0Spike4Up){};
					\node [input, right of= Neuron0Spike3Down, node distance=0.2cm] (Neuron0Spike4Down){};
					\node [input, above of= Neuron0Spike4Down, node distance=0.3cm] (Neuron0Spike4Up){};
					\node [input, right of= Neuron0Spike4Down, node distance=0.3cm] (Neuron0SpikeFinal){};
					
					\draw [draw,-] (Neuron0SpikeBegin) --  (Neuron0SpikeBeginLeft);
					\draw [draw,-] (Neuron0SpikeBegin) --  (Neuron0Spike1Up);
					\draw [draw,-] (Neuron0SpikeBegin) --  (Neuron0Spike2Down);
					\draw [draw,-] (Neuron0Spike2Down) --  (Neuron0Spike2Up);
					\draw [draw,-] (Neuron0Spike2Down) --  (Neuron0Spike3Down);
					\draw [draw,-] (Neuron0Spike3Down) --  (Neuron0Spike3Up);
					\draw [draw,-] (Neuron0Spike3Down) --  (Neuron0Spike4Down);
					\draw [draw,-] (Neuron0Spike4Down) --  (Neuron0Spike4Up);
					\draw [draw,-] (Neuron0Spike4Down) --  (Neuron0SpikeFinal);
		
					\node [input, right of= Neuron1, node distance=2.1cm] (Neuron1SpikeBeginDown){};
					\node [input, above of= Neuron1SpikeBeginDown, node distance=0.2cm] (Neuron1SpikeBegin){};
					\node [input, above of= Neuron1SpikeBegin, node distance=0.6cm] (Neuron1Spike1Up){};
					\node [input, right of= Neuron1SpikeBegin, node distance=0.1cm] (Neuron1Spike2Down){};
					\node [input, left of= Neuron1SpikeBegin, node distance=0.2cm] (Neuron1SpikeBeginLeft){};
					\node [input, above of= Neuron1Spike2Down, node distance=0.6cm] (Neuron1Spike2Up){};
					\node [input, right of= Neuron1Spike2Down, node distance=0.4cm] (Neuron1Spike3Down){};
					\node [input, above of= Neuron1Spike3Down, node distance=0.6cm] (Neuron1Spike3Up){};
					\node [input, right of= Neuron1Spike3Down, node distance=0.3cm] (Neuron1Spike4Down){};
					\node [input, above of= Neuron1Spike4Down, node distance=0.6cm] (Neuron1Spike4Up){};
					\node [input, right of= Neuron1Spike3Down, node distance=0.2cm] (Neuron1Spike4Down){};
					\node [input, above of= Neuron1Spike4Down, node distance=0.6cm] (Neuron1Spike4Up){};
					\node [input, right of= Neuron1Spike4Down, node distance=0.2cm] (Neuron1SpikeFinal){};
					
					\draw [draw,-] (Neuron1SpikeBegin) --  (Neuron1SpikeBeginLeft);
					\draw [draw,-] (Neuron1SpikeBegin) --  (Neuron1Spike1Up);
					\draw [draw,-] (Neuron1SpikeBegin) --  (Neuron1Spike2Down);
					\draw [draw,-] (Neuron1Spike2Down) --  (Neuron1Spike2Up);
					\draw [draw,-] (Neuron1Spike2Down) --  (Neuron1Spike3Down);
					\draw [draw,-] (Neuron1Spike3Down) --  (Neuron1Spike3Up);
					\draw [draw,-] (Neuron1Spike3Down) --  (Neuron1Spike4Down);
					\draw [draw,-] (Neuron1Spike4Down) --  (Neuron1Spike4Up);
					\draw [draw,-] (Neuron1Spike4Down) --  (Neuron1SpikeFinal);
					
					\node [input, right of= Neuron2, node distance=2.1cm] (Neuron2SpikeBeginDown){};
					\node [input, above of= Neuron2SpikeBeginDown, node distance=0.2cm] (Neuron2SpikeBegin){};
					\node [input, above of= Neuron2SpikeBegin, node distance=0.45cm] (Neuron2Spike1Up){};
					\node [input, right of= Neuron2SpikeBegin, node distance=0.3cm] (Neuron2Spike2Down){};
					\node [input, left of= Neuron2SpikeBegin, node distance=0.1cm] (Neuron2SpikeBeginLeft){};
					\node [input, above of= Neuron2Spike2Down, node distance=0.45cm] (Neuron2Spike2Up){};
					\node [input, right of= Neuron2Spike2Down, node distance=0.2cm] (Neuron2Spike3Down){};
					\node [input, above of= Neuron2Spike3Down, node distance=0.45cm] (Neuron2Spike3Up){};
					\node [input, right of= Neuron2Spike3Down, node distance=0.2cm] (Neuron2Spike4Down){};
					\node [input, above of= Neuron2Spike4Down, node distance=0.45cm] (Neuron2Spike4Up){};
					\node [input, right of= Neuron2Spike3Down, node distance=0.3cm] (Neuron2Spike4Down){};
					\node [input, above of= Neuron2Spike4Down, node distance=0.45cm] (Neuron2Spike4Up){};
					\node [input, right of= Neuron2Spike4Down, node distance=0.2cm] (Neuron2SpikeFinal){};
					
					\draw [draw,-] (Neuron2SpikeBegin) --  (Neuron2SpikeBeginLeft);
					\draw [draw,-] (Neuron2SpikeBegin) --  (Neuron2Spike1Up);
					\draw [draw,-] (Neuron2SpikeBegin) --  (Neuron2Spike2Down);
					\draw [draw,-] (Neuron2Spike2Down) --  (Neuron2Spike2Up);
					\draw [draw,-] (Neuron2Spike2Down) --  (Neuron2Spike3Down);
					\draw [draw,-] (Neuron2Spike3Down) --  (Neuron2Spike4Down);
					\draw [draw,-] (Neuron2Spike4Down) --  (Neuron2Spike4Up);
					\draw [draw,-] (Neuron2Spike4Down) --  (Neuron2SpikeFinal);

					\node [input, right of= NeuronN, node distance=2.1cm] (NeuronNSpikeBeginDown){};
					\node [input, above of= NeuronNSpikeBeginDown, node distance=0.2cm] (NeuronNSpikeBegin){};
					\node [input, above of= NeuronNSpikeBegin, node distance=0.7cm] (NeuronNSpike1Up){};
					\node [input, right of= NeuronNSpikeBegin, node distance=0.2cm] (NeuronNSpike2Down){};
					\node [input, left of= NeuronNSpikeBegin, node distance=0.1cm] (NeuronNSpikeBeginLeft){};
					\node [input, above of= NeuronNSpike2Down, node distance=0.7cm] (NeuronNSpike2Up){};
					\node [input, right of= NeuronNSpike2Down, node distance=0.4cm] (NeuronNSpike3Down){};
					\node [input, above of= NeuronNSpike3Down, node distance=0.7cm] (NeuronNSpike3Up){};
					\node [input, right of= NeuronNSpike3Down, node distance=0.25cm] (NeuronNSpike4Down){};
					\node [input, above of= NeuronNSpike4Down, node distance=0.7cm] (NeuronNSpike4Up){};
					\node [input, right of= NeuronNSpike3Down, node distance=0.1cm] (NeuronNSpike4Down){};
					\node [input, above of= NeuronNSpike4Down, node distance=0.7cm] (NeuronNSpike4Up){};
					\node [input, right of= NeuronNSpike4Down, node distance=0.25cm] (NeuronNSpikeFinal){};
					
					\draw [draw,-] (NeuronNSpikeBegin) --  (NeuronNSpikeBeginLeft);
					\draw [draw,-] (NeuronNSpikeBegin) --  (NeuronNSpike1Up);
					\draw [draw,-] (NeuronNSpikeBegin) --  (NeuronNSpike2Down);
					\draw [draw,-] (NeuronNSpike2Down) --  (NeuronNSpike2Up);
					\draw [draw,-] (NeuronNSpike2Down) --  (NeuronNSpike3Down);
					\draw [draw,-] (NeuronNSpike3Down) --  (NeuronNSpike3Up);
					\draw [draw,-] (NeuronNSpike3Down) --  (NeuronNSpike4Down);
					\draw [draw,-] (NeuronNSpike4Down) --  (NeuronNSpike4Up);
					\draw [draw,-] (NeuronNSpike4Down) --  (NeuronNSpikeFinal);
					
					\node [input, right of= NeuronC, node distance=2.1cm] (NeuronCSpikeBeginDown){};
					\node [input, above of= NeuronCSpikeBeginDown, node distance=0.2cm] (NeuronCSpikeBegin){};
					\node [input, above of= NeuronCSpikeBegin, node distance=0.5cm] (NeuronCSpike1Up){};
					\node [input, right of= NeuronCSpikeBegin, node distance=0.32cm] (NeuronCSpike2Down){};
					\node [input, left of= NeuronCSpikeBegin, node distance=0.1cm] (NeuronCSpikeBeginLeft){};
					\node [input, above of= NeuronCSpike2Down, node distance=0.5cm] (NeuronCSpike2Up){};
					\node [input, right of= NeuronCSpike2Down, node distance=0.32cm] (NeuronCSpike3Down){};
					\node [input, above of= NeuronCSpike3Down, node distance=0.5cm] (NeuronCSpike3Up){};
					\node [input, right of= NeuronCSpike3Down, node distance=0.32cm] (NeuronCSpike4Down){};
					\node [input, above of= NeuronCSpike4Down, node distance=0.5cm] (NeuronCSpike4Up){};
					\node [input, right of= NeuronCSpike3Down, node distance=0.25cm] (NeuronCSpikeFinal){};
					
					\draw [draw,-] (NeuronCSpikeBegin) --  (NeuronCSpikeBeginLeft);
					\draw [draw,-] (NeuronCSpikeBegin) --  (NeuronCSpike1Up);
					\draw [draw,-] (NeuronCSpikeBegin) --  (NeuronCSpike2Down);
					\draw [draw,-] (NeuronCSpike2Down) --  (NeuronCSpike2Up);
					\draw [draw,-] (NeuronCSpike2Down) --  (NeuronCSpike3Down);
					\draw [draw,-] (NeuronCSpike3Down) --  (NeuronCSpike3Up);
					\draw [draw,-] (NeuronCSpike3Down) --  (NeuronCSpikeFinal);

					\node [input, right of= NeuronCircle, node distance=0.35cm] (Neuron1SpikeBeginDown_tot){};
					\node [input, above of= Neuron1SpikeBeginDown_tot, node distance=1.2cm] (Neuron1SpikeBegin_tot){};
					\node [input, above of= Neuron1SpikeBegin_tot, node distance=0.6cm] (Neuron1Spike1Up_tot){};
					\node [input, right of= Neuron1SpikeBegin_tot, node distance=0.1cm] (Neuron1Spike2Down_tot){};
					\node [input, left of= Neuron1SpikeBegin_tot, node distance=0.2cm] (Neuron1SpikeBeginLeft_tot){};
					\node [input, above of= Neuron1Spike2Down_tot, node distance=0.6cm] (Neuron1Spike2Up_tot){};
					\node [input, right of= Neuron1Spike2Down_tot, node distance=0.4cm] (Neuron1Spike3Down_tot){};
					\node [input, above of= Neuron1Spike3Down_tot, node distance=0.6cm] (Neuron1Spike3Up_tot){};
					\node [input, right of= Neuron1Spike3Down_tot, node distance=0.3cm] (Neuron1Spike4Down_tot){};
					\node [input, above of= Neuron1Spike4Down_tot, node distance=0.6cm] (Neuron1Spike4Up_tot){};
					\node [input, right of= Neuron1Spike3Down_tot, node distance=0.2cm] (Neuron1Spike4Down_tot){};
					\node [input, above of= Neuron1Spike4Down_tot, node distance=0.6cm] (Neuron1Spike4Up_tot){};
					\node [input, right of= Neuron1Spike4Down_tot, node distance=0.2cm] (Neuron1SpikeFinal_tot){};
					
					\draw [draw,-] (Neuron1SpikeBegin_tot) --  (Neuron1SpikeBeginLeft_tot);
					\draw [draw,-] (Neuron1SpikeBegin_tot) --  (Neuron1Spike1Up_tot);
					\draw [draw,-] (Neuron1SpikeBegin_tot) --  (Neuron1Spike2Down_tot);
					\draw [draw,-] (Neuron1Spike2Down_tot) --  (Neuron1Spike2Up_tot);
					\draw [draw,-] (Neuron1Spike2Down_tot) --  (Neuron1Spike3Down_tot);
					\draw [draw,-] (Neuron1Spike3Down_tot) --  (Neuron1Spike3Up_tot);
					\draw [draw,-] (Neuron1Spike3Down_tot) --  (Neuron1Spike4Down_tot);
					\draw [draw,-] (Neuron1Spike4Down_tot) --  (Neuron1Spike4Up_tot);
					\draw [draw,-] (Neuron1Spike4Down_tot) --  (Neuron1SpikeFinal_tot);
					
					\node [input, right of= NeuronCircle, node distance=0.3cm] (Neuron2SpikeBeginDown_tot){};
					\node [input, above of= Neuron2SpikeBeginDown_tot, node distance=1.2cm] (Neuron2SpikeBegin_tot){};
					\node [input, below of= Neuron2SpikeBegin_tot, node distance=0.45cm] (Neuron2Spike1Up_tot){};
					\node [input, right of= Neuron2SpikeBegin_tot, node distance=0.3cm] (Neuron2Spike2Down_tot){};
					\node [input, left of= Neuron2SpikeBegin_tot, node distance=0.1cm] (Neuron2SpikeBeginLeft_tot){};
					\node [input, below of= Neuron2Spike2Down_tot, node distance=0.45cm] (Neuron2Spike2Up_tot){};
					\node [input, right of= Neuron2Spike2Down_tot, node distance=0.2cm] (Neuron2Spike3Down_tot){};
					\node [input, below of= Neuron2Spike3Down_tot, node distance=0.45cm] (Neuron2Spike3Up_tot){};
					\node [input, right of= Neuron2Spike3Down_tot, node distance=0.2cm] (Neuron2Spike4Down_tot){};
					\node [input, below of= Neuron2Spike4Down_tot, node distance=0.45cm] (Neuron2Spike4Up_tot){};
					\node [input, right of= Neuron2Spike3Down_tot, node distance=0.3cm] (Neuron2Spike4Down_tot){};
					\node [input, below of= Neuron2Spike4Down_tot, node distance=0.45cm] (Neuron2Spike4Up_tot){};
					\node [input, right of= Neuron2Spike4Down_tot, node distance=0.2cm] (Neuron2SpikeFinal_tot){};
					
					\draw [draw,-] (Neuron2SpikeBegin_tot) --  (Neuron2Spike1Up_tot);
					\draw [draw,-] (Neuron2Spike2Down_tot) --  (Neuron2Spike2Up_tot);
					\draw [draw,-] (Neuron2Spike4Down_tot) --  (Neuron2Spike4Up_tot);

					\node [input, right of= NeuronCircle, node distance=0.3cm] (Neuron0SpikeBeginDown_Tot){};
					\node [input, above of= Neuron0SpikeBeginDown_Tot, node distance=1.2cm] (Neuron0SpikeBegin_Tot){};
					\node [input, above of= Neuron0SpikeBegin_Tot, node distance=0.3cm] (Neuron0Spike1Up_Tot){};
					\node [input, right of= Neuron0SpikeBegin_Tot, node distance=0.15cm] (Neuron0Spike2Down_Tot){};
					\node [input, left of= Neuron0SpikeBegin_Tot, node distance=0.1cm] (Neuron0SpikeBeginLeft_Tot){};
					\node [input, above of= Neuron0Spike2Down_Tot, node distance=0.3cm] (Neuron0Spike2Up_Tot){};
					\node [input, right of= Neuron0Spike2Down_Tot, node distance=0.3cm] (Neuron0Spike3Down_Tot){};
					\node [input, above of= Neuron0Spike3Down_Tot, node distance=0.3cm] (Neuron0Spike3Up_Tot){};
					\node [input, right of= Neuron0Spike3Down_Tot, node distance=0.5cm] (Neuron0Spike4Down_Tot){};
					\node [input, above of= Neuron0Spike4Down_Tot, node distance=0.3cm] (Neuron0Spike4Up_Tot){};
					\node [input, right of= Neuron0Spike3Down_Tot, node distance=0.2cm] (Neuron0Spike4Down_Tot){};
					\node [input, above of= Neuron0Spike4Down_Tot, node distance=0.3cm] (Neuron0Spike4Up_Tot){};
					\node [input, right of= Neuron0Spike4Down_Tot, node distance=0.3cm] (Neuron0SpikeFinal_Tot){};
					
					\draw [draw,-] (Neuron0SpikeBegin_Tot) --  (Neuron0Spike1Up_Tot);
					\draw [draw,-] (Neuron0Spike2Down_Tot) --  (Neuron0Spike2Up_Tot);
					\draw [draw,-] (Neuron0Spike3Down_Tot) --  (Neuron0Spike3Up_Tot);
					\draw [draw,-] (Neuron0Spike4Down_Tot) --  (Neuron0Spike4Up_Tot);
				
				\node [input, right of= NeuronCircle, node distance=0.3cm] (NeuronNSpikeBeginDown_Tot){};
				\node [input, above of= NeuronNSpikeBeginDown_Tot, node distance=1.2cm] (NeuronNSpikeBegin_Tot){};
				\node [input, below of= NeuronNSpikeBegin_Tot, node distance=0.7cm] (NeuronNSpike1Up_Tot){};
				\node [input, right of= NeuronNSpikeBegin_Tot, node distance=0.2cm] (NeuronNSpike2Down_Tot){};
				\node [input, left of= NeuronNSpikeBegin_Tot, node distance=0.1cm] (NeuronNSpikeBeginLeft_Tot){};
				\node [input, below of= NeuronNSpike2Down_Tot, node distance=0.7cm] (NeuronNSpike2Up_Tot){};
				\node [input, right of= NeuronNSpike2Down_Tot, node distance=0.4cm] (NeuronNSpike3Down_Tot){};
				\node [input, below of= NeuronNSpike3Down_Tot, node distance=0.7cm] (NeuronNSpike3Up_Tot){};
				\node [input, right of= NeuronNSpike3Down_Tot, node distance=0.25cm] (NeuronNSpike4Down_Tot){};
				\node [input, below of= NeuronNSpike4Down_Tot, node distance=0.7cm] (NeuronNSpike4Up_Tot){};
				\node [input, right of= NeuronNSpike3Down_Tot, node distance=0.1cm] (NeuronNSpike4Down_Tot){};
				\node [input, below of= NeuronNSpike4Down_Tot, node distance=0.7cm] (NeuronNSpike4Up_Tot){};
				\node [input, right of= NeuronNSpike4Down_Tot, node distance=0.25cm] (NeuronNSpikeFinal_Tot){};
				
				\draw [draw,-] (NeuronNSpikeBegin_Tot) --  (NeuronNSpike1Up_Tot);
				\draw [draw,-] (NeuronNSpike2Down_Tot) --  (NeuronNSpike2Up_Tot);
				\draw [draw,-] (NeuronNSpike3Down_Tot) --  (NeuronNSpike3Up_Tot);
				\draw [draw,-] (NeuronNSpike4Down_Tot) --  (NeuronNSpike4Up_Tot);
				\end{tikzpicture}
		\end{center}
	\caption{Block diagram representing one-layer ($N+3$)-neuron network, with input $y \in \R$ and spiking output $u \in \mathcal{S}$. $b_i \in \R$ are the bias of the network and $G_i \in \{-1,1\}$ are gains that define the sign of the spiking output of each neuron. Note that, depending on the sign of the constant $c$ only one among neurons $\tilde{\xi}_{N+1}$ or $\tilde{\xi}_{N+2}$ generates spikes.}
	\label{Fig:blockDiagram_UA_PiecewiseLinearFunction}
\end{figure}
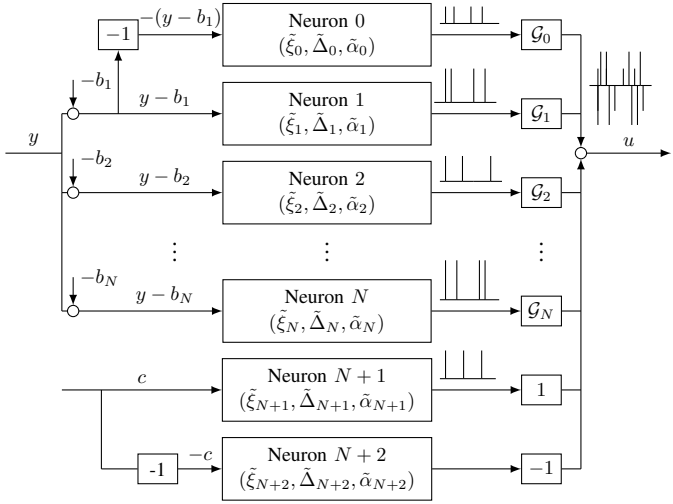

In Theorem~\ref{Thm:BoundedEmulationErrorIntegral}, given any signal $y \in \mathcal{L}_{\R}$ input to an integrate-and-fire neuron $\ell \in \{1,2\}$, we provided a condition on the neuron parameters $\alpha_\ell$ and $\Delta_\ell$ of the neuron $\ell \in \{1,2\}$, in  \eqref{eq:thresholdFlow}-\eqref{eq:SpikingInput} so that the spiking output of the neuron $\ell$ approximates the function $\max{\{0,(3-2\ell)K_\ell y\}}$ in the sense of \eqref{eq:boundEmulationErrorTheorem_EachNeuron}. As such, we formally showed how each node of an ANN with RELU activation functions is approximated using an integrate-and-fire neuron with arbitrary precision, assuming that the neuron parameters $\alpha_\ell$ and $\Delta_\ell$ can be chosen freely.
In this section we generalize that result, by ensuring that a $(N+3)$-neuron network, with $N \in \Zp$, approximates a \textit{continuous} piecewise affine (PWA) function $g$, with $N+1$ affine pieces with slopes $K_0$ in range $(b_0, b_{1})$,  $K_i$, $i \in \{1,2,\dots, N\}$ in range $[b_i, b_{i+1})$, where  $b_i \in \R$, $i \in \{0, 1, \dots, N+1\}$, are such that  $b_0 < b_1 \leq b_2 \leq \dots \leq b_N < b_{N+1}$, with $b_0 = -\infty$ and $b_{N+1} = +\infty$.  Moreover $g(0) = c$, with $c \in \R$. 
%
%
An example of PWA function with $N=7$ is given in Fig. \ref{Fig:PiecewiseLinearFunction}.
%
%
This function can be written as,  for one-dimensional $y \in \mathcal{L}_{\R}$, 

\begin{equation}
	\begin{aligned}
	g(y) = &c - K_0\max{\{0,-(y-b_1)\}} + K_1 \max{\{0, y-b_1\}} \\&+ (K_2-K_1) \max{\{0, y-b_2\}} + \dots \\
	&+ (K_{N} - K_{N-1})\max{\{0,y-b_N\}}\\
	= &c - K_0\max{\{0,-(y-b_1)\}} + K_1 \max{\{0, y-b_1\}} \\
	&+ \sum_{i = 2}^{N}(K_{i} - K_{i-1})\max{\{0,y-b_i\}}. 
	\end{aligned}
	\label{eq:PWLfunction}
\end{equation}
Note that any continuous PWA function can be written in the form \eqref{eq:PWLfunction} and 
is uniquely parametrized by $N$, $c$, $b_i$, $i \in \{1,2,\dots, N\}$, and $K_i$, $i \in \{0,1,\dots, N\}$. 

To emulate the continuous PWA function in \eqref{eq:PWLfunction}, we consider the neuronal network in Fig.~\ref{Fig:blockDiagram_UA_PiecewiseLinearFunction}, which is composed of $N+3$ neurons, denoted $\tilde{\xi}_i$, $i \in \{0, 1, \dots, N+2\}$, whose dynamics is given by, similarly to \eqref{eq:thresholdFlow}-\eqref{eq:SpikingInput}, 
\begin{equation}\begin{aligned}
		&\dot{\tilde{\xi}}_0 = \max{\{0, -(y-b_{1})\}}\\
		&\dot{\tilde{\xi}}_i = \max{\{0, y -b_i\}}, \quad  i \in \{1,2,\dots, N\}\\
		&\dot{\tilde{\xi}}_{N+1} = \max{\{0, c\}}\\
		&\dot{\tilde{\xi}}_{N+2} = \max{\{0, -c\}},
	\end{aligned}
	\label{eq:neuronDynamicsPiecewiseSeparate}
\end{equation} 
where $c \in \R$ and $b_i \in \R$, $i \in \{1, 2, \dots, N\}$ comes from \eqref{eq:PWLfunction} and $\tilde{\xi}_i(0) \in [0, \tilde{\Delta}_i)$, $i \in \{0,1,\dots, N+2\}$, with $\tilde{\Delta}_i \in \R_{> 0}$ the firing threshold of neuron $\tilde{\xi}_i$. It is worth noticing that if $c=0$, the last two neurons will never generate spikes. On the other hand, when $c \neq 0$, since it is a constant, only one among  $\max{\{0, c\}}$ and  $\max{\{0, -c\}}$ will be different from $0$, and thus, one of these two neurons will never generate spikes. Therefore, if the sign of $c$ is known (or if $c$ is zero), the network could have $N+2$ (or $N+1$) neurons, instead of $N+3$, and one (or both) of these two neurons could be removed.
%
%
Neuron $\tilde{\xi}_i$, $i \in \{0,1, \dots, N +2\}$, 
generates a spike with amplitude $\tilde{\alpha}_i \in \R_{> 0}$ whenever $\tilde{\xi}_i$ reaches the threshold $\tilde{\Delta}_i \in \R_{> 0}$. When this occurs, $\tilde{\xi}_i$ is reset to $0$. Gains $\mathcal{G}_i \in \{-1,1\}$, $i \in \{0, 1,\dots, N\}$, define the sign of the spiking signal $u_i$. Moreover $\mathcal{G}_{N+1} = 1$ and $\mathcal{G}_{N+2} = -1$. 
Similarly to \eqref{eq:triggeringRuleSpikingTimesEachNeuron} we denote the sequence of the spiking times generated by each neuron as $\{t_{i,j_i}\}_{j_i \in \Z_{> 0}}$, $i \in \{0, 1,\dots, N +2\}$, which is defined as, 
\begin{equation}
	t_{i, 0} = t_0 = 0, \quad t_{i, j_i +1}:= \inf\{t> t_{i, j_i}:\tilde{\xi}_i(t) \geq \tilde{\Delta}_i \}.
	\label{eq:triggeringRuleSpikingTimesEachNeuronPiecewise}
\end{equation}
Similar to \eqref{eq:SpikingInput_EachNeuron}, we define the spiking signal generated by each neuron as, for each $i \in \{0,1,\dots, N+2\}$, and all $t\in \R_{\geq0}$,
\begin{equation}
	u_i(t) =  \sum_{k = 1}^{\infty} \mathcal{G}_i\alpha_i\delta (t-t_{i,k}).  
	\label{eq:SpikingInput_EachNeuronPiecewise}
\end{equation}
Similarly to \eqref{eq:SpikingInput}, the spiking output of the $(N+3)$-neuron network is given by, for all $t \in \R_{\geq 0}$, 
\begin{equation}
	u(t) = \sum_{i = 0}^{N+2}u_i(t). 
	\label{eq:SpikingInputSumPiecewise}
\end{equation}

Note that the two-neuron network in \eqref{eq:thresholdFlow}-\eqref{eq:SpikingInput} is a special case of the one in 
\eqref{eq:neuronDynamicsPiecewiseSeparate}-\eqref{eq:SpikingInputSumPiecewise}. Indeed, \eqref{eq:thresholdFlow}-\eqref{eq:SpikingInput} corresponds to \eqref{eq:neuronDynamicsPiecewiseSeparate}-\eqref{eq:SpikingInputSumPiecewise} with $N = 1$, $b_1 = 0$, $c = 0$ (thus the last two neurons in \eqref{eq:triggeringRuleSpikingTimesEachNeuronPiecewise} are not needed) and neurons $\tilde{\xi}_{0}$, $\tilde{\xi}_{1}$ correspond to neurons $\xi_2$ and $\xi_1$ in \eqref{eq:thresholdFlow}-\eqref{eq:SpikingInput}, respectively.
Moreover, since the input to the neuron is a Lebesgue measurable and locally essentially bounded signal and the neuron dynamics in \eqref{eq:neuronDynamicsPiecewiseSeparate} is the same as in \eqref{eq:thresholdFlow}, Proposition~\ref{Prop:InterSpikingTime} holds \emph{mutatis mutandi} for each neuron of the network. Consequently, since the number of neurons is finite (equal to $N + 3$), the spiking signal $u$ in \eqref{eq:SpikingInputSumPiecewise} is such that 
no accumulation of spiking times can happen. 
		The next theorem, whose proof is given in Appendix \ref{Appendix_ProofThPiecewise},  provides conditions on the neuron parameters $\tilde{\alpha}_i$, $\tilde{\Delta}_i$ and on the gains $\mathcal{G}_i$, $i\in \{0,1, \dots, N\}$  of the integrate-and-fire neuronal network in Fig. \ref{Fig:blockDiagram_UA_PiecewiseLinearFunction} with neurons dynamics in \eqref{eq:neuronDynamicsPiecewiseSeparate} to ensure that the spiking output of the network $u$ in \eqref{eq:SpikingInputSumPiecewise} approximates the given PWA function $g$ as defined in \eqref{eq:PWLfunction}. 
	
		\begin{thm}
	 	Consider $g$ in \eqref{eq:PWLfunction}. Define $\overline{K}_i := K_i$ for $i \in \{0,1\}$ and $\overline{K}_i$ := $K_i-K_{i-1}$ for $i \in \{2,3,\dots, N\}$. For any $\overline{K}_i$, $i \in \{0,1,\dots, N\}$, select $\alpha_i, \Delta_i \in \R_{\geq 0}$ such that 
	 	$|\overline{K}_i| := \frac{\alpha_i}{\Delta_i}$, $i \in \{0,1,\dots, N\}$ and $\mathcal{G}_0 := -\sign{\overline{K}_0}$, $\mathcal{G}_i := \sign{\overline{K}_i}$, $i \in \{1,2,\dots, N\}$. Moreover, select $\alpha_{N+1} = \Delta_{N+1} \in \R_{> 0}$ and $\alpha_{N+2} = \Delta_{N+2} \in \R_{> 0}$. Design the $(N+3)$-neuron integrate-and-fire spiking neuronal network in \eqref{eq:neuronDynamicsPiecewiseSeparate}-\eqref{eq:SpikingInputSumPiecewise} with parameters $c$, $b_i$, $i\in\{1,2,\dots, N\}$ from \eqref{eq:PWLfunction} and parameters $\alpha_i, \Delta_i, \mathcal{G}_i$, $i \in \{0,1,\dots, N\}$ defined above. 
		%
		Then, for any $y \in \mathcal{L}_{\R}$, for all $t \in \R_{\geq 0}$,
		\begin{equation}
			\left|\int_{0}^{t} g(y(s)) - u(s) ds\right| \leq \sum_{i = 0}^{N+2} \alpha_i, 
			\label{eq:universalApproximationPiecewiseTheoremEquation_option2}
		\end{equation}
		with $u$ defined in \eqref{eq:SpikingInputSumPiecewise} output of the network.
		
		\label{Thm:universalApproximationPiecewise_option2}
	\end{thm}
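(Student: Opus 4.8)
The plan is to reduce the statement to the single-neuron estimate already contained in Theorem~\ref{Thm:BoundedEmulationErrorIntegral}, applied once per neuron of the network in \eqref{eq:neuronDynamicsPiecewiseSeparate}, and then to reassemble the $N+3$ bounds with the triangle inequality. The key observation I would rely on is that the derivation of the per-neuron bound \eqref{eq:boundEmulationErrorTheorem_EachNeuron} uses nothing about the specific input $\max\{0,\pm y\}$ beyond the facts that it is nonnegative and lies in $\mathcal{L}_{\R}$; under these assumptions the membrane potential is nondecreasing between resets, remains in $[0,\Delta_i)$ after each reset, and its accumulated spike mass tracks $\tfrac{\alpha_i}{\Delta_i}\int_0^t(\cdot)\,ds$ up to the not-yet-discharged remainder, which is worth at most $\alpha_i$. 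Each neuron in \eqref{eq:neuronDynamicsPiecewiseSeparate} is driven by such a nonnegative $\mathcal{L}_{\R}$ input (the shifts $y-b_i$ stay in $\mathcal{L}_{\R}$ and the constants $\pm c$ are trivially admissible), so this estimate transfers verbatim.

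Concretely, I would let $w_i$ denote the nonnegative driving signal of $\tilde{\xi}_i$ in \eqref{eq:neuronDynamicsPiecewiseSeparate}, and write $\rho_i:=\alpha_i/\Delta_i$, which equals $|\overline{K}_i|$ for $i\in\{0,1,\dots,N\}$ and $1$ for $i\in\{N+1,N+2\}$. Applying the argument behind \eqref{eq:boundEmulationErrorTheorem_EachNeuron} to neuron $i$ and multiplying the resulting inequality through by the sign gain $\mathcal{G}_i\in\{-1,1\}$ (so that the unsigned spike mass turns into $\int_0^t u_i$, with $u_i$ as in \eqref{eq:SpikingInput_EachNeuronPiecewise}) gives, for every $t\in\R_{\geq 0}$,
\begin{equation}
\left|\int_0^t \mathcal{G}_i\rho_i\, w_i(s)\,ds - \int_0^t u_i(s)\,ds\right| \leq \alpha_i.
\label{eq:perneuronPWA}
\end{equation}
With the sign choices $\mathcal{G}_0=-\sign{\overline{K}_0}$ and $\mathcal{G}_i=\sign{\overline{K}_i}$ for $i\in\{1,\dots,N\}$, the signed target $\mathcal{G}_i\rho_i w_i$ equals $-K_0\max\{0,-(y-b_1)\}$ for $i=0$, $K_1\max\{0,y-b_1\}$ for $i=1$, and $(K_i-K_{i-1})\max\{0,y-b_i\}$ for $i\in\{2,\dots,N\}$. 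For the constant neurons, $\rho_{N+1}=\rho_{N+2}=1$ together with $\mathcal{G}_{N+1}=1$, $\mathcal{G}_{N+2}=-1$ yield signed targets $\max\{0,c\}$ and $-\max\{0,-c\}$; their sum is $c$, and whichever of the two has zero input never fires, contributing $0$ to both sides of \eqref{eq:perneuronPWA}.

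It then suffices to sum \eqref{eq:perneuronPWA} over $i\in\{0,1,\dots,N+2\}$ and apply the triangle inequality. By the computation above the signed targets add up exactly to the representation \eqref{eq:PWLfunction} of $g(y(s))$, while $\sum_{i=0}^{N+2}u_i=u$ by \eqref{eq:SpikingInputSumPiecewise}, so that
\begin{equation}
\left|\int_0^t \bigl(g(y(s))-u(s)\bigr)\,ds\right| \leq \sum_{i=0}^{N+2}\alpha_i,
\end{equation}
which is precisely \eqref{eq:universalApproximationPiecewiseTheoremEquation_option2}.

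I expect the only real difficulty to be bookkeeping rather than analysis: one must verify that the telescoped coefficients $\overline{K}_0=K_0$, $\overline{K}_1=K_1$, $\overline{K}_i=K_i-K_{i-1}$ together with the chosen sign gains reproduce each affine increment of $g$ correctly, and that the constant offset $c$ is recovered by the pair $(\tilde{\xi}_{N+1},\tilde{\xi}_{N+2})$. A secondary point requiring care is the explicit justification that Theorem~\ref{Thm:BoundedEmulationErrorIntegral}, stated for the inputs $\max\{0,\pm y\}$, extends to inputs $\max\{0,\pm(y-b_i)\}$ and to constant inputs $\max\{0,\pm c\}$; this follows since these all remain nonnegative signals in $\mathcal{L}_{\R}$, for which Proposition~\ref{Prop:InterSpikingTime} (no accumulation of spikes) and the integral-tracking estimate both hold unchanged.
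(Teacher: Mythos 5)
Your proposal is correct and follows essentially the same route as the paper's proof: rewrite $g$ via the telescoped coefficients $\overline{K}_i$ and the shifted signals $\tilde{y}_i = y - b_i$, apply the per-neuron bound \eqref{eq:boundEmulationErrorTheorem_EachNeuron} of Theorem~\ref{Thm:BoundedEmulationErrorIntegral} to each of the $N+3$ neurons (including the two constant-input neurons with unit gain $\alpha_{N+1}/\Delta_{N+1}=\alpha_{N+2}/\Delta_{N+2}=1$), and sum with the triangle inequality. Your explicit remark that the per-neuron estimate transfers to the inputs $\max\{0,\pm(y-b_i)\}$ and $\max\{0,\pm c\}$ because these remain nonnegative signals in $\mathcal{L}_{\R}$ is exactly the (implicit) justification the paper relies on.
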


	Theorem~\ref{Thm:universalApproximationPiecewise_option2} proves that, given any PWA function in \eqref{eq:PWLfunction}, it is possible to design a SNN of integrate-and-fire neurons which approximates that function in the sense of \eqref{eq:universalApproximationPiecewiseTheoremEquation_option2}, where the neuron parameters and the network parameters (bias $b_i$ and gains $\mathcal{G}_i$) depend on the function parameters. Moreover, Theorem~\ref{Thm:universalApproximationPiecewise_option2} shows how these parameters are linked and thus, 
	the result of this theorem also ensures that the spiking output of any given integrate-and-fire SNN with $N+3$ neurons, like the one  in \eqref{eq:neuronDynamicsPiecewiseSeparate}-\eqref{eq:SpikingInputSumPiecewise} and shown in Fig.~\ref{Fig:blockDiagram_UA_PiecewiseLinearFunction}, approximates a PWA function, whose parameters depends on the neurons' parameters. 
%
%
	Note that, similarly to Theorem~\ref{Thm:BoundedEmulationErrorIntegral}, the bound in \eqref{eq:universalApproximationPiecewiseTheoremEquation_option2} can be made arbitrarily small by selecting $\alpha_i$, $i \in \{0,1,\dots, N+2\}$ arbitrarily small. This implies $\tilde{\Delta}_i$ small, which typically leads to more frequent spikes. 


	\begin{rem}
		Using the neurons' dynamics and network structure proposed in this section to emulate PWA functions it is possible that several neurons are active, i.e., integrating their input and generating spikes, simultaneously. Indeed, when, for instance $b_{i} \leq y \leq b_{i+1}$, for some $i \in \{1,2,\dots, N-1\}$, neurons $\tilde{\xi}_1, \tilde{\xi}_2, \dots, \tilde{\xi}_{i}$ are all active and one among $\tilde{\xi}_{N+1}$ and $\tilde{\xi}_{N+2}$ is active. Thus, it could happen that, in case the spikes generated by these neurons have opposite sign, they compensate each other. Therefore this might not be the most efficient network design to approximate PWA functions using spiking signals, but this is also not the purpose of this paper. Here we aim to formally show how PWA maps can be emulated through integrate-and-fire SNNs with approximation guarantees. Future work will be related to building more efficient SNN implementations. 
	\end{rem}

\section{Numerical case study}\label{Example}


We now consider a linearized model of an unstable batch reactor process \cite{Green_Limebeer_2012}. This system is given by \eqref{eq:system} with matrices
\begin{align}
    A&=\begin{bmatrix}
        1.38 & -0.2077 & 6.715 & -5.676\\
        -0.5814 & -4.29 & 0 & 0.675\\
        1.067 & 4.273 & -6.654 & 5.893\\
        0.048 & 4.273 & 1.343 & -2.104
    \end{bmatrix},\nonumber\\
    B&=\begin{bmatrix}
        0 & 0\\
        5.679 & 0\\
        1.136 & -3.146\\
        1.136 & 0\\
    \end{bmatrix},~~C=\begin{bmatrix}
        1&0&1&-1\\0&1&0&0
    \end{bmatrix}.
\end{align}
The eigenvalues of the open-loop system are $1.99,0.064,-5.057,-8.67$. It can be verified that this MIMO system can be stabilized using the static output feedback gain
\begin{equation}
    K=\begin{bmatrix}
        -0.5 & -2\\5 & 0.5
    \end{bmatrix},
\end{equation}
which results in the closed-loop eigenvalues $-1.519, -2.5, -19.9, -14.84$ and thus renders the closed-loop system asymptotically stable, i.e., $A+BKC$ Hurwitz. Thus, Assumption \ref{Ass:outputFeedbackControl} is satisfied. We emulate this static gain using the spiking controller described in Sections \ref{SystemAndController}, \ref{EmulationSection}, and \ref{ExtensionsMIMO}. Although we can, in principle, select $\alpha_{1,i,j}\neq\alpha_{2,i,j}$, we set $\alpha_{1,i,j}=\alpha_{2,i,j}$ for all $i\in\{1,2\}$ and $j\in\{1,2\}$ to retain symmetry, and we consider three choices for the neuron parameters. For the first spiky controller, we select the amplitudes $\alpha^I_{i,j}:=\alpha^I_{1,i,j}=\alpha^I_{2,i,j}$ as
\begin{equation}
    [\alpha^I_{i,j}]=\frac{1}{25}\cdot\begin{bmatrix}
        1 & 4\\
        3 & 0.3\\
    \end{bmatrix}.
\end{equation}
In order to emulate the closed-loop system given by $A+BKC$, we thus set the threshold $\Delta^I_{i,j}:=\Delta^I_{1,i,j}=\Delta^I_{2,i,j}$ such that $K_{i,j}=\alpha^I_{i,j}/\Delta^I_{i,j}$, i.e., we take $\Delta^I_{i,j}=\alpha^I_{i,j}/K_{i,j}$ for all $i\in\{1,2,\ldots,n_u\}$ and $j\in\{1,2,\ldots,n_y\}$.
To illustrate the emulation property, we also design a second controller with $\alpha^{II}_{i,j}=\frac{1}{4}\alpha^{I}_{i,j}$ and $\Delta^{II}_{i,j}=\frac{1}{4}\Delta^{I}_{i,j}$, and a third controller with $\alpha^{III}_{i,j}=\frac{1}{15}\alpha^{I}_{i,j}$ and $\Delta^{III}_{i,j}=\frac{1}{15}\Delta^{I}_{i,j}$. Fig. \ref{fig:simulation} shows the output trajectories of the three spiky controllers based on integrate-and-fire SNNs and the emulated continuous-time SOF controller solution to \eqref{eq:ClosedLoopToEmulate} for initial state $x(0) = \bar x(0) = [5.51, 7.08, 2.91, 5.11]^\top$ and $\xi_{\ell,i,j}(0) = 0$ for all $\ell \in \{1,2\}$, $i \in \{1,2\}$ and $j \in \{1,2\}$. 

\begin{figure}[ht!]
    \centering
    \input{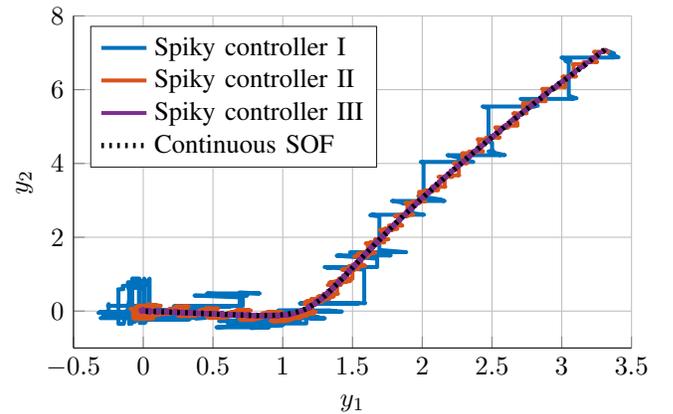}
    \caption{Emulation of static output-feedback using spiky controllers for a linearized batch reactor.}
    \label{fig:simulation}
\end{figure}

A number of observations can be made. Firstly, the trajectories in the three cases resemble the closed-loop trajectory of $\dot{\bar x} = (A+BKC)\bar x$. Secondly, the bound on the state emulation error $\tilde x$ from Theorem~\ref{Thm:StabilityTheoremEmulation-variant}, which corresponds to the ultimate bound of the practical stability property from Corollary~\ref{Thm:StabilityTheoremEmulation}, becomes smaller when both the threshold and the spike amplitude shrink (retaining their ratio, in order to emulate the same static output feedback gain, see Theorem~\ref{Thm:BoundedEmulationErrorIntegral}, as expected). Lastly, taking smaller threshold and spike amplitude comes at the cost of more spikes, as we can see in Table~\ref{tab:spikes}. In fact, the values of the bound on $|\tilde{x}(t)|$ for all $t \in \R_{\geq0}$ in Theorem~\ref{Thm:StabilityTheoremEmulation-variant}, which correspond to the ultimate bound $\limsup_{t \to \infty} |x(t)|$ in Corollary \ref{Thm:StabilityTheoremEmulation} are also given in Table~\ref{tab:spikes} computed based on the theoretical guarantees and from the simulations. The simulated values of the norm of the state emulation error $|\tilde{x} (t)|$ are within the guaranteed bounds of Theorem \ref{Thm:StabilityTheoremEmulation-variant}. 
Clearly there is a trade-off between the number of spikes and the ultimate bound and the overall approximation error with respect to $\dot{\bar x} = (A + BKC)\bar x$. 
\begin{table}[ht!]
    \centering
     \caption{Number of spikes for a 10 second simulation and (theoretically guaranteed and simulated) bound on $\tilde{x}(t)$ for all $t \geq 0$ 
     of the linearized unstable batch reactor system stabilized using spiky controllers.}
    \begin{tabular}{c|c|c|c}
    \textbf{Controller} & I & II & III\\\hline
    \textbf{No. spikes} & 175 & 540 & 1421\\
    \textbf{Guaranteed bound on} $|\tilde{x}|$ &  3.669 & 0.917 & 0.245\\
    \textbf{Simulation bound on} $|\tilde{x}|$ &  0.857 & 0.236 & 0.052
    \end{tabular}
    \label{tab:spikes}
\end{table}

\section{Conclusions} \label{Conclusions}

This paper presented a systematic spiking control design framework for LTI systems, using a two-step emulation-based design approach to ensure a practical closed-loop stability property. The first step of our approach builds on a formal signal-based ``spiky'' approximation property of integrate-and-fire spiking neural networks (SNNs) (Theorem~\ref{Thm:BoundedEmulationErrorIntegral}). 
In the second step, the novel notion of integral spiking-input-to-state stability (iSISS) was defined and we have ensured that a (precompensated) LTI system satisfies it (Theorem~\ref{Thm:SystemEmulation_iSISS}). Combining these two steps we guaranteed that the trajectories of the spiky control loop remain within a tunable distance to the ideal continuous-time closed-loop system, i.e., the state emulation error can be made small (Theorem \ref{Thm:StabilityTheoremEmulation-variant}). This result directly implies that a practical stability property of the closed-loop system also holds (Corollary \ref{Thm:StabilityTheoremEmulation}). 
Generalizations to MIMO LTI systems and the approximation of continuous piecewise affine functions were presented as well. 

The formal signal-based approximation property of spiky integrate-and-fire neurons and the iSISS notion 
are instrumental to ensure stability properties when neuromorphic controllers are considered. 
We therefore believe that the concepts and reasoning have potential to be generalized to broader classes of systems and thus may inspire several future work directions. 
%
In particular, it would be interesting to consider different neuronal models, general nonlinear systems and the emulation of nonlinear dynamic controllers. To generalize the results to nonlinear systems, several key challenges need to be addressed. Thereto, it is necessary to define the concept of solution of nonlinear systems with spiking inputs, in the form of Dirac's impulses, and then we need to develop conditions under which nonlinear systems are iSISS, possibly generalizing the proof of Theorem~\ref{Thm:SystemEmulation_iSISS}. 
%
In addition, in the future we would like to investigate different designs of the network structure of the SNNs and the choice of gain $K$ to improve the efficiency of the neuron-inspired controller in terms of number of spikes (see Remark \ref{rem:efficiency}).
Moreover, in the current setting, the input of the neurons is a continuous-time signal. A promising future work direction consists in considering neuromorphic event-based sensors and thus use spiking measurement signals as inputs to the neurons, as in, e.g., \cite{schmetterling2024neuromorphic, medvedeva2025formalizing}. 


\appendix

\subsection{Proof of Proposition \ref{Prop:InterSpikingTime}}\label{Appendix_ProofProp1}
 As $y$ is a locally essentially bounded signal, for  any $T \in \R_{\geq 0}$ there exists $M_T \in \R_{> 0}$ such that $|y(t)| \leq M_T$ for almost all $t\in[0,T]$. From \eqref{eq:thresholdFlow}, we have, for all $\ell \in \{1,2\}$, and all $s \in (t_{\ell, i}, t_{\ell, i+1})$, $i \in \{1,2,\dots, j_\ell(t)\}$, 
 \begin{equation}
 	\begin{aligned}
 	\frac{d}{ds}\xi(s) & = \max\{0, (3-2\ell) y(s)\}
 	\leq M_T. 
 	\end{aligned}
 \end{equation}
 Hence, we have, for any $s \in (t_{\ell, i}, t_{\ell, i+1})$,
 \begin{equation}
 	\xi_\ell(s) \leq \xi_\ell(t_{\ell, i}^+) + M_T(s-t_{\ell, i}).
 	\label{eq:InterSpikingTimeNeuronProof1}
 \end{equation}
 Since, from \eqref{eq:thresholdJump1}, $\xi_\ell(t_{\ell, i}^+) = 0$, for all $i \in \{1,2,\dots, j_\ell(t)\}$, and from \eqref{eq:triggeringRuleSpikingTimesEachNeuron}, $\xi_\ell(t_{\ell, i +1}) = \Delta_\ell$,  
 \eqref{eq:InterSpikingTimeNeuronProof1} implies, for all 
 $i \in \{1,2,\dots, j_\ell(t)-1\}$, $t\in[0,T]$, $T \in \R_{\geq0}$, 
 \begin{equation}
 	t_{\ell, i+1} - t_{\ell, i} \geq \frac{\Delta_\ell}{M_T}. 
    \label{eq:individualDwellTimeProof}
 \end{equation}
 Hence, on any bounded interval $[0,T]$ there cannot be Zeno behavior, which implies either $j_\ell < \infty$ or $t_{\ell,j_\ell} \to \infty$ as $j_\ell \to \infty$ for all $\ell \in \{1,2\}$. 
 Moreover, the number of neuron is finite (equal to $2$) and thus, using $\{t_j\}_{j \in \Zo} =  \{t_{1,j_1}\}_{j_1 \in \Zo} \cup \{t_{2,j_2}\}_{j_2 \in \Zo}$, from \eqref{eq:individualDwellTimeProof} we have that the sequence $\{t_j\}_{j \in \Zo}$ is such that $j  < \infty$ or it satisfies $t_{j} \to \infty$ as $j \to \infty$, which excludes the Zeno phenomenon. This concludes the proof.


%

\subsection{Proof of Theorem \ref{Thm:BoundedEmulationErrorIntegral}}\label{Appendix_ProofTh1} 

Let all conditions of Theorem \ref{Thm:BoundedEmulationErrorIntegral} hold.
%
Let $\ell \in \{1,2\}$ and $t \in \R_{\geq0}$. Using \eqref{eq:SpikingInput_EachNeuron} and $(3-2\ell)^2 = 1$, we have for $t \in \R_{\geq0}, $
	\begin{multline}
\int_{0}^{t} \max{\{0,(3-2\ell)K_\ell y(s)\}} - (3-2\ell)u_\ell(s) ds \\
 = \int_{0}^{t} \max{\{0,(3-2\ell)K_\ell y(s)\}} - \sum_{i = 1}^{+\infty} \alpha_\ell\delta (s-t_{\ell, i}) ds. 
  \label{eq:ProofTh1_EachNeuron1_half}
 \end{multline}
Using \eqref{eq:triggeringRuleSpikingTimesEachNeuron}, from  \eqref{eq:ProofTh1_EachNeuron1_half} we have 
\begin{multline}
 \int_{0}^{t} \max{\{0,(3-2\ell)K_\ell y(s)\}} - (3-2\ell)u_\ell(s) ds \\ 
 = \sum_{i = 1}^{j_\ell(t)} \int_{t^+_{\ell, i-1}}^{t^+_{\ell, i}} \bigg[\max{\{0,(3-2\ell)K_\ell y(s)\}} -  
 \alpha_\ell\delta (s-t_{\ell,i})\bigg] ds \\
 \quad +  \int_{t^+_{\ell, j_\ell(t)}}^{t}\max{\{0,(3-2\ell)K_\ell y(s)\}} ds.
\label{eq:ProofTh1_EachNeuron1}
\end{multline}
	
Using $\alpha_\ell = K_\ell\Delta_\ell$, $\ell \in \{1,2\}$, \eqref{eq:ProofTh1_EachNeuron1} becomes
\begin{multline}
\int_{0}^{t} \max{\{0,(3-2\ell)K_\ell y(s)\}} - (3-2\ell) u_\ell(s) ds \\
= K_\ell\bigg(\sum_{i = 1}^{j_\ell(t)} \int_{t^+_{\ell, i-1}}^{t^+_{\ell, i}} \bigg[\max{\{0,(3-2\ell) y(s)\}} -
\Delta_\ell\delta (s-t_{\ell,i})\bigg] ds \\ \quad +  \int_{t^+_{\ell, j_\ell(t)}}^{t}\max{\{0,(3-2\ell) y(s)\}} ds\bigg).
	\label{eq:ProofTh1_EachNeuron2}
\end{multline}
Moreover, from \eqref{eq:thresholdFlow}, \eqref{eq:thresholdJump1}, we have that for all $i \in \{1, 2, \dots, j_\ell(t)\}$ and all $\tilde{t} \in [t_{\ell, i -1}, t_{\ell, i})$,  
\begin{equation}
	\xi_\ell(\tilde{t}) = \xi_\ell(t_{\ell, i-1}^+) + \int_{t_{\ell,i-1}^+}^{\tilde{t}} \max\{0,(3-2\ell)y(s)\}ds.
	\label{eq:ProofTh1_EachNeuron3}
\end{equation}
Note that, in the interval $[t_{\ell, i -1}, t_{\ell, i})$ spikes triggered by $\xi_{3-\ell}$ can occur. However, \eqref{eq:thresholdJump1} implies that $\xi_\ell$ will not change when a spike is triggered by $\xi_{3-\ell}$ and thus, \eqref{eq:ProofTh1_EachNeuron3} holds for all $\tilde{t} \in [t_{\ell, i -1}, t_{\ell, i})$,  $i \in \{1, 2,\dots, j_\ell(t)\}$. 
 Since from \eqref{eq:thresholdJump1} we have $\xi_\ell(t_{\ell, i-1}^+) = 0$ for all $i \in \{2,3, \dots, j_\ell(t)\}$, 
 \eqref{eq:ProofTh1_EachNeuron3} becomes 
\begin{equation}
	\xi_\ell(\tilde{t}) = \int_{t_{\ell, i-1}^+}^{\tilde{t}} \max\{0,(3-2\ell)y(s)\}ds
	\label{eq:ProofTh1_EachNeuron4}
\end{equation}
Moreover, \eqref{eq:triggeringRuleSpikingTimesEachNeuron} and \eqref{eq:ProofTh1_EachNeuron4} imply, for all $i \in \{2,3, \dots, j_\ell(t)\}$, 
\begin{equation}
	\Delta_\ell = \xi_\ell(t_{\ell, i}^-) = \int_{t_{\ell, i-1}^+}^{t_{\ell, i}^-} \max\{0,(3-2\ell)y(s)\}ds.
	\label{eq:ProofTh1_EachNeuron5}
\end{equation}
Since $y \in \mathcal{L}_{\R}$ we have, for all $\tau \in \R_{\geq 0}$,  
\begin{equation}
	\int_{\tau^-}^{\tau^+}\max\{0,(3-2\ell)y(s)\}ds = 0. 
	\label{eq:ProofTh1_EachNeuron6}
\end{equation}
Thus, from  \eqref{eq:ProofTh1_EachNeuron5}, and \eqref{eq:ProofTh1_EachNeuron6} we have, for all $i \in \{2,3, \dots, j_\ell(t)\}$ 
\begin{equation}
	\Delta_\ell = \int_{t_{\ell, i-1}^+}^{t_{\ell, i}^+} \max\{0,(3-2\ell)y(s)\}ds.
	\label{eq:ProofTh1_EachNeuron7}
\end{equation}
From \eqref{eq:ProofTh1_EachNeuron7} and using the definition of the Dirac delta function, we have for all $i \in \{2,3, \dots, j_\ell(t)\}$ 
\begin{equation}
	\int_{t_{\ell,i-1}^+}^{t_{\ell,i}^+} \max\{0,(3-2\ell)y(s)\} - \Delta_\ell  \delta (s-t_{\ell,i}) ds = 0.
	\label{eq:ProofTh1_EachNeuron8}
\end{equation}
Thus, using \eqref{eq:ProofTh1_EachNeuron8}, \eqref{eq:ProofTh1_EachNeuron2} becomes 
\begin{equation}
	\begin{aligned}
		&\int_{0}^{t} \max{\{0,(3-2\ell)K_\ell y(s)\}} - (3-2\ell)u_\ell(s) ds \\
        & = K_\ell\bigg(\int_{t^+_{\ell, 0}}^{t^+_{\ell, 1}}\max{\{0,(3-2\ell) y(s)\} - \Delta_\ell \delta(s-t_{\ell,1})} ds \\
        &+ \int_{t^+_{\ell, j_\ell(t)}}^{t}\max{\{0,(3-2\ell) y(s)\}} ds\bigg).\\
	\end{aligned}
	\label{eq:ProofTh1_EachNeuron9}
\end{equation}
From \eqref{eq:ProofTh1_EachNeuron3} with $i = 1$ and from \eqref{eq:ProofTh1_EachNeuron4} and \eqref{eq:ProofTh1_EachNeuron9},
we obtain, for all $t \in \R_{\geq 0}$, 
\begin{multline}
		\int_{0}^{t} \max{\{0,(3-2\ell)K_\ell y(s)\}} -  (3-2\ell)u_\ell(s) ds \\
		= K_\ell(-\xi_\ell(0) + \xi_\ell(t)). 
	\label{eq:ProofTh1_EachNeuron10}
\end{multline}
Moreover, from \eqref{eq:triggeringRuleSpikingTimesEachNeuron} and \eqref{eq:thresholdJump1} we have that $\xi_\ell (t) \in [0, \Delta_\ell]$ for all $t \in \R_{\geq 0}$. Thus, from \eqref{eq:ProofTh1_EachNeuron10}, using $\displaystyle K_\ell = \frac{\alpha_\ell}{\Delta_\ell}$, we can conclude that, for all  $t \in \R_{\geq 0}$, $\ell \in \{1,2\}$,
$\int_{0}^{t} \max{\{0,(3-2\ell)K_\ell y(s)\}} -  (3-2\ell)u_\ell(s) ds \in [-\alpha_\ell,\alpha_\ell], $
which implies, 
\begin{equation}
\begin{aligned}
\left|\int_{0}^{t} \max{\{0,(3-2\ell)K_\ell y(s)\}} -  (3-2\ell)u_\ell(s) ds\right| \in [0,\alpha_\ell].
\end{aligned} 
\label{eq:ProofTh1_EachNeuron12}
\end{equation}
This concludes the first part of the proof.

We now prove the second part of Theorem \ref{Thm:BoundedEmulationErrorIntegral}. 
Using $\psi= \max{\{0,K_1y\}} - \max{\{0,-K_2y\}} - u$, and $u$ from \eqref{eq:SpikingInput} 
we have 
\begin{multline}
		\int_{0}^{t}\psi(s)ds \\
		= \int_{0}^{t} \left[\max{\{0,K_1y(s)\}} - \max{\{0,-K_2y(s)\}} - u(s)\right]ds\\
		= \int_{0}^{t} \bigg[K_1\max\{0,y(s)\}  
		-   \sum_{i = 1}^{+\infty}\alpha_1\delta (s-t_{1,i})\bigg] ds -\\
	\int_{0}^{t} \bigg[K_2\max\{0, -y(s)\} 
		 - \sum_{i = 1}^{+\infty}-\alpha_2\delta (s-t_{2,i})\bigg]ds. 
	\label{eq:proofBoundIntegralEmulationError}
\end{multline}

Using \eqref{eq:triggeringRuleSpikingTimesEachNeuron}, \eqref{eq:proofBoundIntegralEmulationError} can be rewritten as, following similar steps as in the first part of the proof, 

\begin{multline}
		\int_{0}^{t}\psi(s)ds
		= 	\sum_{i = 1}^{j_1(t)}  \int_{t_{1, i -1}^+}^{t_{1,i}^+} \bigg[K_1 \max\{0,y(s)\} \\
        - \alpha_1  \delta (s-t_{1,i})\bigg] ds 
		+ \int_{t_{1,j_1}(t)^+}^{t} K_1 \max\{0,y(s)\} ds  \\  
		  - \bigg( \sum_{i = 1}^{j_2(t)}  \int_{t_{2, i -1}^+}^{t_{2,i}^+} \bigg[K_2 \max\{0, -y(s)\} - \alpha_2 \delta(s-t_{2,i})\bigg] ds \\
		+ \int_{t_{2,j_2(t)}^+}^{t} K_2 \max\{0, -y(s)\}  ds\bigg). 
	\label{eq:proofBoundIntegralEmulationError_separateNeurons}
\end{multline}


Using $\alpha_\ell = K_\ell\Delta_\ell$, $\ell \in \{1,2\}$ from \eqref{eq:proofBoundIntegralEmulationError_separateNeurons} we have, 

\begin{multline}
		\int_{0}^{t}\psi(s)ds 
		= K_1 \bigg(	\sum_{i = 1}^{j_1(t)}  \int_{t_{1, i -1}^+}^{t_{1,i}^+}\bigg[ \max\{0,y(s)\} \\ 
		- \Delta_1  \delta (s-t_{1,i}) \bigg]ds  
		+ \int_{t_{1,j_1(t)}^+}^{t} \max\{0,y(s)\} ds\bigg)  \\  
		  -K_2\bigg( \sum_{i = 1}^{j_2(t)}  \int_{t_{2, i -1}^+}^{t_{2,i}^+} \bigg[\max\{0, -y(s)\} - \Delta_2 \delta(s-t_{2,i})\bigg] ds\\
		 + \int_{t_{2,j_2(t)}^+}^{t} \max\{0, -y(s)\}  ds\bigg).
	\label{eq:proofBoundIntegralEmulationError1}
\end{multline}

From \eqref{eq:ProofTh1_EachNeuron3} and using \eqref{eq:ProofTh1_EachNeuron8} for $\ell \in \{1,2\}$, \eqref{eq:proofBoundIntegralEmulationError1} becomes, similarly to \eqref{eq:ProofTh1_EachNeuron9},
\begin{multline}
\int_{0}^{t}\psi(s)ds 
 = K_1 \bigg( \int_{t^+_{1, 0}}^{t^+_{1, 1}}\max{\{0, y(s)\} - \Delta_1 \delta(s-t_{1,1})} ds \\ + \int_{t_{1,j_1(t)}^+}^{t} \max\{0,y(s)\} ds\bigg)
- K_2\bigg(\int_{t^+_{2, 0}}^{t^+_{2, 1}}\max{\{0, -y(s)\}} \\ - \Delta_2 \delta(s-t_{2,1}) ds  + \int_{t_{2,j_2(t)}^+}^{t} \max\{0, -y(s)\} ds\bigg).  
\label{eq:proofBoundIntegralEmulationError2}
\end{multline}
From \eqref{eq:ProofTh1_EachNeuron3}, \eqref{eq:ProofTh1_EachNeuron4} and \eqref{eq:proofBoundIntegralEmulationError2}, similarly to \eqref{eq:ProofTh1_EachNeuron10},
we obtain, for all $t \in \R_{\geq 0}$, 
\begin{equation}
\begin{aligned}
&\int_{0}^{t}\psi(s)ds 
= K_1(-\xi_1(0) + \xi_1(t)) - K_2(-\xi_2(0) + \xi_2(t)).
\end{aligned} 
\label{eq:proofBoundIntegralEmulationError3}
\end{equation}
Moreover, from \eqref{eq:triggeringRuleSpikingTimesEachNeuron} and \eqref{eq:thresholdJump1} we have that $\xi_\ell (t) \in [0, \Delta_\ell]$ for all $t \in \R_{\geq 0}$ and all $\ell \in \{1,2\}$. Thus, from \eqref{eq:proofBoundIntegralEmulationError3}, using $\displaystyle K_\ell = \frac{\alpha_\ell}{\Delta_\ell}$, we can conclude that
$- (\alpha_1 + \alpha_2) \leq \int_{0}^{t}\psi(s)ds \leq (\alpha_1 + \alpha_2),$
which implies
\begin{equation}
\left|\int_{0}^{t}\psi(s)ds\right| \leq \alpha_1+\alpha_2 \text { for all } t \in \R_{\geq 0}. 
\label{eq:proofBoundIntegralEmulationError5}
\end{equation}

Following similar steps as in \eqref{eq:proofBoundIntegralEmulationError3}-\eqref{eq:proofBoundIntegralEmulationError5} and considering the case $\xi_\ell(0) = 0$, $\ell \in \{1,2\}$, we obtain 
\eqref{eq:boundEmulationErrorTheoremMax}. 

\color{black}

\subsection{Proof of Theorem \ref{Thm:SystemEmulation_iSISS}}\label{Appendix_ProofTh2}

Let all conditions of Theorem \ref{Thm:SystemEmulation_iSISS} hold. 
%
By solving the differential equation in \eqref{eq:generalNonlinear} with $f(z,v) = Fz + Gv$, and 
$F \in \R^{n_z \times n_z}$ Hurwitz, we have, for any $z_0 = z(0) \in \R^{n_z}$, for all $t \geq 0$, 
\begin{equation}
	\begin{aligned}
	z(t) &= e^{F t}z_0 + \int_{0}^{t} e^{F (t-s)} G v(s) ds. \\
	\end{aligned}
	\label{eq:StrongIISSLTI_stateEquation_old}
\end{equation}
We introduce the notation $h(\theta) := e^{F \theta} G \in \R^{n_z\times n_v}$, and all $\theta \in \R$, and
%
%
recalling that $t_{0}= 0$ is not a spiking time, the integral term in \eqref{eq:StrongIISSLTI_stateEquation_old} can be written as
\begin{equation}
	\begin{aligned}
		\int_{0}^{t} &h(t-s) v(s) ds = \sum_{i = 1}^{j(t)} \bigg(\int_{t_{i-1}^+}^{t_{i}^-} h(t-s) v(s) ds \\
		&+ \int_{t_{i}^-}^{t_{i}^+} h(t-s) v(s) ds \bigg) + \int_{t_{j(t)}^+}^{t} h(t-s) v(s) ds, 
	\end{aligned}
	\label{eq:StrongIISSLTI_integralDifferentComponents}
\end{equation}
where $j(t) \in \Zo$ counts the number of spikes up to and including time $t$ and $t_{i}$, $i\in \{1, 2, \dots, j(t)\}$, are the spiking times of $v$, as defined in Definition~\ref{Def:SpikingSignalDefinition}. 
We now consider the three integrals in \eqref{eq:StrongIISSLTI_integralDifferentComponents} separately. 
We first consider $\int_{t_{i-1}^+}^{t_{i}^-} h(t-s) v(s) ds$ for all $i \in \{1,2, \dots, j(t)\}$. 
Since $h(t-s)$ is continuous for all $t, s \in \R_{\geq 0}$ with $s \leq t$ and $v_k$ is integrable for all $s \in [t_{i-1}^+, t_{i}^-]$, we can apply the integration by parts \cite[Page 64]{pap2002handbook}, \cite[Page 80]{fremlin2000measure} 
and, recalling that $h(t-t_{i}) = h(t-t_{i}^+) = h(t-t_{i}^-)$ for all $i\in \{1,2, \dots, j(t)\}$,  we obtain 
\begin{multline}
		\int_{t_{i-1}^+}^{t_{i}^-} h(t-s) v(s) ds =h(t-t_{i})\int_{0}^{t_{i}^-} v(\theta)d\theta \\
		- h(t-t_{i-1})\int_0^{t_{i-1}^+} v(\theta)d\theta \\
		 - \int_{t_{i-1}^+}^{t_{i}^-} \frac{d}{ds}\left(h(t-s)\right) \int_0^s v(\theta)d\theta ds. 
	\label{eq:StrongIISSLTI_integralFirstTerm}
\end{multline}

Similarly, using the integration by parts the last integral in \eqref{eq:StrongIISSLTI_integralDifferentComponents} can be written as, 
\begin{multline}
	\int_{t_{j(t)}^+}^{t} h(t-s) v(s) ds\\
	 = h(0)\int_0^{t} v(\theta)d\theta - h(t-t_{j(t)})\int_0^{t_{j(t)}^+} v(\theta)d\theta \\
	 - \int_{t_{j(t)}^+}^{t} \frac{d}{ds}\left(h(t-s)\right) \int_0^s v(\theta)d\theta ds. 
\label{eq:StrongIISSLTI_integralThirdTerm}
\end{multline} 
We now consider the second term on the right-hand side of \eqref{eq:StrongIISSLTI_integralDifferentComponents} and, using the structure of $v = v_1 + v_2$, where $v_1 \in \mathcal{L}_{\R^{n_v}}$, and 
	$v_{2}(t):= \sum_{i = 1}^{\infty} \Theta_{i} \delta(t-t_{i})$ with $t \in \R_{\geq 0}$,  we have 
\begin{multline}
	\int_{t_{i}^-}^{t_{i}^+} h(t-s) v(s) ds =  \int_{t_{i}^-}^{t_{i}^+} h(t-s) v_{1}(s) ds \\
	+ \int_{t_{i}^-}^{t_{i}^+} h(t-s) \Theta_{i}\delta(s-t_{i})ds. 
\label{eq:StrongIISSLTI_IntegralDiscontinuousTerm}
\end{multline}
Since $v_{1}$ is locally essentially bounded, it is locally integrable and thus 
we have for all $i \in \{1,2, \dots, j(t)\}$, 
\begin{equation}
\int_{t_{i}^-}^{t_{i}^+} h(t-s) v_{1}(s) ds = 0.
\label{eq:StrongIISSLTI_integralContinuous}
\end{equation}
Moreover, using the definition of the Dirac delta function, \eqref{eq:StrongIISSLTI_IntegralDiscontinuousTerm} becomes, for all $i \in \{1,2, \dots, j(t)\}$, 
\begin{equation}
\int_{t_{i}^-}^{t_{i}^+} h(t-s) v(s) ds = h(t-t_{i}) \Theta_{i}. 
\label{eq:StrongIISSLTI_IntegralSecondTerm}
\end{equation}

From \eqref{eq:StrongIISSLTI_integralDifferentComponents}, \eqref{eq:StrongIISSLTI_integralFirstTerm}, \eqref{eq:StrongIISSLTI_integralThirdTerm} and \eqref{eq:StrongIISSLTI_IntegralSecondTerm}, we obtain 
\begin{multline}
		\int_{0}^{t} h(t-s)v(s) ds \\
		= \sum_{i= 1}^{j(t)} \bigg(h(t-t_{i})\int_{0}^{t_{i}^-} v(\theta)d\theta 
		- h(t-t_{i-1})\times\\ 
		\int_{0}^{t_{i-1}^+} v(\theta)d\theta   
		-\int_{t_{i-1}^+}^{t_{i}^-} \frac{d}{ds}\left(h(t-s)\right) \int_{0}^s v(\theta)d\theta ds 
		\\ + h(t-t_{i}) \Theta_{i} \bigg)
		+ h(0)\int_{0}^{t} v(\theta)d\theta - h(t-t_{j(t)})\times\\
		\int_{0}^{t_{j(t)}^+} v(\theta)d\theta 
		- \int_{t_{j(t)}^+}^{t} \frac{d}{ds}\left(h(t-s)\right) \int_0^s v(\theta)d\theta ds. 
	\label{eq:StrongIISSLTI_integralDifferentComponents2temp}
\end{multline}
In view of the structure of $v$, we have $ \int_{t_{i}^-}^{t_{i}^+} \frac{d}{ds}\left(h(t-s)\right) \int_0^s v(\theta)d\theta ds = 0$, for all $i \in \{1,2,\dots, j(t)\}$ and thus, by adding $\sum_{i = 1}^{j(t)}\int_{t_{i}^-}^{t_{i}^+} \frac{d}{ds}\left(h(t-s)\right) \int_0^s v(\theta)d\theta ds = 0$, we have that
\begin{multline}
	\sum_{i = 1}^{j(t)} \int_{t_{i-1}^+}^{t_{i}^-} \frac{d}{ds}\left(h(t-s)\right) \int_{0}^s v(\theta)d\theta ds \\
	+ \int_{t_{j(t)}^+}^{t} \frac{d}{ds}\left(h(t-s)\right) 
	\int_0^s v(\theta)d\theta ds \\
	= \int_{0}^{t} \frac{d}{ds}\left(h(t-s)\right) \int_0^s v(\theta)d\theta ds.
\end{multline}
Consequently, \eqref{eq:StrongIISSLTI_integralDifferentComponents2temp} becomes
\begin{multline}
\int_{0}^{t} h(t-s) v(s) ds \\
=h(0)\int_0^{t} v(\theta)d\theta - \int_{0}^{t} \frac{d}{ds}\left(h(t-s)\right) \int_0^s v(\theta)d\theta ds \\
\quad + \sum_{i = 1}^{j(t)} \bigg(h(t-t_{i}) \int_{0}^{t_{i}^-} v(\theta)d\theta 
- h(t-t_{i-1}) \times\\
\quad \int_{0}^{t_{i-1}^+} v(\theta)d\theta 
 + h(t-t_{i})\Theta_{i} \bigg) 
 - h(t-t_{j(t)}) 
\int_{0}^{t_{j(t)}^+} v(\theta)d\theta \\
=  h(0)\int_0^{t}v(\theta)d\theta - \int_{0}^{t} \frac{d}{ds}\left(h(t-s)\right) \int_0^s v(\theta)d\theta ds \\
\quad - h(t-t_{0}) \int_{0}^{t_{0}^+} v(\theta)d\theta 
+ \sum_{i = 1}^{j(t)} h(t-t_{i})\times \\
\bigg(\int_{0}^{t_{i}^-}v(\theta)d\theta 
- \int_{0}^{t_{i}^+} v(\theta)d\theta  + \Theta_{i} \bigg)  \\
=  h(0)\int_0^{t} v(\theta)d\theta - \int_{0}^{t} \frac{d}{ds}\left(h(t-s)\right) \int_0^s v(\theta)d\theta ds\\
\quad - h(t-t_{0}) \int_{0}^{t_{0}^+} v(\theta)d\theta \\ + \sum_{i = 1}^{j(t)} h(t-t_{i})
\bigg(-\int_{t_{i}^-}^{t_{i}^+} v(\theta)d\theta 
+ \Theta_{i}\bigg).  
\label{eq:StrongIISSLTI_integralDifferentComponents2}
\end{multline}
Since $v \in \mathcal{S}$ and $t_{0} = 0$ is not a spiking time, $h(t-t_{0}) \int_{0}^{t_{0}^+} v(\theta)d\theta = 0$. Thus, from \eqref{eq:StrongIISSLTI_integralDifferentComponents2}, using the definitions of $v$ and of the Dirac delta function, we have  
\begin{multline}
		\int_{0}^{t} h(t-s) v(s) ds \\
		=  h(0)\int_0^{t} v(\theta)d\theta - \int_{0}^{t} \frac{d}{ds}\left(h(t-s)\right) \int_0^s v(\theta)d\theta ds \\
		 +\sum_{i = 1}^{j(t)} h(t-t_{i})\bigg(
		- \int_{t_{i}^-}^{t_{i}^+} v_{1}(\theta)d\theta\\ 
	-  \int_{t_{i}^-}^{t_{i}^+}\Theta_{i} \delta(\theta-t_{i}) d\theta 
		+ \Theta_{i}\bigg) \\
		=  h(0)\int_0^{t} v(\theta)d\theta - \int_{0}^{t} \frac{d}{ds}\left(h(t-s)\right) \int_0^s v(\theta)d\theta ds \\
	 +\sum_{i = 1}^{j(t)} h(t-t_{i})\bigg(
		- \int_{t_{i}^-}^{t_{i}^+} v_{1}(\theta)d\theta +  \Theta_{i}  - \Theta_{i} \bigg) \\
		=  h(0)\int_0^{t} v(\theta)d\theta - \int_{0}^{t} \frac{d}{ds}\left(h(t-s)\right) \int_0^s v(\theta)d\theta ds\\ 
		 +\sum_{i = 1}^{j(t)} h(t-t_{i})\bigg(
		- \int_{t_{i}^-}^{t_{i}^+} v_{1}(\theta)d\theta\bigg).  
	\label{eq:StrongIISSLTI_integralDifferentComponents3}
\end{multline}
Since  $v_{1}$ is integrable, \eqref{eq:StrongIISSLTI_integralContinuous} holds, and thus \eqref{eq:StrongIISSLTI_integralDifferentComponents3} becomes
\begin{multline}
	\int_{0}^{t} h(t-s) v(s) ds 
	=  h(0)\int_0^{t} v(\theta)d\theta \\
	- \int_{0}^{t} \frac{d}{ds}\left(h(t-s)\right) \int_0^s v(\theta)d\theta ds. 
\label{eq:StrongIISSLTI_integralDifferentComponents4}
\end{multline}
Consequently, from \eqref{eq:StrongIISSLTI_stateEquation_old}, recalling $h(t-s) = e^{F(t-s)}G \in \R^{n_z\times n_v}$, for all $t,s \in \R$. and using \eqref{eq:StrongIISSLTI_integralDifferentComponents4} we have, for all $t \geq 0$, 
\begin{equation}
\begin{aligned}
	z(t)&= e^{F t}z_0 + \int_{0}^{t} e^{F (t-s)} G v(s) ds\\
	&= e^{F t}z_0 + \bigg( h(0)\int_0^{t} v(\theta)d\theta - \int_{0}^{t} \frac{d}{ds}\left(h(t-s)\right)\times \\
    &\quad
	 \int_0^s v(\theta)d\theta ds \bigg). 
\end{aligned}
\label{eq:StrongIISSLTI_stateEquation2}
\end{equation}
Using the definition of $h(t-s)$, we have $h(0) =G$ and $\frac{d}{ds} \left(h(t-s)\right) = - F e^{F (t-s)} G \in \R^{n_z\times n_v}$ for all  $t,s \in \R_{\geq 0}$ with $s \leq t$, and thus, from \eqref{eq:StrongIISSLTI_stateEquation2}, we obtain, for all $t \geq 0$, 
\begin{multline}
	z(t) =
	e^{F t}z_0 
	+  G\int_0^{t} v(\theta)d\theta 
	+ \int_{0}^{t} Fe^{ F(t-s)} G \int_0^s v(\theta)d\theta ds. 
\label{eq:StrongIISSLTI_stateEquation3}
\end{multline}

From \eqref{eq:StrongIISSLTI_stateEquation3} and using the triangular inequality we have, for all $t \in \R_{\geq 0}$, 
\begin{multline}
	|z(t)| \\
	= \left|e^{F t}z_0 +  G\int_0^{t}v(\theta)d\theta + \int_{0}^{t} Fe^{F (t-s)} G  \int_0^s v(\theta)d\theta ds\right|  \\
	\leq \left|e^{F t}z_0\right| +  \left|G\int_0^{t} v(\theta)d\theta\right| + \left|\int_{0}^{t} Fe^{F (t-s)} G \int_0^s v(\theta)d\theta ds\right|  \\
	\leq \left|e^{F t}z_0\right| +  \norm{G}\norm{v}_\star + \int_{0}^{t} \norm{F e^{F (t-s)} G}\left|\int_0^s v(\theta)d\theta\right|  ds \\
	\leq \left|e^{F t}z_0\right| +  \norm{G}\norm{v}_\star + \int_{0}^{t} \norm{F e^{F (t-s)} G} \norm{v}_\star ds \\
	\leq \left|e^{F t}z_0\right| +  \norm{G}\norm{v}_\star + \int_{0}^{\infty} \norm{F e^{F s} G} ds \norm{v}_\star. 
\label{eq:xxx}
\end{multline}
Since $F$ is Hurwitz, we have that there exist $c, \lambda \in \R_{> 0}$ such that $\left|e^{F t}z_0\right|\leq ce^{-\lambda t}|z_0|$. Thus, \eqref{eq:xxx} becomes		
\begin{multline}
	|z(t)| 
	\leq ce^{-\lambda t}|z_0| +  \left(\norm{G} + \int_{0}^{\infty} \norm{F e^{Fs}G} ds\right)  \norm{v}_\star\\
	 = \beta(|z_0|, t) + \gamma\norm{v}_\star, 
\label{eq:StrongIISSLTI_stateEquation4}
\end{multline}
where $\beta(s, t) = ce^{-\lambda t}s$, $s \in \R_{\geq0}$ and thus $\beta \in \KL$ and $\gamma = \norm{G} + \int_{0}^{\infty} \norm{F e^{Fs}G} ds \in \R_{\geq 0}$. Note that $ \int_{0}^{\infty} \norm{F e^{Fs}G}ds$ is finite because $F$ is Hurwitz. Equation \eqref{eq:StrongIISSLTI_stateEquation4} implies that system \eqref{eq:generalNonlinear} with $f(z,v) = Fz + Gv$, with $F \in \R^{n_z\times n_z}$ Hurwitz, is iSISS with respect $v$ according to Definition \ref{Def:StrongIntegralISSDefinition}. 

\subsection{Proof of Theorem \ref{Thm:universalApproximationPiecewise_option2}}\label{Appendix_ProofThPiecewise}
Let all conditions of Theorem \ref{Thm:universalApproximationPiecewise_option2} hold and let $y \in \mathcal{L}_{\R}$. 
Using $\overline{K}_i = K_i$ for $i \in \{0,1\}$ and $\overline{K}_i$ = $K_i-K_{i-1}$ for $i \in \{2,3,\dots, N\}$ and defining $\tilde{y}_i:= y -b_i \in \R$, $i \in \{1,2,\dots, N\}$, the function $g$ in \eqref{eq:PWLfunction} can be written as 
\begin{multline}
		g(y) 
		= c - K_0\max{\{0,-(y-b_1)\}} + K_1 \max{\{0, y-b_1\}} \\
		+ \sum_{i = 2}^{N}(K_{i} - K_{i-1})\max{\{0,y-b_i\}} \\
		= c - \overline{K}_0\max{\{0,-\tilde{y}_1\}}
		+ \sum_{i = 1}^{N}\overline{K}_{i}\max{\{0,\tilde{y}_i\}}.  
\label{eq:PWLfunction_proof}
\end{multline}
Using $\mathcal{G}_0 = -\sign{\overline{K}_0}$, $\mathcal{G}_i = \sign{\overline{K}_i}$, $i \in \{1,2,\dots, N\}$, noting that $\overline{K}_i$ is defined such that the conditions in Theorem~\ref{Thm:BoundedEmulationErrorIntegral} hold for each $i \in \{0,1, \dots, N\}$, and using the fact that $\tilde{y}_i$ is the signal input to neuron $\tilde{\xi}_i$ from \eqref{eq:neuronDynamicsPiecewiseSeparate}, by applying \eqref{eq:boundEmulationErrorTheorem_EachNeuron} for each neuron $\tilde{\xi}_i$, $i \in \{0,1,\dots, N\}$, we have,
for $i = 0$, 
\begin{equation}
	\int_{0}^{t} \max{\{0,-K_0 \tilde{y}_1(s)\}} - \mathcal{G}_0u_0(s) ds \in  [-\alpha_0,\alpha_0], 
	\label{eq:boundEmulationErrorTheorem_EachNeuron_proofPiecewise0}
\end{equation}
and, for each $i \in \{1,2\dots, N\}$, 
\begin{equation}
	\int_{0}^{t} \max{\{0,K_i \tilde{y}_i(s)\}} - \mathcal{G}_iu_i(s) ds \in  [-\alpha_i,\alpha_i], 
	\label{eq:boundEmulationErrorTheorem_EachNeuron_proofPiecewise}
\end{equation}
with $u_i$, $i \in \{0,1,\dots, N\}$, defined in \eqref{eq:SpikingInput_EachNeuronPiecewise}. 
On the other hand, neurons $\tilde{\xi}_{N+1}$ and $\tilde{\xi}_{N+2}$ have input $c$ and $-c$, respectively. Since $c \in \R$ is a constant, only one of them generates spikes. Moreover, using $\alpha_{N+1} = \Delta_{N+1} \in \R_{> 0}$ and $\alpha_{N+2} = \Delta_{N+2} \in \R_{> 0}$, we have $K_{N+1} = K_{N+2} = 1 = \frac{\alpha_{N+1}}{\Delta_{N+1}} = \frac{\alpha_{N+2}}{\Delta_{N+2}}$. Recalling $\mathcal{G}_{N+1} = 1$,  $\mathcal{G}_{N+2} = -1$, using \eqref{eq:boundEmulationErrorTheorem_EachNeuron} we have
\begin{equation}
	\begin{aligned}
		\int_{0}^{t} \max{\{0,c\}} - u_{N+1}(s) ds \in  [-\alpha_{N+1},\alpha_{N+1}], \\
		\int_{0}^{t} \max{\{0,-c\}} - u_{N+2}(s) ds \in  [-\alpha_{N+2},\alpha_{N+2}], 
	\end{aligned}
	\label{eq:boundEmulationErrorTheorem_EachNeuron_proofPiecewise_NeuronC}
\end{equation}
which implies
\begin{equation}
	\begin{aligned}
		\int_{0}^{t} c - (u_{N+1} - &u_{N+2})(s) ds \\
        &\in [-\alpha_{N+1}- \alpha_{N+2},\alpha_{N+1}+ \alpha_{N+2}].  
	\end{aligned}
	\label{eq:boundEmulationErrorTheorem_EachNeuron_proofPiecewise_NeuronCtotal}
\end{equation}
From \eqref{eq:SpikingInput_EachNeuronPiecewise}, \eqref{eq:SpikingInputSumPiecewise}, \eqref{eq:PWLfunction_proof}, \eqref{eq:boundEmulationErrorTheorem_EachNeuron_proofPiecewise0}, \eqref{eq:boundEmulationErrorTheorem_EachNeuron_proofPiecewise} and \eqref{eq:boundEmulationErrorTheorem_EachNeuron_proofPiecewise_NeuronCtotal}, we obtain \eqref{eq:universalApproximationPiecewiseTheoremEquation_option2}. 
This concludes the proof. 

\subsection{Normed space of  spiking signals}\label{Appendix_NormStar}
The iSISS property we have defined in Section \ref{StrongIntegralISS} uses the space ${\cal S}^{n_v}$ of spiking signals  equipped with $\norm{v}_\star=\sup_{t \in \R_{\geq 0}}|\int_{0}^{t}v(s) ds|$, see Definition~\ref{Def:SpikingSignalDefinition}. In this subsection, we prove that this results in a normed space. 

\begin{lem} The space 
$\mathcal{S}^{n_v}$ equipped with $\norm{\cdot}_\star$ as in Definition~\ref{Def:SpikingSignalDefinition} is a normed space, i.e., $\mathcal{S}^{n_v}$ is a real vector space and, for any $v \in \mathcal{S}^{n_v}$, $\norm{v}_\star=\sup_{t \in \R_{\geq 0}} |\int_{0}^{t}v(s) ds| \in \R_{\geq 0}$ satisfies the  axioms of a norm: 
\begin{enumerate}[label=(\roman*)]
\item \emph{(Positive definiteness)} $\norm{v}_\star >0$  for all $v \neq 0$ and $\norm{v}_\star = 0$ if and only if $v = 0$.\footnote{Note that a spiking signal $v \in \mathcal{S}^{n_v}$, i.e.,  $v(t) = v_1(t) + v_2(t)$ with $v_1(t) \in \mathcal{L}_{\R^{n_v}}$ and $v_2(t) = \sum_{i = 1}^{\infty} \Theta_i \delta(t-t_i)$ as in Definition \ref{Def:SpikingSignalDefinition}, is zero ($v=0$), when $v_1$ is zero in Lebesgue sense and thus $v_1(t) =0$ for almost all $t \in \R_{\geq0}$ and 
$\Theta_i = 0$, $i \in \{1,2,\dots\}$, 
and thus $v_2(t) =0$ for all $t \in \R_{\geq0}$.
}
\item \emph{(Homogeneity)}  $\norm{av}_\star = |a|\norm{v}_\star$ for any $a \in \R$. 
\item \emph{(Triangle inequality)} For any $v, w \in \mathcal{S}^{n_v}$, $\norm{v + w}_\star \leq \norm{v}_\star + \norm{w}_\star$. 
\end{enumerate}
\label{Lem:NormProof}
\end{lem}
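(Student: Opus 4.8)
The plan is to verify first that $\mathcal{S}^{n_v}$ is a real vector space and then to check the three norm axioms in turn. For the vector space structure, closure under scalar multiplication is immediate: if $v = v_1 + v_2 \in \mathcal{S}^{n_v}$ with $v_1 \in \mathcal{L}_{\R^{n_v}}$ and $v_2(t) = \sum_{i=1}^\infty \Theta_i \delta(t - t_i)$, then for $a \in \R$ we have $av = av_1 + av_2$, where $av_1 \in \mathcal{L}_{\R^{n_v}}$ and $av_2(t) = \sum_{i=1}^\infty (a\Theta_i)\delta(t-t_i)$ is again a sequence of Dirac pulses on the same spiking times, with $\norm{av}_\star = |a|\norm{v}_\star < +\infty$. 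For closure under addition, given $v, w \in \mathcal{S}^{n_v}$ with continuous parts $v_1, w_1$ and spiking parts supported on strictly increasing sequences $\{t_i\}$ and $\{s_j\}$, I would take $v_1 + w_1 \in \mathcal{L}_{\R^{n_v}}$ and merge the two spiking-time sequences: since each bounded interval $[0,T]$ contains only finitely many $t_i$ and $s_j$, the union sorted in increasing order is a strictly increasing sequence (coincident times merged and their pulse weights added, dropping any time whose combined weight vanishes) that still tends to $+\infty$. Hence $v+w$ again has the form required by Definition~\ref{Def:SpikingSignalDefinition}, and finiteness of $\norm{v+w}_\star$ follows from the triangle inequality below.

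For the norm axioms, homogeneity (ii) and the triangle inequality (iii) are direct consequences of the linearity of $t \mapsto \int_0^t v(s)\,ds$ (defined as in Definition~\ref{Def:SpikingSignalDefinition}) together with the corresponding properties of the Euclidean norm and the supremum. Concretely, $\norm{av}_\star = \sup_{t}\left|a\int_0^t v(s)\,ds\right| = |a|\sup_t \left|\int_0^t v(s)\,ds\right| = |a|\norm{v}_\star$, while $\norm{v+w}_\star = \sup_t \left|\int_0^t v(s)\,ds + \int_0^t w(s)\,ds\right| \le \sup_t \left(\left|\int_0^t v(s)\,ds\right| + \left|\int_0^t w(s)\,ds\right|\right) \le \norm{v}_\star + \norm{w}_\star$, where the last step uses subadditivity of the supremum. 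I would present these in one or two short lines.

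The only nontrivial axiom is positive definiteness (i). Here $\norm{v}_\star \ge 0$ is clear, and the content is the implication $\norm{v}_\star = 0 \Rightarrow v = 0$ (which also yields $\norm{v}_\star > 0$ for $v \neq 0$). The key step is to introduce the accumulated-integral function $V(t) := \int_0^t v(s)\,ds = \int_0^t v_1(s)\,ds + \sum_{i:\, t_i \le t} \Theta_i$ and to exploit that its two contributions live on ``different scales'': $t \mapsto \int_0^t v_1(s)\,ds$ is continuous, whereas the spiky part contributes a jump of exactly $\Theta_i$ at each isolated time $t_i$. If $\norm{v}_\star = 0$, then $V \equiv 0$ on $\Rlo$; evaluating the jump at any spiking time gives $\Theta_i = V(t_i^+) - V(t_i^-) = 0 - 0 = 0$ for every $i \in \Zp$, so $v_2 \equiv 0$. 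Substituting back leaves $\int_0^t v_1(s)\,ds = 0$ for all $t$, which forces $v_1 = 0$ almost everywhere, hence $v = 0$.

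I expect the main obstacle to be making rigorous the identification of the jump of $V$ at $t_i$ with $\Theta_i$: one must argue that the continuous part $\int_0^t v_1$ contributes no jump and that, because the spiking times are isolated with $t_i \to +\infty$, no other pulse accumulates at $t_i$, so that the discrete part $\sum_{i:\,t_i\le t}\Theta_i$ jumps by precisely $\Theta_i$ as $t$ crosses $t_i$. Once this jump decomposition is established, reading off $\Theta_i = 0$ and then deducing $v_1 = 0$ a.e. from the vanishing of all its integrals $\int_0^t v_1$ is standard, and the remaining axioms follow by the elementary computations sketched above.
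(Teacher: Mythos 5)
Your proposal is correct and follows essentially the same route as the paper's proof: for positive definiteness, both arguments isolate the Dirac weight $\Theta_i$ by using that the integral of the locally essentially bounded part $v_1$ is continuous across the isolated spiking time $t_i$ (you read this off as the jump of the accumulated integral $V$, the paper phrases it as a contradiction on a shrinking interval around $t_i$), and both then conclude $v_1=0$ a.e.\ from the vanishing of all its integrals via Lebesgue differentiation, with (ii) and (iii) handled by the same elementary computations. Your explicit verification of closure of $\mathcal{S}^{n_v}$ under addition (merging the two spiking-time sequences) is a small but welcome addition that the paper leaves implicit.
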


\noindent\textit{Proof of item (i).} 
Clearly, we have $\norm{v}_\star \geq 0$ for all $v \in \mathcal{S}^{n_v}$.
Moreover, 
when $v = 0$ we have $\sup_{t \in \R_{\geq 0}} |\int_{0}^{t}v(s) ds| = 0$. 
On the other hand, $\sup_{t \in \R_{\geq 0}} |\int_{0}^{t}v(s) ds| = 0$ implies  $|\int_{0}^{t}v(s) ds| = |\int_{0}^{t}(v_1(s) + v_2(s)) ds| = 0$ for all $t \in \R_{\geq 0}$ and thus  $\int_{\tau}^{t}v(s) ds = 0$ for all $0 \leq \tau < t$. Since $v_2$ consists a sequence of Dirac pulses, its integral on a sufficiently small interval $[\tau,t]$ containing a single Dirac time $t_i$ (with $\Theta_i \neq 0$) has to be zero as otherwise $\int_{\tau}^{t}v_1(s) ds = -\int_{\tau}^{t}v_2(s) ds$ cannot hold because $v_1$ locally essentially bounded  function. Indeed, the right-hand side of this inequality converges to zero when $t-\tau$ approaches zero, while the right-hand side would take the value $\Theta_i \neq 0$, which would be a contradiction and thus $\Theta_i=0$. Hence, $v_2 = 0$ and thus also $\int_{\tau}^{t}v_1(s) ds = 0$ for all $0\leq\tau < t$.  Using the Lebesgue differentiation theorem \cite[Theorem 3.21]{folland1999real}, we have, for almost $t\in \R_{\geq0}$, $v_1(t)= \lim_{\varepsilon \downarrow 0} \int_{t-\varepsilon}^{t+\varepsilon}v_1(s) ds = v_1(t)$. As we know that  the intervals in the limit are all zero, we get that  $v_1(t) = 0$ for almost all $t\in \R_{\geq0}$. Therefore, $\norm{v}_\star = \sup_{t \in \R_{\geq 0}} |\int_{0}^{t}v(s) ds| = 0$ if and only if $v= 0$.
%
%
%
 This concludes the proof of item \textit{(i)}. 
 
 Items \textit{(ii)} and \textit{(iii)} follow straightforwardly.



\section*{References}
\bibliography{bibliography}

\end{document}